\documentclass[11p,reqno]{amsart}
\usepackage[T1]{fontenc}
\usepackage{graphicx}
\usepackage{amssymb,amsthm,amsmath,mathrsfs,bm,braket,marginnote}
\usepackage{enumerate}
\usepackage{appendix}
\usepackage[colorlinks=true, pdfstartview=FitV, linkcolor=blue, citecolor=blue, urlcolor=blue]{hyperref}
\usepackage{multirow}
\usepackage{pgf}
\usepackage{pgfplots}
\usepackage{tikz}
\usetikzlibrary{arrows,calc}
\usepackage{verbatim}
\usetikzlibrary{decorations.pathreplacing,decorations.pathmorphing}
\usepackage[numbers,sort&compress]{natbib}
\usepackage{dsfont}
\usepackage{float}

\numberwithin{equation}{section}
\newtheorem{theorem}{Theorem}[section]

\newtheorem{definition}[theorem]{Definition}
\newtheorem{remark}[theorem]{Remark}
\newtheorem{coro}[theorem]{Corollary}
\newtheorem{example}[theorem]{Example}

\newtheorem{proposition}[theorem]{Proposition}

\newcommand{\bphi}{\boldsymbol \phi}
\newcommand{\bpsi}{\boldsymbol \psi}

\begin{document}

\title[A twisted Marchenko lemma and applications]{A twisted Marchenko lemma and applications}

\subjclass[2020]{37K10, 15A15, 35Q51}
\date{}

\dedicatory{}

\keywords{Marchenko lemma; quasi-Wronskians; non-commutative two-component KP hierarchy}

\author{Jun-Yang Zhang}
\address{School of Mathematical Sciences, Zhejiang University of Technology, Hangzhou, 310023, Zhejiang, China}
\email{zhangjunyang142@gmail.com}

\author{Shi-Hao Li}
\address{Department of Mathematics, Sichuan University, Chengdu, 610064, PR China}
\email{shihao.li@scu.edu.cn}

\author{Shou-Feng Shen}
\address{School of Mathematical Sciences, Zhejiang University of Technology, Hangzhou, 310023, Zhejiang, China}
\email{mathssf@zjut.edu.cn}

\begin{abstract}
We develop a twisted Marchenko lemma with applications into non-commutative two-component integrable systems. Starting from a twisted differential equation, we introduce the corresponding two-component Wronski matrix $W_{n,m}(\boldsymbol\phi,\boldsymbol\psi)$. Under the Marchenko lemma conditions, the logarithmic derivative $(\partial W_{n,m})W_{n,m}^{-1}$ satisfies nonlinear equations, whose two-component projections yield a non-commutative two-component KP equation, while a single-component reduction recovers the non-commutative Burgers equation, KP equation and DS equation. A direct approach based on two-component quasi-Wronskians is developed, from which a higher-order non-commutative two-component KP equation is generated with matrix-valued solutions. Reductions of the seed functions of the non-commutative two-component KP hierarchy yield the one-component counterpart, while a stepwise reduction of those for the two-component KP equation gives the non-commutative DS and NLS equations.
\end{abstract}

\maketitle

\section{Introduction}

One of the central ideas in the theory of integrable systems is that nonlinear equations often possess an underlying linear structure. Although the equations of primary interest are nonlinear, their integrability is typically revealed through auxiliary linear problems. The solvability of these equations reflects the principle that, rather than treating a nonlinear evolution equation directly, one embeds it into a larger linear framework and reconstructs the nonlinear dependent variables from suitable linear spectrum and dispersion relations.

Marchenko developed an algebraic formulation of this linear-to-nonlinear mechanism. Instead of starting from a specific spectral problem, one considers an associative algebra equipped with mutually commuting derivations and an invertible element $\Gamma$ satisfying suitable linear differential relations. The nonlinear dependent variables are then introduced via logarithmic derivatives such as $(\partial \Gamma)\Gamma^{-1}$. The essence of the Marchenko lemma is that the linear relations imposed on $\Gamma$ induce nonlinear differential relations for these logarithmic derivatives. In this way, the construction isolates, in a purely algebraic form, the mechanism by which nonlinear integrable dynamics emerges from compatible linear flows \cite{marchenko88}. The following simple example illustrates this phenomenon. 
\begin{example}\label{example}
Suppose that an invertible function $\Gamma$ satisfies a linear heat equation $\Gamma_t=\Gamma_{xx}$. Then by introducing $u=\Gamma_x\Gamma^{-1}$, the compatibility of the defining relations implies a nonlinear Burgers equation $u_t=u_{xx}+2u_xu$. 
\end{example}
An important feature of Marchenko's construction is that its basic mechanism is not intrinsically tied to commutativity. Indeed, the passage from linear differential relations for an invertible element $\Gamma$ to nonlinear relations for its logarithmic derivatives is based essentially on associativity, the Leibniz rule for derivations, and the existence of inverses. It is therefore natural to ask whether the same linear-to-nonlinear mechanism persists when the dependent variables take values in a non-commutative algebra. The algebraic nature of Marchenko's argument makes it particularly well suited to non-abelian generalizations.
This perspective was systematically developed by Etingof, Gelfand and Retakh \cite{etingof98}. They studied integrable equations for functions taking values in an arbitrary unital associative algebra and combined Marchenko's lemmas with the theory of quasi-determinants. Explicit multi-soliton solutions of several non-abelian integrable systems were obtained, including periodic Toda field equations, Langmuir-type equations and Schrödinger-type equations. Their work shows that the Marchenko construction remains valid in a genuinely non-commutative setting.

Within this non-commutative framework, Wronski matrices provide a natural finite-dimensional realization of the Marchenko construction. Starting from a collection of algebra-valued functions $\{f_1,\cdots,f_N\}$ satisfying prescribed linear dispersion relations, one forms the Wronski matrix
\begin{align}\label{opwronski}
\Gamma=W(f_1,\cdots,f_N):=\begin{pmatrix}f_1 & \cdots & f_N\\\partial f_1 & \cdots & \partial f_N\\\vdots & & \vdots\\\partial^{N-1}f_1 & \cdots & \partial^{N-1}f_N\end{pmatrix},
\end{align}
where $\partial$ denotes a derivation of the underlying algebra.
In this way, the nonlinear dependent variables are encoded in only a small portion of the matrix $(\partial \Gamma)\Gamma^{-1}$. In the commutative setting, these variables reduce to ratios of Wronskian determinants, whereas in the non-commutative setting they admit natural quasi-determinant expressions. Thus, Wronski matrices and quasi-determinants provide the structural and algebraic ingredients needed to convert the abstract non-commutative Marchenko mechanism into explicit solution formulae.
\subsection{Main results}
The purpose of the present paper is to develop a twisted Marchenko lemma adapted to the non-commutative two-component KP hierarchy. 
The main difficulty in extending the usual non-commutative Marchenko framework to the two-component case is that the two families of KP flows are coupled through a common dressing structure and cannot be treated as two independent copies of the one-component theory.
To address this, we begin with a special twisted differential equation 
\begin{equation*}
\begin{aligned}
\partial^n\phi
=\sum_{i=1}^{n}\xi_i\partial^{i-1}\phi
+\sum_{j=1}^{m}\xi_{n+j}\partial^{j-1}\psi,\quad 
\partial^m\psi
=\sum_{i=1}^{n}\eta_i\partial^{i-1}\phi
+\sum_{j=1}^{m}\eta_{n+j}\partial^{j-1}\psi,
\end{aligned}
\end{equation*}
whose coefficients $\{\xi_k,\eta_k\}_{k=1}^{n+m}$ belong to an associative but not necessarily commutative algebra. The solution of this twisted differential equation yields the generalized Wronski matrix $W_{n,m}(\bphi,\bpsi)$ defined in Definition \ref{Def:bwronski}.
The invertibility of this Wronski matrix is crucial for the construction of the twisted Marchenko lemma. We show that, under the assumption that the two-component Wronski matrix is invertible, the coefficients $\{\xi_k,\eta_k\}_{k=1}^{n+m}$ can be expressed in terms of quasi-determinants (See Theorem \ref{coefficients}).
We reformulate this theorem in a matrix form and demonstrate that $(\partial W_{n,m}(\bphi,\bpsi))W_{n,m}(\bphi,\bpsi)^{-1}$ depends only on a small number of dependent variables, thereby enabling the identification of the corresponding non-commutative integrable systems.

In Section \ref{sec:3}, we establish a twisted Marchenko lemma. We assume
the existence of two families of derivatives $\partial_1^{(1)}, \partial_2^{(1)}$ and $\partial_1^{(2)}, \partial_2^{(2)}$, such that the two-component Wronski matrix $W_{n,m}(\bphi,\bpsi)$ satisfies the following linear relations
\begin{align*}
&\partial_{1}^{(1)}W_{n,m}-P_1\partial W_{n,m}=W_{n,m} A_1,\quad \partial_{2}^{(1)}W_{n,m}-P_1\partial^2W_{n,m}=W_{n,m} C_1,\\
&\partial_{1}^{(2)}W_{n,m}-P_2\partial W_{n,m}=W_{n,m} A_2,\quad \partial_{2}^{(2)}W_{n,m}-P_2\partial^2W_{n,m}=W_{n,m} C_2
\end{align*}
for constant matrices $A_\ell,C_\ell$ and $P_\ell$ for $\ell=1,2$. In Theorem \ref{2c_marchenko_lemma}, we prove that the matrix $\mathcal{Q}=(\partial W_{n,m}(\bphi,\bpsi))W_{n,m}(\bphi,\bpsi)^{-1}$ satisfies a nonlinear equation
\begin{align*}
&\partial_2^{(\ell)}\mathcal{Q}-P_\ell(\partial^2\mathcal{Q}+2(\partial \mathcal{Q})\mathcal{Q})-[P_\ell,\mathcal{Q}](\partial \mathcal{Q}+\mathcal{Q}^2)=0,\\
&\partial_1^{(\ell)}\mathcal{Q}-P_\ell \partial\mathcal{Q}-[P_\ell,\mathcal{Q}]\mathcal{Q}=0.
\end{align*}
 An equivalent coupled system is obtained in Theorem \ref{2c_marchenko_projection} by projecting the equation for $\mathcal{Q}$ onto its two distinct components. We specialize the nonlinear equations to the non-commutative two-component KP equation (Section \ref{sec:3.2}) and the DS equation (Section \ref{sec:3.3})  by imposing different linear relations on seed functions $\{\phi_j\}_{j=1}^{n+m}$ and $\{\psi_j\}_{j=1}^{n+m}$ in $W_{n,m}(\bphi,\bpsi)$.

A direct approach to the two-component system is developed in Section \ref{sec:4}.  We verify that the two-component Wronskian quasi-determinant indeed provides a solution to the non-commutative two-component KP equation. 
We show that the derivatives of generalized quasi-Wronskian determinants admit  a closed form expression, and that linear combinations of the derivatives of different orders  lead us to an equation involving only a few potential functions. In this way, we recover the non-commutative two-component KP equation via the direct method, and also uncover some higher-order non-commutative two-component KP systems. 

In Section~\ref{sec:5}, we formulate reduction techniques within the direct method.  We first specialize the seed functions of the non-commutative two-component KP hierarchy to those of its one-component counterpart.  We then perform a stepwise reduction on the seed functions for the non-commutative two-component KP equation, yielding the non-commutative DS and NLS equations alongside their associated quasi-determinant solutions.

\section{Twisted differential equation and two-component Wronski matrix}
Let $\mathcal A$ be a unital algebra and $\partial:\mathcal A \to \mathcal A$ be a derivation of $\mathcal A$. Given $2(n+m)$ coefficients $\xi_k,\eta_k\in\mathcal A$, we consider a coupled system of two differential equations of orders $n$ and $m$ respectively, such that
\begin{equation}
\label{coupled_system}
\begin{aligned}
\partial^n\phi
=\sum_{i=1}^{n}\xi_i\partial^{i-1}\phi
+\sum_{j=1}^{m}\xi_{n+j}\partial^{j-1}\psi,\qquad
\partial^m\psi
=\sum_{i=1}^{n}\eta_i\partial^{i-1}\phi
+\sum_{j=1}^{m}\eta_{n+j}\partial^{j-1}\psi. 
\end{aligned}
\end{equation}
We call $(\phi,\psi)$ a solution pair if both equations in~\eqref{coupled_system} are satisfied simultaneously. To uniquely determine the coefficients $\xi_k$ and $\eta_k$ from  solution pairs of~\eqref{coupled_system}, we introduce $n+m$ solution pairs $\{(\phi_j,\psi_j)\}_{j=1}^{n+m}$. 
\begin{definition}\label{Def:bwronski}
The two-component Wronski matrix of type $(n,m)$ associated with the solution pairs 
\begin{align*}
\boldsymbol\phi=(\phi_1,\ldots,\phi_{n+m}),\quad \boldsymbol\psi=(\psi_1,\ldots,\psi_{n+m})
\end{align*}
is defined by
\begin{align} 
\label{2c_wronski}
W_{n,m}(\boldsymbol\phi,\boldsymbol\psi)
=\begin{pmatrix}
\phi_1 & \phi_2 & \cdots & \phi_{n+m}\\
\partial\phi_1 & \partial\phi_2 & \cdots & \partial\phi_{n+m}\\
\vdots & \vdots & & \vdots\\
\partial^{n-1}\phi_1 & \partial^{n-1}\phi_2 & \cdots & \partial^{n-1}\phi_{n+m}\\
\psi_1 & \psi_2 & \cdots & \psi_{n+m}\\
\partial\psi_1 & \partial\psi_2 & \cdots & \partial\psi_{n+m}\\
\vdots & \vdots & & \vdots\\
\partial^{m-1}\psi_1 & \partial^{m-1}\psi_2 & \cdots & \partial^{m-1}\psi_{n+m}
\end{pmatrix}.
\end{align}
\end{definition}
Furthermore, we have the following non-degeneracy condition.
\begin{definition}
The family of $n+m$ solution pairs $\{\phi_j,\psi_j\}_{j=1}^{n+m}$ is called nondegenerate if its corresponding two-component Wronski matrix $W_{n,m}(\boldsymbol\phi,\boldsymbol\psi)$ is invertible. 
\end{definition}


\begin{theorem} \label{coefficients}
Suppose that $\{(\phi_j,\psi_j)\}_{j=1}^{n+m}$ is a nondegenerate family of solution pairs of~\eqref{coupled_system}, then the coefficients in~\eqref{coupled_system} are given by
\begin{align}
\label{xi_k_eta_k}
\xi_k
=-\begin{vmatrix}
W_{n,m}(\boldsymbol\phi,\boldsymbol\psi) & \boldsymbol e_k \\
\partial^n \boldsymbol\phi & \boxed{0}
\end{vmatrix}, 
\qquad
\eta_k
=-\begin{vmatrix}
W_{n,m}(\boldsymbol\phi,\boldsymbol\psi) & \boldsymbol e_k \\
\partial^m\boldsymbol\psi& \boxed{0}
\end{vmatrix}, \qquad k = 1, \ldots, n+m,
\end{align}
where $\boldsymbol e_k$ denotes the $(n+m)$-dimensional unit column vector with unit in the $k^{\mathrm{th}}$ row and 0 elsewhere.
\end{theorem}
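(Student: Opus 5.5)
Since each pair $(\phi_j,\psi_j)$ satisfies \eqref{coupled_system} with the \emph{same} coefficients, I will first collect the $n+m$ scalar identities into one row-vector identity. The first equation of \eqref{coupled_system}, written for every $j=1,\ldots,n+m$, reads
\begin{align*}
\partial^n\boldsymbol\phi=\boldsymbol\xi\, W_{n,m}(\boldsymbol\phi,\boldsymbol\psi),\qquad \boldsymbol\xi:=(\xi_1,\ldots,\xi_n,\xi_{n+1},\ldots,\xi_{n+m}).
\end{align*}
Here $\partial^n\boldsymbol\phi=(\partial^n\phi_1,\ldots,\partial^n\phi_{n+m})$. The identity holds because the rows of $W_{n,m}$ are ordered exactly as $\phi,\partial\phi,\ldots,\partial^{n-1}\phi,\psi,\ldots,\partial^{m-1}\psi$. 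This ordering matches the indexing of $\xi_i$ for $i\le n$ and of $\xi_{n+j}$ for $j\le m$. Note that the coefficients act from the left, which is the correct side for a non-commutative algebra. In the same way, the second equation gives $\partial^m\boldsymbol\psi=\boldsymbol\eta\,W_{n,m}(\boldsymbol\phi,\boldsymbol\psi)$.

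Next I will use nondegeneracy. Since $W_{n,m}$ is invertible, we get $\boldsymbol\xi=\partial^n\boldsymbol\phi\,W_{n,m}^{-1}$, and hence
\begin{align*}
\xi_k=\partial^n\boldsymbol\phi\,W_{n,m}^{-1}\boldsymbol e_k.
\end{align*}
Similarly, $\eta_k=\partial^m\boldsymbol\psi\,W_{n,m}^{-1}\boldsymbol e_k$.

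The final step is to recognize the quasi-determinant. For a block matrix $\begin{pmatrix}A&B\\C&d\end{pmatrix}$ with invertible $A$, the quasi-determinant expanded at the boxed entry $d$ is $d-CA^{-1}B$. Take $A=W_{n,m}$, $B=\boldsymbol e_k$, $C=\partial^n\boldsymbol\phi$ and $d=0$. This gives
\begin{align*}
\begin{vmatrix}
W_{n,m} & \boldsymbol e_k\\
\partial^n\boldsymbol\phi & \boxed{0}
\end{vmatrix}=-\partial^n\boldsymbol\phi\,W_{n,m}^{-1}\boldsymbol e_k=-\xi_k,
\end{align*}
which is the first formula in \eqref{xi_k_eta_k}. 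The formula for $\eta_k$ follows identically.

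I expect no serious obstacle. The only points needing care are the left-versus-right placement of the coefficients in the non-commutative setting and the matching of index conventions between the rows of $W_{n,m}$ and the coefficients $\xi_k,\eta_k$.
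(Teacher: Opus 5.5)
Your proposal is correct and follows the paper's proof essentially verbatim. Like the paper, you assemble the $n+m$ solution pairs into $\partial^n\boldsymbol\phi=\boldsymbol\xi\,W_{n,m}$ and $\partial^m\boldsymbol\psi=\boldsymbol\eta\,W_{n,m}$, invert $W_{n,m}$ to get $\xi_k=(\partial^n\boldsymbol\phi)W_{n,m}^{-1}\boldsymbol e_k$, and read off the quasi-determinant $0-CA^{-1}B$.
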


\begin{proof}
Since $\{(\phi_j,\psi_j)\}_{j=1}^{n+m}$ are $n+m$ solution pairs of~\eqref{coupled_system}, we have the linear system
\[
\partial^n \boldsymbol\phi
=\boldsymbol\xi W_{n,m}(\boldsymbol\phi,\boldsymbol\psi),\qquad
\partial^m \boldsymbol\psi
=\boldsymbol\eta W_{n,m}(\boldsymbol\phi,\boldsymbol\psi).
\]
where $\boldsymbol\xi=\left(\xi_1,\ldots,\xi_{n+m}\right)$ and $\boldsymbol\eta=\left(\eta_1,\ldots,\eta_{n+m}\right)$.
By nondegeneracy, $W_{n,m}(\bphi,\bpsi)$ is invertible, and thus
\[
\xi_k
=(\partial^n \boldsymbol\phi) W_{n,m}(\boldsymbol\phi,\boldsymbol\psi)^{-1}\boldsymbol e_k,
\qquad
\eta_k
=(\partial^m\boldsymbol\psi) W_{n,m}(\boldsymbol\phi,\boldsymbol\psi)^{-1}\boldsymbol e_k.
\]
The formulas~\eqref{xi_k_eta_k} then follow directly from the definition of a quasi-determinant.
\end{proof}
The corresponding matrix form of the linear system is stated as follows.
\begin{coro}
\label{wronski_logarithmic_derivative} 
Set
\begin{align}\label{xww}
X=(\partial W_{n,m}(\bphi,\bpsi))W_{n,m}(\bphi,\bpsi)^{-1}=(x_{p,q})_{p,q=1}^{n+m}.
\end{align}
Then, $
x_{p,q}=\delta_{p+1,q}$ for
$p\in\{1,\ldots,n-1\}\cup\{n+1,\ldots,n+m-1\}$ and 
 $q=1,\ldots,n+m$, where $\delta_{p,q}$ denotes the Kronecker symbol, and
\[
x_{n,q}=\xi_q,\qquad
x_{n+m,q}=\eta_q
\]
for $q=1,\ldots,n+m$, 
with $\xi_q$, $\eta_q$ being the coefficients of coupled system~\eqref{coupled_system}, whose quasi-determinant expressions are given in Theorem \ref{coefficients}.
\end{coro}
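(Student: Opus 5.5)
The plan is to compute $\partial W_{n,m}(\bphi,\bpsi)$ row by row and write each row as a left linear combination of the rows of $W_{n,m}(\bphi,\bpsi)$. Once we have an identity $\partial W_{n,m}=XW_{n,m}$ with an explicit matrix $X$, invertibility of $W_{n,m}$ (the nondegeneracy hypothesis) forces $X=(\partial W_{n,m})W_{n,m}^{-1}$, so this $X$ is the matrix in \eqref{xww}.

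Label the rows of $W_{n,m}$ by $p=1,\ldots,n+m$. For $p\le n$, row $p$ is $\partial^{p-1}\bphi$; for $p=n+r$ with $1\le r\le m$, it is $\partial^{r-1}\bpsi$. Applying $\partial$ entrywise shifts each row. There are four cases.

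\emph{Rows $p\in\{1,\ldots,n-1\}$.} Row $p$ of $\partial W_{n,m}$ is $\partial^{p}\bphi$, which is row $p+1$ of $W_{n,m}$. So $x_{p,q}=\delta_{p+1,q}$.

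\emph{Rows $p\in\{n+1,\ldots,n+m-1\}$.} Likewise, $\partial^{r}\bpsi$ is row $n+r+1=p+1$, so again $x_{p,q}=\delta_{p+1,q}$.

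\emph{Row $p=n$.} We get $\partial^n\bphi$. Each pair $(\phi_j,\psi_j)$ solves \eqref{coupled_system}, and the coefficients multiply from the left, so the equations can be stacked over $j$. This gives the row-vector identity
\[
\partial^n\bphi=\sum_{i=1}^n\xi_i\,\partial^{i-1}\bphi+\sum_{j=1}^m\xi_{n+j}\,\partial^{j-1}\bpsi=\boldsymbol\xi W_{n,m}(\bphi,\bpsi),
\]
as already used in the proof of Theorem \ref{coefficients}. Hence $x_{n,q}=\xi_q$.

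\emph{Row $p=n+m$.} In the same way, $\partial^m\bpsi=\boldsymbol\eta W_{n,m}$, so $x_{n+m,q}=\eta_q$.

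Putting the four cases together gives $\partial W_{n,m}=XW_{n,m}$ with $X$ of the claimed form: shift rows, plus the two rows $\boldsymbol\xi$ and $\boldsymbol\eta$. Right-multiplying by $W_{n,m}^{-1}$ yields \eqref{xww}, and Theorem \ref{coefficients} supplies the quasi-determinant expressions for $\xi_q$ and $\eta_q$.

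There is essentially no obstacle here. The only point needing care is non-commutativity: the coefficients $\xi_k,\eta_k$ sit on the left in \eqref{coupled_system}, which is exactly what lets the equations be written as a left product with $W_{n,m}$. We must also check that the uniqueness step uses a two-sided inverse, which is what nondegeneracy provides.
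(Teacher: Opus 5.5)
Your proposal is correct and follows essentially the same route as the paper. Both arguments rewrite \eqref{xww} as $XW_{n,m}=\partial W_{n,m}$ and read it row by row: the shift rows come from differentiating the Wronski rows, and rows $n$ and $n+m$ come from the coupled system \eqref{coupled_system}. Both then use the invertibility of $W_{n,m}$ to identify the entries of $X$.
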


\begin{proof}
Indeed, equation \eqref{xww} is equivalent to 
$$XW_{n,m}(\bphi,\bpsi)=\partial W_{n,m}(\bphi,\bpsi).$$ Thus, the $k$-th row of $X$ must satisfy
\begin{align*}
&\partial^k \phi=\sum_{i=1}^n x_{k,i}\partial^{i-1}\phi+\sum_{j=1}^m x_{k,n+j}\partial^{j-1}\psi,\qquad\,\, 1\leq k\leq n,\\
&\partial^{k-n} \psi=\sum_{i=1}^n x_{k,i}\partial^{i-1}\phi+\sum_{j=1}^m x_{k,n+j}\partial^{j-1}\psi,\quad n+1\leq k\leq n+m.
\end{align*}
The non-degeneracy of $W_{n,m}(\bphi,\bpsi)$ immediately yields the desired result. 
\end{proof}

\section{A twisted Marchenko lemma}\label{sec:3}
In this section, we take $\mathcal A$ as a unital algebra, and let
$\partial$ and $\partial_k^{(\ell)}$ $\left(k,\ell=1,2\right)$ be mutually commuting derivations of $\mathcal A$. Furthermore, we assume that elements  $A_\ell,C_\ell,P_\ell\in\mathcal A$ $\left(\ell=1,2\right)$ lie in the common kernel of all these derivations. In other words, they are constants annihilated by each of the derivations. Firstly, we show the Marchenko lemma that if $\Gamma$ satisfies coupled linear systems, then $(\partial\Gamma)\Gamma^{-1}$ satisfies a nonlinear equation.

\begin{theorem}
\label{2c_marchenko_lemma}
Let $\Gamma\in\mathcal{A}$ be an invertible element satisfying coupled linear equations
\begin{subequations}
\begin{align}
\partial_2^{(\ell)}\Gamma-P_\ell\partial^2\Gamma
&=\Gamma C_\ell,
\label{marchenko_second}\\
\partial_1^{(\ell)}\Gamma-P_\ell\partial\Gamma
&=\Gamma A_\ell.\label{marchenko_first}
\end{align}
\end{subequations}
By setting $\mathcal Q
=\left(\partial\Gamma\right)\Gamma^{-1}$, then $\mathcal Q$ satisfies the following nonlinear equations
\begin{subequations}
\begin{align}
&\partial_2^{(\ell)}\mathcal{Q}-P_\ell(\partial^2\mathcal{Q}+2(\partial \mathcal{Q})\mathcal{Q})-[P_\ell,\mathcal{Q}](\partial \mathcal{Q}+\mathcal{Q}^2)=0,\label{ncq2}\\
&\partial_1^{(\ell)}\mathcal{Q}-P_\ell \partial\mathcal{Q}-[P_\ell,\mathcal{Q}]\mathcal{Q}=0.\label{ncq1}
\end{align}
\end{subequations}
\end{theorem}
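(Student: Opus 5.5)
Both identities come from a direct computation that uses only the Leibniz rule, associativity, the constancy of $A_\ell,C_\ell,P_\ell$, and the commutativity of the derivations. We never commute $\Gamma$ with $P_\ell$; we only move $P_\ell$ past $\mathcal Q$, and that is exactly where the commutator terms arise. Throughout we use
\[
\partial\Gamma=\mathcal Q\Gamma,\qquad \partial(\Gamma^{-1})=-\Gamma^{-1}\mathcal Q,\qquad \partial^2\Gamma=(\partial\mathcal Q+\mathcal Q^2)\Gamma,
\]
and, for $D=\partial_k^{(\ell)}$, the formula $D\mathcal Q=(\partial D\Gamma)\Gamma^{-1}-\mathcal Q(D\Gamma)\Gamma^{-1}$, which follows from $D(\Gamma^{-1})=-\Gamma^{-1}(D\Gamma)\Gamma^{-1}$ and $[D,\partial]=0$.

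\emph{First flow \eqref{ncq1}.} From \eqref{marchenko_first}, $D\Gamma=P_\ell\mathcal Q\Gamma+\Gamma A_\ell$ with $D=\partial_1^{(\ell)}$, so $(D\Gamma)\Gamma^{-1}=P_\ell\mathcal Q+\Gamma A_\ell\Gamma^{-1}$. Differentiating \eqref{marchenko_first} in $\partial$ gives
\[
\partial D\Gamma=P_\ell(\partial\mathcal Q+\mathcal Q^2)\Gamma+\mathcal Q\Gamma A_\ell .
\]
Substituting into the formula for $D\mathcal Q$, the terms $\mathcal Q\Gamma A_\ell\Gamma^{-1}$ cancel exactly, and we obtain
\[
D\mathcal Q=P_\ell\partial\mathcal Q+P_\ell\mathcal Q^2-\mathcal QP_\ell\mathcal Q=P_\ell\partial\mathcal Q+[P_\ell,\mathcal Q]\mathcal Q,
\]
which is \eqref{ncq1}.

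\emph{Second flow \eqref{ncq2}.} Now take $D=\partial_2^{(\ell)}$. From \eqref{marchenko_second}, $(D\Gamma)\Gamma^{-1}=P_\ell(\partial\mathcal Q+\mathcal Q^2)+\Gamma C_\ell\Gamma^{-1}$. We also need $\partial^3\Gamma=\partial\big((\partial\mathcal Q+\mathcal Q^2)\Gamma\big)$, which expands to
\[
\partial^3\Gamma=\big(\partial^2\mathcal Q+(\partial\mathcal Q)\mathcal Q+\mathcal Q\partial\mathcal Q+\partial\mathcal Q\,\mathcal Q+\mathcal Q^3\big)\Gamma=\big(\partial^2\mathcal Q+2(\partial\mathcal Q)\mathcal Q+\mathcal Q\partial\mathcal Q+\mathcal Q^3\big)\Gamma .
\]
Differentiating \eqref{marchenko_second} gives $\partial D\Gamma=P_\ell\partial^3\Gamma+\mathcal Q\Gamma C_\ell$. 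Again the $\Gamma C_\ell\Gamma^{-1}$ contributions cancel, leaving
\[
D\mathcal Q=P_\ell\big(\partial^2\mathcal Q+2(\partial\mathcal Q)\mathcal Q+\mathcal Q\partial\mathcal Q+\mathcal Q^3\big)-\mathcal QP_\ell(\partial\mathcal Q+\mathcal Q^2).
\]
Grouping $P_\ell\mathcal Q(\partial\mathcal Q+\mathcal Q^2)-\mathcal QP_\ell(\partial\mathcal Q+\mathcal Q^2)=[P_\ell,\mathcal Q](\partial\mathcal Q+\mathcal Q^2)$ yields \eqref{ncq2}.

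The only delicate point is bookkeeping in this non-commutative setting. First, the cubic derivative must be expanded with the factors kept in order, so that $2(\partial\mathcal Q)\mathcal Q$ appears rather than a symmetrized term. Second, the leftover term $P_\ell\mathcal Q(\partial\mathcal Q+\mathcal Q^2)$ has to be recognized as pairing with $-\mathcal QP_\ell(\partial\mathcal Q+\mathcal Q^2)$ to form the commutator. No invertibility beyond that of $\Gamma$ is needed.
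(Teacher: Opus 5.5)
Your proof is correct and follows essentially the same route as the paper: both write $\partial_k^{(\ell)}\mathcal Q=(\partial\,\partial_k^{(\ell)}\Gamma)\Gamma^{-1}-\mathcal Q(\partial_k^{(\ell)}\Gamma)\Gamma^{-1}$ and rewrite $\partial^2\Gamma$ and $\partial^3\Gamma$ in terms of $\mathcal Q$. The only difference is cosmetic: the paper first removes $C_\ell$ by differentiating $\Gamma^{-1}(\partial_2^{(\ell)}\Gamma-P_\ell\partial^2\Gamma)=C_\ell$ to obtain \eqref{second_d}, whereas you substitute directly and let the $\mathcal Q\Gamma C_\ell\Gamma^{-1}$ terms cancel, which is the same identity right-multiplied by $\Gamma^{-1}$.
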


\begin{proof}
From equation~\eqref{marchenko_second} and the fact that
$C_\ell$
lies in the kernel of $\partial$, we obtain
\begin{align*}
0=\partial (\Gamma^{-1}\Gamma C_\ell)=\Gamma^{-1}\left(
-\partial \Gamma \Gamma^{-1}\partial_2^{(\ell)}\Gamma+\partial\partial_2^{(\ell)}\Gamma+\partial\Gamma\Gamma^{-1}P_\ell \partial^2\Gamma-P_\ell\partial^3\Gamma
\right),
\end{align*}
which yields
\begin{align}
-\mathcal Q\partial_2^{(\ell)}\Gamma+\partial^{(\ell)}_2\partial\Gamma+\mathcal{Q}P_\ell\partial^2\Gamma-P_\ell \partial^3\Gamma=0
\label{second_d}
\end{align}
from the invertibility of $\Gamma$. On the other hand,
by using the definition $\mathcal{Q}=(\partial \Gamma)\Gamma^{-1}$, we have
\begin{align*}
&\partial_2^{(\ell)}\mathcal Q
=(\partial_2^{(\ell)}\partial\Gamma)\Gamma^{-1}-\mathcal Q(\partial_2^{(\ell)}\Gamma)\Gamma^{-1},\\
&\partial \mathcal{Q}=(\partial^2\Gamma)\Gamma^{-1}-\mathcal{Q}^2,\\
&\partial^2\mathcal Q
=(\partial^3\Gamma)\Gamma^{-1}-2(\partial\mathcal Q)\mathcal Q-\mathcal Q\partial\mathcal Q-\mathcal Q^3.
\end{align*}
Substituting these into~\eqref{second_d} gives the equation \eqref{ncq2}.
A similar computation starting from~\eqref{marchenko_first} yields \eqref{ncq1}.
\end{proof}
In the next theorem, we show that if $P_1$ and $P_2$ are specially chosen as orthogonal idempotent operators, then the equation for $\mathcal{Q}$ can be decomposed as a coupled system.
\begin{theorem}[Twisted Marchenko lemma]
\label{2c_marchenko_projection}
Assume that $P_1, P_2$ are orthogonal idempotents summing to the identity, i.e.
\begin{align}\label{opera_p}
P_1+P_2=1,\quad P_\ell^2=P_\ell,\quad P_1P_2=P_2P_1=0,
\end{align}
and the projections of $\mathcal{Q}=(\partial\Gamma)\Gamma^{-1}$ are defined by $\mathcal{G}_\ell=P_\ell\mathcal{Q}$ ($\ell=1,2$). Then $\mathcal{G}_\ell$ ($\ell=1,2$) satisfy
\begin{subequations}
\begin{align}
(\partial_1^{(\ell)})^2\mathcal G_\ell-\partial_2^{(\ell)}\mathcal G_\ell
&=-2\bigl(\partial_1^{(\ell)}\mathcal G_\ell\bigr)\mathcal G_\ell,
\label{marchenko_2}\\
(\partial_1^{(\ell)})^2\mathcal G_{\bar\ell}
+\partial_2^{(\ell)}\mathcal G_{\bar\ell}
&=-2\mathcal G_{\bar\ell}\partial_1^{(\ell)}\mathcal G_\ell,
\label{marchenko_3}\\
\partial_1^{(\ell)}\mathcal G_{\bar\ell}
&=-\mathcal G_{\bar\ell}\mathcal G_\ell,
\label{marchenko_1}
\end{align}
\end{subequations}
where $\ell+\bar{\ell}\equiv 0 \pmod{3}$.
\end{theorem}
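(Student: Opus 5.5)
The plan is to derive all three identities by projecting the conclusions \eqref{ncq1} and \eqref{ncq2} of Theorem \ref{2c_marchenko_lemma} with $P_\ell$ and $P_{\bar\ell}$. Before doing so, I will record how these projections act on the commutator terms. Using \eqref{opera_p} and $\mathcal Q=\mathcal G_1+\mathcal G_2$:
\begin{align*}
P_\ell[P_\ell,\mathcal Q]=\mathcal G_\ell P_{\bar\ell},\qquad P_{\bar\ell}[P_\ell,\mathcal Q]=-\mathcal G_{\bar\ell}P_\ell .
\end{align*}
It follows that $P_\ell[P_\ell,\mathcal Q]\,Y=\mathcal G_\ell\,P_{\bar\ell}Y$ and $P_{\bar\ell}[P_\ell,\mathcal Q]\,Y=-\mathcal G_{\bar\ell}\,P_\ell Y$ for any $Y$. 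Since $P_\ell$ is a constant, it commutes with all the derivations, so $P_\ell\,\partial\mathcal Q=\partial\mathcal G_\ell$, and similarly for the other derivations.

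\textbf{First-order projections.} Applying $P_{\bar\ell}$ to \eqref{ncq1} kills the term $P_{\bar\ell}P_\ell\partial\mathcal Q$ and gives $\partial_1^{(\ell)}\mathcal G_{\bar\ell}+\mathcal G_{\bar\ell}\mathcal G_\ell=0$, which is \eqref{marchenko_1}. Applying $P_\ell$ to \eqref{ncq1} gives
\begin{align*}
\partial\mathcal G_\ell=\partial_1^{(\ell)}\mathcal G_\ell-\mathcal G_\ell\mathcal G_{\bar\ell}.
\end{align*}
Adding \eqref{ncq1} over $\ell=1,2$ and using $P_1+P_2=1$ (so the commutator term $[1,\mathcal Q]$ vanishes) gives $\partial\mathcal Q=(\partial_1^{(1)}+\partial_1^{(2)})\mathcal Q$. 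Combining this with \eqref{marchenko_1} for the index $\bar\ell$, namely $\partial_1^{(\bar\ell)}\mathcal G_\ell=-\mathcal G_\ell\mathcal G_{\bar\ell}$, is consistent with the formula above. Together these give explicit substitution rules:
\begin{align*}
\partial\mathcal G_\ell=\partial_1^{(\ell)}\mathcal G_\ell-\mathcal G_\ell\mathcal G_{\bar\ell},\qquad \partial\mathcal G_{\bar\ell}=\partial_1^{(\ell)}\mathcal G_{\bar\ell}+\partial_1^{(\bar\ell)}\mathcal G_{\bar\ell}=-\mathcal G_{\bar\ell}\mathcal G_\ell+\partial_1^{(\bar\ell)}\mathcal G_{\bar\ell}.
\end{align*}

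\textbf{Second-order projections.} I will project \eqref{ncq2} with $P_\ell$ and with $P_{\bar\ell}$. The $P_\ell$-projection is
\begin{align*}
\partial_2^{(\ell)}\mathcal G_\ell-\partial^2\mathcal G_\ell-2(\partial\mathcal G_\ell)\mathcal Q-\mathcal G_\ell\bigl(\partial\mathcal G_{\bar\ell}+\mathcal G_{\bar\ell}\mathcal Q\bigr)=0 .
\end{align*}
The $P_{\bar\ell}$-projection is
\begin{align*}
\partial_2^{(\ell)}\mathcal G_{\bar\ell}+\mathcal G_{\bar\ell}\bigl(\partial\mathcal G_\ell+\mathcal G_\ell\mathcal Q\bigr)=0 .
\end{align*}
For the second of these, I substitute the rule for $\partial\mathcal G_\ell$; the $\mathcal G_{\bar\ell}\mathcal G_\ell\mathcal G_{\bar\ell}$ terms cancel and I am left with $\partial_2^{(\ell)}\mathcal G_{\bar\ell}=-\mathcal G_{\bar\ell}\partial_1^{(\ell)}\mathcal G_\ell-\mathcal G_{\bar\ell}\mathcal G_\ell^2$. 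Next I compute $(\partial_1^{(\ell)})^2\mathcal G_{\bar\ell}$ by differentiating \eqref{marchenko_1} and using \eqref{marchenko_1} again:
\begin{align*}
(\partial_1^{(\ell)})^2\mathcal G_{\bar\ell}=\mathcal G_{\bar\ell}\mathcal G_\ell^2-\mathcal G_{\bar\ell}\partial_1^{(\ell)}\mathcal G_\ell .
\end{align*}
Adding the two expressions gives \eqref{marchenko_3}.

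\textbf{Main obstacle.} The hardest step will be \eqref{marchenko_2}, because the $P_\ell$-projection involves $\partial^2\mathcal G_\ell$. My first attempt is to write $\partial^2\mathcal G_\ell=\partial\bigl(\partial_1^{(\ell)}\mathcal G_\ell-\mathcal G_\ell\mathcal G_{\bar\ell}\bigr)$. I then push $\partial$ inside using commutativity of the derivations, applying the substitution rules for $\partial\mathcal G_\ell$ and $\partial\mathcal G_{\bar\ell}$ and the rule $\partial_1^{(\ell)}\mathcal G_{\bar\ell}=-\mathcal G_{\bar\ell}\mathcal G_\ell$. After expanding $\mathcal Q=\mathcal G_\ell+\mathcal G_{\bar\ell}$, the terms containing $\partial_1^{(\bar\ell)}\mathcal G_{\bar\ell}$ and the cubic terms should cancel in pairs. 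What should remain is $\partial_2^{(\ell)}\mathcal G_\ell-(\partial_1^{(\ell)})^2\mathcal G_\ell-2(\partial_1^{(\ell)}\mathcal G_\ell)\mathcal G_\ell=0$, which is \eqref{marchenko_2}. The risk lies in the non-commutative ordering of these cubic terms. I will track them in the basis of words in $\mathcal G_\ell,\mathcal G_{\bar\ell}$, and if some word fails to cancel I will re-examine the substitution for $\partial\mathcal G_{\bar\ell}$.
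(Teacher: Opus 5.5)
Your proposal is correct and follows essentially the paper's route: you project \eqref{ncq1} and \eqref{ncq2} with $P_\ell$ and $P_{\bar\ell}$, obtain \eqref{marchenko_1} and \eqref{marchenko_3} exactly as the paper does, and for \eqref{marchenko_2} you differentiate $\partial\mathcal G_\ell=\partial_1^{(\ell)}\mathcal G_\ell-\mathcal G_\ell\mathcal G_{\bar\ell}$ once more and compare with the $P_\ell$-projection of \eqref{ncq2}; the paper expands $(\partial_1^{(\ell)})^2\mathcal G_\ell$ instead of $\partial^2\mathcal G_\ell$, which is the same identity read in the other direction. The cancellation you anticipate does go through, leaving $\partial_2^{(\ell)}\mathcal G_\ell=(\partial_1^{(\ell)})^2\mathcal G_\ell+2(\partial_1^{(\ell)}\mathcal G_\ell)\mathcal G_\ell$, and your substitution for $\partial\mathcal G_{\bar\ell}$ via the $\partial_1^{(\bar\ell)}$-flow is unnecessary, since the two $\mathcal G_\ell\,\partial\mathcal G_{\bar\ell}$ terms cancel each other directly.
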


\begin{proof}
Recall that $P_\ell$ ($\ell=1,2$) satisfy \eqref{opera_p} and $\mathcal{Q}=\mathcal{G}_1+\mathcal{G}_2$, then by multiplying~\eqref{ncq1} on the left by $P_\ell$ and by $P_{\bar\ell}$ respectively, we obtain
\begin{subequations}
\begin{align}
\partial_1^{(\ell)}\mathcal G_\ell
&=\partial\mathcal G_\ell
+\mathcal G_\ell\mathcal G_{\bar\ell},\label{first_same}\\
\partial_1^{(\ell)}\mathcal G_{\bar\ell}
&=-\mathcal G_{\bar\ell}\mathcal G_\ell,\label{first_cross}
\end{align}
\end{subequations}
where the second equation is precisely \eqref{marchenko_1}.
On the other hand, from~\eqref{ncq2} we get
\begin{align}
\label{Q_second_p}
\partial_2^{(\ell)}\mathcal G_\ell
=\partial^2\mathcal G_\ell
+2(\partial\mathcal G_\ell)(\mathcal G_\ell+\mathcal G_{\bar\ell})
+\mathcal G_\ell\left[\partial\mathcal G_{\bar\ell}+\mathcal G_{\bar\ell}(\mathcal G_\ell+\mathcal G_{\bar\ell})\right].
\end{align}
Applying the derivation $\partial_1^{(\ell)}$ to~\eqref{first_same} gives
\[
(\partial_1^{(\ell)})^2\mathcal G_\ell
=\partial\partial_1^{(\ell)}\mathcal G_\ell
+\bigl(\partial_1^{(\ell)}\mathcal G_\ell\bigr)\mathcal G_{\bar\ell}
+\mathcal G_\ell\partial_1^{(\ell)}\mathcal G_{\bar\ell}.
\]
Substituting \eqref{first_same} and \eqref{first_cross} into the right-hand side, we find
\begin{equation}
(\partial_1^{(\ell)})^2\mathcal G_\ell
=\partial^2\mathcal G_\ell
+2(\partial\mathcal G_\ell)\mathcal G_{\bar\ell}
+\mathcal G_\ell\bigl[\partial\mathcal G_{\bar\ell}
+\mathcal G_{\bar\ell}(\mathcal G_{\bar\ell}-\mathcal G_\ell)\bigr],
\label{first_same_d}
\end{equation}
where we use the fact that $\partial$ and $\partial_1^{(\ell)}$ commute.

Moreover, from~\eqref{first_same} we also have
\begin{align}
2\bigl(\partial_1^{(\ell)}\mathcal G_\ell\bigr)\mathcal G_\ell
=2(\partial\mathcal G_\ell)\mathcal G_\ell
+2\mathcal G_\ell\mathcal G_{\bar\ell}\mathcal G_\ell.
\label{first_same_product}
\end{align}
Adding \eqref{first_same_d} and \eqref{first_same_product}, we obtain
\[
(\partial_1^{(\ell)})^2\mathcal G_\ell
+2\bigl(\partial_1^{(\ell)}\mathcal G_\ell\bigr)\mathcal G_\ell
=\partial^2\mathcal G_\ell
+2(\partial\mathcal G_\ell)(\mathcal G_\ell+\mathcal G_{\bar\ell})
+\mathcal G_\ell\left[\partial\mathcal G_{\bar\ell}+\mathcal G_{\bar\ell}(\mathcal G_\ell+\mathcal G_{\bar\ell})\right].
\]
Comparing this with \eqref{Q_second_p}, we conclude that
\[
\partial_2^{(\ell)}\mathcal G_\ell
=(\partial_1^{(\ell)})^2\mathcal G_\ell
+2\bigl(\partial_1^{(\ell)}\mathcal G_\ell\bigr)\mathcal G_\ell,
\]
which is precisely \eqref{marchenko_2}.

Finally, multiplying \eqref{ncq2} on the left by $P_{\bar\ell}$ gives
\[
\partial_2^{(\ell)}\mathcal G_{\bar\ell}
=-\mathcal G_{\bar\ell}\left(\partial\mathcal G_\ell+\mathcal G_\ell(\mathcal G_\ell+\mathcal G_{\bar\ell})\right).
\]
Replacing $\partial\mathcal G_\ell$ by $\partial_1^{(\ell)}\mathcal G_\ell-\mathcal G_\ell\mathcal G_{\bar\ell}$, we obtain
\begin{align}
\label{second_cross_reduced}
\partial_2^{(\ell)}\mathcal G_{\bar\ell}
=-\mathcal G_{\bar\ell}\left(\partial_1^{(\ell)}\mathcal G_\ell+\mathcal G_\ell^2\right).
\end{align}
Applying the derivation $\partial_1^{(\ell)}$ to~\eqref{first_cross} yields
\begin{equation}
\label{first_cross_d}
(\partial_1^{(\ell)})^2\mathcal G_{\bar\ell}
=\mathcal G_{\bar\ell}\mathcal G_\ell^2
-\mathcal G_{\bar\ell}
(\partial_1^{(\ell)}\mathcal G_\ell).
\end{equation}
Adding~\eqref{second_cross_reduced} and~\eqref{first_cross_d}, we find
\[
(\partial_1^{(\ell)})^2\mathcal G_{\bar\ell}
+\partial_2^{(\ell)}\mathcal G_{\bar\ell}
=-2\mathcal G_{\bar\ell}\partial_1^{(\ell)}\mathcal G_\ell,
\]
which proves \eqref{marchenko_3}.
\end{proof}

The twisted Marchenko lemma admits two different types of reductions, which are the single-component reduction and the involution reduction. For the
single-component reduction, we may simply take $\mathcal G_2=0$, from which we have the following result.
\begin{coro}
If $\Gamma=W_n(\bphi)$ satisfies 
\begin{align}\label{gamma_burgers}
 \partial_1\Gamma-\partial\Gamma=\Gamma A,\quad \partial_2 \Gamma-\partial^2\Gamma=\Gamma C,
\end{align}
where $A$ and $C$ are constant matrices, then $\mathcal{G}=(\partial \Gamma)\Gamma^{-1}$ satisfies the non-commutative Burgers equation
\begin{align*}
\partial_1^2\mathcal{G}-\partial_2 \mathcal{G}=-2(\partial_1\mathcal{G})\mathcal{G}.
\end{align*}
\end{coro}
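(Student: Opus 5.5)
This corollary is the single-component specialization of the twisted Marchenko lemma. The natural route is to apply Theorem \ref{2c_marchenko_lemma}, or equivalently Theorem \ref{2c_marchenko_projection}, with a degenerate choice of projectors. Concretely, I take $m=0$, so that $W_{n,m}(\bphi,\bpsi)$ reduces to the usual Wronski matrix $\Gamma=W_n(\bphi)$. I then choose $P_1=1$ and $P_2=0$, with $\partial_1^{(1)}=\partial_1$, $\partial_2^{(1)}=\partial_2$, $A_1=A$ and $C_1=C$. These $P_\ell$ trivially satisfy \eqref{opera_p}. For $\ell=2$, the relations \eqref{marchenko_second}--\eqref{marchenko_first} are satisfied by taking $\partial_k^{(2)}=0$ and $A_2=C_2=0$; alternatively, only the $\ell=1$ part of Theorem \ref{2c_marchenko_lemma} is needed at all.

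The cleanest argument avoids the projection theorem and works from Theorem \ref{2c_marchenko_lemma} directly. With $P_1=1$, the commutator $[P_1,\mathcal Q]$ vanishes, so \eqref{ncq1} and \eqref{ncq2} reduce to
\[
\partial_1\mathcal G=\partial\mathcal G,\qquad \partial_2\mathcal G=\partial^2\mathcal G+2(\partial\mathcal G)\mathcal G,
\]
where $\mathcal G=\mathcal Q=P_1\mathcal Q$ and $\mathcal G_2=P_2\mathcal Q=0$. Since $\partial_1$ and $\partial$ commute, I get $\partial_1^2\mathcal G=\partial_1\partial\mathcal G=\partial\partial_1\mathcal G=\partial^2\mathcal G$. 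I can also replace $\partial\mathcal G$ by $\partial_1\mathcal G$ in the nonlinear term. Substituting both into the second identity gives $\partial_2\mathcal G=\partial_1^2\mathcal G+2(\partial_1\mathcal G)\mathcal G$, which is the claimed Burgers equation. This is also exactly \eqref{marchenko_2} with $\mathcal G_{\bar\ell}=0$, consistent with the single-component reduction described before the corollary.

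The only point requiring care is the hypotheses, not the computation. First, I must ensure $\Gamma=W_n(\bphi)$ is invertible, which I take as the nondegeneracy assumption implicit in the statement. Second, I must check that $A$, $C$ and the identity $P_1$ are constants annihilated by all the derivations. Here the ambient algebra is the matrix algebra over $\mathcal A$, with derivations acting entrywise, which is what makes the constant matrices $A,C$ admissible in Theorem \ref{2c_marchenko_lemma}. Once this setup is fixed, the proof is immediate.
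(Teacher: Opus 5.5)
Your proof is correct and is essentially the paper's argument. The paper obtains the corollary as the single-component reduction $\mathcal G_2=0$ (that is, $m=0$, $P_1=1$, $P_2=0$) of the twisted Marchenko lemma, where \eqref{marchenko_2} with $\ell=1$ is exactly the Burgers equation. Working directly from \eqref{ncq1}--\eqref{ncq2} with $[P_1,\mathcal Q]=0$, as you do, just collapses the projection argument of Theorem~\ref{2c_marchenko_projection} to the two-line computation it reduces to in this case.
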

This result is consistent with \cite{hamanaka032} if we take $\partial_2=\partial_t$ and $\partial_1=\partial_x$. Moreover, it is also a generalization of the Example \ref{example}.

For the involution reduction, we can consider the difference between $\ell=1$ and $\ell=2$ in \eqref{marchenko_second} and \eqref{marchenko_first}, from which we get
\begin{align}\label{gamma_ds}
\partial_1\Gamma+B\partial\Gamma=\Gamma A,\quad \partial_2\Gamma+B\partial^2\Gamma=\Gamma C,
\end{align}
where 
$
\partial_k=\partial_k^{(2)}-\partial_k^{(1)}$ ($k=1,\,2$), $A=A_2-A_1$, $C=C_2-C_1
$,
and 
$
B=P_1-P_2.
$
We call this reduction as involution reduction since we have 
\begin{align*}
B^2=(P_1-P_2)^2=P_1+P_2=1.
\end{align*}
Different from the Wronski matrix given in \eqref{gamma_burgers}, the matrix in \eqref{gamma_ds} is still a two-component Wronski matrix $W_{n,m}(\bphi,\bpsi)$. In this way, the nonlinear variable $\mathcal{Q}=(\partial \Gamma)\Gamma^{-1}$ satisfies the system
\begin{subequations} \label{EGR_Q}
\begin{align}
&\partial_2\mathcal Q+B\left(\partial^2\mathcal Q+2(\partial\mathcal Q)\mathcal Q\right)+\bigl[B,\mathcal Q\bigr]\left(\partial\mathcal Q+\mathcal Q^2\right)=0,\label{EGR_Q2} \\
&\partial_1\mathcal Q+B\partial\mathcal Q+\bigl[B,\mathcal Q\bigr]\mathcal Q=0.\label{EGR_Q1}
\end{align}
\end{subequations}
Recall that $\mathcal{Q}=\mathcal{G}_1+\mathcal{G}_2$ and $B\mathcal{Q}=\mathcal{G}_1-\mathcal{G}_2$. Therefore, we can set the variable transformations
\begin{align*}
\mathcal F_1:=\mathcal Q+B\mathcal Q=2\mathcal G_1,
\qquad
\mathcal F_2:=\mathcal Q-B\mathcal Q=2\mathcal G_2,
\end{align*}
which satisfy the following proposition.
\begin{proposition} \label{involution_reduction}
The nonlinear variables $\mathcal F_1$ and $\mathcal F_2$ satisfy the following equations
\begin{subequations}
\begin{align}
&2\partial_2\mathcal F_1
+(\partial_1^2+\partial^2)\mathcal F_1
-2(\partial_1\mathcal F_1)\mathcal F_1
+2\mathcal F_1(\partial_1\mathcal F_2)
-2\mathcal F_1\mathcal F_2\mathcal F_1=0,\label{ds_f1}\\
&2\partial_2\mathcal F_2
-(\partial_1^2+\partial^2)\mathcal F_2
-2(\partial_1\mathcal F_2)\mathcal F_2
+2\mathcal F_2(\partial_1\mathcal F_1)
+2\mathcal F_2\mathcal F_1\mathcal F_2=0,\label{ds_f2}\\
&\partial_1\mathcal F_1+\partial\mathcal F_1
+\mathcal F_1\mathcal F_2=0,\label{ds_f3}\\
&\partial_1\mathcal F_2-\partial\mathcal F_2
-\mathcal F_2\mathcal F_1=0.\label{ds_f4}
\end{align}
\end{subequations}
\end{proposition}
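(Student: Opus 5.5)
The plan is to derive everything from the system \eqref{EGR_Q} by left-multiplying with the idempotents $P_1=\tfrac12(1+B)$ and $P_2=\tfrac12(1-B)$, exactly as in the proof of Theorem \ref{2c_marchenko_projection}. Since $P_\ell$ are constants, $P_\ell$ commutes with $\partial,\partial_1,\partial_2$, so $P_\ell\partial^k\mathcal Q=\partial^k\mathcal G_\ell$. The basic identities I will use repeatedly are
\begin{align*}
P_1B=P_1,\quad P_2B=-P_2,\quad B\mathcal Q=\mathcal G_1-\mathcal G_2,\quad P_1[B,\mathcal Q]=\mathcal G_1(1-B)=2\mathcal G_1P_2,\quad P_2[B,\mathcal Q]=-2\mathcal G_2P_1 .
\end{align*}
At the end I pass to $\mathcal F_\ell=2\mathcal G_\ell$.

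\textbf{First-order equations.} Left-multiplying \eqref{EGR_Q1} by $P_1$ gives $\partial_1\mathcal G_1+\partial\mathcal G_1+2\mathcal G_1P_2\mathcal Q=0$. Since $P_2\mathcal Q=\mathcal G_2$, this is $\partial_1\mathcal G_1+\partial\mathcal G_1+2\mathcal G_1\mathcal G_2=0$. Multiplying by $2$ and writing $\mathcal F_\ell=2\mathcal G_\ell$ yields \eqref{ds_f3}.

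Left-multiplying by $P_2$ instead gives $\partial_1\mathcal G_2-\partial\mathcal G_2-2\mathcal G_2\mathcal G_1=0$, which is \eqref{ds_f4}.

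\textbf{Second-order equations.} Left-multiplying \eqref{EGR_Q2} by $P_1$ gives
\begin{align*}
\partial_2\mathcal G_1+\partial^2\mathcal G_1+2(\partial\mathcal G_1)(\mathcal G_1+\mathcal G_2)+2\mathcal G_1\bigl(\partial\mathcal G_2+\mathcal G_2(\mathcal G_1+\mathcal G_2)\bigr)=0 .
\end{align*}
The target \eqref{ds_f1} contains the symmetric operator $\partial_1^2+\partial^2$. So I will split $\partial^2\mathcal G_1=\tfrac12\partial^2\mathcal G_1+\tfrac12\partial^2\mathcal G_1$ and rewrite one half in terms of $\partial_1$.

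To do this, I apply $\partial$ to the $\mathcal G$-form of \eqref{ds_f3} and use $[\partial,\partial_1]=0$ together with \eqref{ds_f3} once more. This gives
\begin{align*}
\partial^2\mathcal G_1=\partial_1^2\mathcal G_1+2\partial_1(\mathcal G_1\mathcal G_2)-2\partial(\mathcal G_1\mathcal G_2).
\end{align*}
All remaining first-order $\partial$-derivatives are then eliminated via the two relations
\begin{align*}
\partial\mathcal G_1=-\partial_1\mathcal G_1-2\mathcal G_1\mathcal G_2,\qquad \partial\mathcal G_2=\partial_1\mathcal G_2-2\mathcal G_2\mathcal G_1 .
\end{align*}
This should leave $\partial_2\mathcal G_1+\tfrac12(\partial_1^2+\partial^2)\mathcal G_1$ plus terms of the types $(\partial_1\mathcal G_1)\mathcal G_1$, $\mathcal G_1\partial_1\mathcal G_2$ and $\mathcal G_1\mathcal G_2\mathcal G_1$. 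I expect the terms $(\partial_1\mathcal G_1)\mathcal G_2$, $\mathcal G_1\mathcal G_2\mathcal G_2$ and $\mathcal G_1\mathcal G_1\mathcal G_2$ to cancel. Multiplying by $2$ and substituting $\mathcal G_\ell=\tfrac12\mathcal F_\ell$ then produces \eqref{ds_f1}, with the coefficients $-2,2,-2$ arising from the rescaling.

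The $P_2$-projection is handled symmetrically and gives \eqref{ds_f2}. Here $P_2B=-P_2$ flips the signs of the $\partial^2$ term, and $P_2[B,\mathcal Q]=-2\mathcal G_2P_1$ flips the sign of the commutator term. For the $\partial^2\mathcal G_2$ identity I differentiate \eqref{ds_f4} instead.

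\textbf{Main obstacle.} The only genuine difficulty is the non-commutative bookkeeping in this last step. One must preserve the order of factors in every cubic term and check that the unwanted terms $\mathcal G_1\mathcal G_2^2$, $\mathcal G_1^2\mathcal G_2$ and $(\partial_1\mathcal G_1)\mathcal G_2$ cancel exactly. If the equal split of $\partial^2\mathcal G_1$ does not close, I would reconsider the split, adjusting the weights in the $\partial^2/\partial_1^2$ decomposition to match the coefficient of the derivative terms in \eqref{ds_f1}.
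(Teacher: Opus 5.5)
Your proposal is correct, and the equal split closes without any reweighting. One finds $\partial^2\mathcal G_1=\partial_1^2\mathcal G_1+4(\partial_1\mathcal G_1)\mathcal G_2+4\mathcal G_1\mathcal G_2^2+4\mathcal G_1\mathcal G_2\mathcal G_1$. Substituting this into the $P_1$-projection of \eqref{EGR_Q2}, the terms $(\partial_1\mathcal G_1)\mathcal G_2$ and $\mathcal G_1\mathcal G_2^2$ cancel. A term $\mathcal G_1^2\mathcal G_2$ never arises at all. What remains is $\partial_2\mathcal G_1+\tfrac12(\partial_1^2+\partial^2)\mathcal G_1-2(\partial_1\mathcal G_1)\mathcal G_1+2\mathcal G_1\partial_1\mathcal G_2-4\mathcal G_1\mathcal G_2\mathcal G_1=0$. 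This is \eqref{ds_f1} after multiplying by $4$ overall (not just $2$) and setting $\mathcal F_\ell=2\mathcal G_\ell$. The $P_2$ case works the same way.

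Your first-order step is identical to the paper's. For the second-order equations the paper uses the same mechanism, but projects at the end rather than the start. It applies $\partial_1$ to the unprojected \eqref{EGR_Q1} to get the matrix identity \eqref{Q_auxiliary} for $(\partial_1^2-\partial^2)\mathcal Q$. It then forms $2\cdot$\eqref{EGR_Q2}$+B\cdot$\eqref{Q_auxiliary}, which is exactly your equal split. After working the $[B,\mathcal Q]$ terms through \eqref{Q_balanced} and \eqref{Q_reduced}, it projects with $1\pm B$.

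Projecting first, as you do (and as the paper itself does for Theorem \ref{2c_marchenko_projection}), replaces the commutator identities such as $B[B,\mathcal Q]=-[B,\mathcal Q]B$ with two substitution rules, for $\partial\mathcal G_1$ and $\partial\mathcal G_2$. That makes every cancellation visible term by term. The paper's order instead produces a single matrix equation \eqref{Q_reduced}, from which both \eqref{ds_f1} and \eqref{ds_f2} follow by one projection each.
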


\begin{proof}
From $B^2=1$ and the definitions $\mathcal F_1=(1+B)\mathcal Q$, $\mathcal F_2=(1-B)\mathcal Q$, we obtain
\begin{align*}
B[B,\mathcal Q]=-[B,\mathcal Q]B,\qquad
(1+B)[B,\mathcal Q]=\mathcal F_1(1-B),\qquad
(1-B)[B,\mathcal Q]=-\mathcal F_2(1+B).
\end{align*}

Taking the sum of \eqref{EGR_Q1} and its left multiplication by $B$ gives
\begin{align*}
0=(1+B)\left(\partial_1\mathcal Q+B\partial\mathcal Q+[B,\mathcal Q]\mathcal Q\right)
=\partial_1\mathcal F_1+\partial\mathcal F_1+\mathcal F_1\mathcal F_2.
\end{align*}
Similarly, taking the difference yields
\begin{align*}
0=(1-B)\left(\partial_1\mathcal Q+B\partial\mathcal Q+[B,\mathcal Q]\mathcal Q\right)
=\partial_1\mathcal F_2-\partial\mathcal F_2-\mathcal F_2\mathcal F_1,
\end{align*}
which correspond to \eqref{ds_f3} and \eqref{ds_f4}, respectively.

Applying $\partial_1$ to \eqref{EGR_Q1} and eliminating
$\partial_1\mathcal Q$ via the relation
$\partial_1\mathcal Q=-B\partial\mathcal Q-[B,\mathcal Q]\mathcal Q$ yields
\begin{equation} \label{Q_auxiliary}
(\partial_1^2-\partial^2)\mathcal Q-2\Bigl(B[B,\partial\mathcal Q]\mathcal Q
+B[B,\mathcal Q]\mathcal Q^2+[B,\mathcal Q]^2\mathcal Q\Bigr)=0.
\end{equation}
Adding twice \eqref{EGR_Q2} to $B$ times \eqref{Q_auxiliary} eliminates the term $[B,\mathcal Q]\mathcal Q^2$ and gives
\begin{equation} \label{Q_balanced}
2\partial_2\mathcal Q+B(\partial_1^2+\partial^2)\mathcal Q
+2\bigl(B\partial\mathcal Q+(\partial\mathcal Q)B\bigr)\mathcal Q
+2[B,\mathcal Q]\bigl(\partial\mathcal Q+B[B,\mathcal Q]\mathcal Q\bigr)=0.
\end{equation}
Left multiplication of \eqref{Q_balanced} by $B$, combined with \eqref{EGR_Q1} to eliminate $\partial\mathcal Q$, leads to
\begin{equation} \label{Q_reduced}
2B\partial_2\mathcal Q+(\partial_1^2+\partial^2)\mathcal Q
+2[B,\mathcal Q]^2\mathcal Q
-2\bigl(B\partial_1\mathcal Q+(\partial_1\mathcal Q)B\bigr)\mathcal Q
+2[B,\mathcal Q]\partial_1\mathcal Q=0.
\end{equation}
Finally, applying $1+B$ and $-(1-B)$ from the left to \eqref{Q_reduced}, respectively, we obtain
\begin{align*}
0={}&2\partial_2\mathcal F_1+(\partial_1^2+\partial^2)\mathcal F_1
-2(\partial_1\mathcal F_1)\mathcal F_1
+2\mathcal F_1\partial_1\mathcal F_2
-2\mathcal F_1\mathcal F_2\mathcal F_1,\\
0={}&2\partial_2\mathcal F_2-(\partial_1^2+\partial^2)\mathcal F_2
-2(\partial_1\mathcal F_2)\mathcal F_2
+2\mathcal F_2\partial_1\mathcal F_1
+2\mathcal F_2\mathcal F_1\mathcal F_2,
\end{align*}
which are precisely \eqref{ds_f1} and \eqref{ds_f2}, respectively.
\end{proof}

\begin{remark}
There is another choice of variable transformation. If we take $U=\mathcal{Q}B-B\mathcal{Q}$ and $V=\mathcal{Q}B+B\mathcal{Q}$, then $U$ and $V$ satisfy
\begin{align*}
&2B\partial_2U+(\partial_1^2+\partial^2)U+2U^3-2(U\partial_1 V+(\partial_1V)U)=0, \\
&\partial_1V +B\partial V=U^2.
\end{align*}
This system coincides with~\cite[Equations (19) \& (20)]{etingof98}.    
\end{remark}

\subsection{Specialization to the non-commutative two-component KP equation}\label{sec:3.2}
We now apply Theorem~\ref{2c_marchenko_projection} to the non-commutative two-component KP equation. Let
$W_{n,m}(\boldsymbol\phi,\boldsymbol\psi)$ be the non-degenerate two-component Wronski matrix defined in \eqref{2c_wronski}, and set $\Gamma=W_{n,m}(\bphi,\bpsi)$. We assume that $\{\phi_j\}_{j=1}^{n+m}$ are independent with $\partial^{(2)}$, while $\{\psi_j\}_{j=1}^{n+m}$ are independent with $\partial^{(1)}$. More precisely, we impose the derivative relations\footnote{Without abuse of notation, we henceforth write $\partial_{t_j^{(k)}}$ and $\partial_x$ in place of $\partial_j^{(k)}$ and $\partial$ respectively, in order to distinguish the time flows and spectral parameter when presenting the corresponding integrable equations.}
\begin{align}
\label{seed-dispersion}
\partial_{t_k^{(1)}}\phi_j=\partial_x^k\phi_j,
\quad
\partial_{t_k^{(2)}}\phi_j=0, \quad
\partial_{t_k^{(1)}}\psi_j=0,
\quad
\partial_{t_k^{(2)}}\psi_j=\partial_x^k\psi_j,
\end{align}
for $k=1,2$. These derivative relations can be reformulated as the Marchenko lemma conditions \eqref{marchenko_second} and \eqref{marchenko_first} by recognizing 
\begin{align}\label{wpp}
P_1=\operatorname{diag}(I_n,0_m), \quad 
P_2=\operatorname{diag}(0_n,I_m),
\end{align}
and $A_\ell=C_\ell=0$ for $\ell=1,2$,
where $I_k$ and $0_k$ denote the identity matrix and the zero matrix of order $k$, respectively. 
\begin{remark}
Indeed, there should be a large class of dispersion relations. From the linear relations \eqref{marchenko_second} and \eqref{marchenko_first}, one knows that for constant matrix $P_1=\operatorname{diag}(I_n,0_m), \,
P_2=\operatorname{diag}(0_n,I_m)$, $A=(a_{ij})$ and $C=(c_{ij})$, $\phi_j$ and $\psi_j$  should satisfy 
\begin{align}\label{time_KP}
\begin{aligned}
&\partial_{t_1^{(1)}}\phi_j=\partial_x\phi_j+\sum_{k=1}^{n+m} \phi_k a_{k,j},&&\partial_{t_1^{(2)}}\phi_j=\sum_{k=1}^{n+m} \phi_k a_{k,j},\\
&\partial_{t_2^{(1)}}\phi_j=\partial_x^2\phi_j+\sum_{k=1}^{n+m} \phi_k c_{k,j},&& \partial_{t_2^{(2)}}\phi_j=\sum_{k=1}^{n+m} \phi_k c_{k,j},\\
&\partial_{t_1^{(1)}}\psi_j=\sum_{k=1}^{n+m}\psi_ka_{k,j},&&\partial_{t_1^{(2)}}\psi_j=\partial_x\psi_j+\sum_{k=1}^{n+m}\psi_k a_{k,j},\\
&\partial_{t_2^{(1)}}\psi_j=\sum_{k=1}^{n+m}\psi_k c_{k,j},&&\partial_{t_2^{(2)}}\psi_j=\partial_x^2\psi_j+\sum_{k=1}^{n+m}\psi_k c_{k,j}.
\end{aligned}
\end{align}
\end{remark} 

Recall that $\xi_j$ and $\eta_j$, $j=1,\ldots,n+m$, are the coefficients of the coupled system \eqref{coupled_system}. Then Corollary~\ref{wronski_logarithmic_derivative} and Theorem~\ref{2c_marchenko_projection} imply the following result.
\begin{proposition}
Let 
\begin{equation}
\label{four_fields}
q=-\xi_{n+m},
\qquad
\tilde q=-\xi_n,
\qquad
r=-\eta_n,
\qquad
\tilde r=-\eta_{n+m},
\end{equation}
then these four functions satisfy the non-commutative two-component KP equation
\begin{align}
\label{2cKP}
\begin{aligned}
\partial_{t_1^{(1)}}^2q-\partial_{t_2^{(1)}}q&=
2(\partial_{t_1^{(1)}}\tilde q)q, \\
\partial_{t_1^{(2)}}^2q+\partial_{t_2^{(2)}}q&=
2q\,\partial_{t_1^{(2)}}\tilde r, \\
\partial_{t_1^{(1)}}^2r+\partial_{t_2^{(1)}}r&=
2r\,\partial_{t_1^{(1)}}\tilde q, \\
\partial_{t_1^{(2)}}^2r-\partial_{t_2^{(2)}}r&=
2(\partial_{t_1^{(2)}}\tilde r)r, \\
\partial_{t_1^{(2)}}\tilde q&=qr, \\  
\partial_{t_1^{(1)}}\tilde r&=rq.
\end{aligned}
\end{align}
\end{proposition}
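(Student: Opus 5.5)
The plan is to apply Theorem~\ref{2c_marchenko_projection} with $\Gamma=W_{n,m}(\bphi,\bpsi)$. The relations \eqref{seed-dispersion} give \eqref{marchenko_second}--\eqref{marchenko_first} with $P_\ell$ as in \eqref{wpp} and $A_\ell=C_\ell=0$. To check this, note that $\partial_{t_k^{(1)}}$ acts on the first $n$ rows (which are built from the $\phi_j$) as $\partial_x^k$ and kills the last $m$ rows, because derivations commute. Hence $\partial_{t_k^{(1)}}\Gamma=P_1\partial_x^k\Gamma$, and likewise $\partial_{t_k^{(2)}}\Gamma=P_2\partial_x^k\Gamma$. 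Thus $\mathcal G_1=P_1X$ and $\mathcal G_2=P_2X$, where $X$ is the matrix of Corollary~\ref{wronski_logarithmic_derivative}. The rows of $\mathcal G_1$ are rows $1,\dots,n$ of $X$, and the rows of $\mathcal G_2$ are rows $n+1,\dots,n+m$. Both are matrices of size $(n+m)\times(n+m)$.

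Next I would extract entries from the matrix equations \eqref{marchenko_2}--\eqref{marchenko_1}. The four fields are entries of $X$: $x_{n,n}=\xi_n=-\tilde q$, $x_{n,n+m}=\xi_{n+m}=-q$, $x_{n+m,n}=\eta_n=-r$ and $x_{n+m,n+m}=\eta_{n+m}=-\tilde r$. Every other row of $X$ is a constant shift row $e_{p+1}^T$. Consequently, any derivative of $\mathcal G_\ell$ is supported only on row $n$ (for $\ell=1$) or row $n+m$ (for $\ell=2$).

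\textbf{Row $n$.} Take \eqref{marchenko_2} with $\ell=1$ at entry $(n,n+m)$. The left side is $-(\partial_{t_1^{(1)}}^2-\partial_{t_2^{(1)}})q$. The right side is $-2\sum_k(\partial_{t_1^{(1)}}x_{n,k})(\mathcal G_1)_{k,n+m}$. The entries $(\mathcal G_1)_{k,n+m}$ vanish for $k<n$, since the shift rows $p<n$ have their $1$ at column $p+1\le n<n+m$ (the case $m\ge1$ is assumed). Only $k=n$ survives, giving $-2(\partial\tilde q)q$. This yields the first equation of \eqref{2cKP}.

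\textbf{Row $n+m$.} In the same way, \eqref{marchenko_2} with $\ell=2$ at entry $(n+m,n)$ gives the fourth equation. Here the shift rows $n+1,\dots,n+m-1$ have their $1$ at columns $n+2,\dots,n+m$, so they never hit column $n$.

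\textbf{Cross equations.} Equation \eqref{marchenko_3} with $\ell=2$ (so $\bar\ell=1$) at entry $(n,n+m)$ reads $(\partial^2+\partial_{t_2^{(2)}})(-q)=-2\sum_k x_{n,k}\,\partial_{t_1^{(2)}}(\mathcal G_2)_{k,n+m}$. Since $\mathcal G_2$ is nonconstant only in row $n+m$, the sum reduces to $-2(-q)(-\partial\tilde r)$, which gives the second equation. The third equation follows from $\ell=1$ at entry $(n+m,n)$.

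\textbf{Last two equations.} Equation \eqref{marchenko_1} with $\ell=2$ at entry $(n,n)$ gives $-\partial_{t_1^{(2)}}\tilde q=-\sum_k x_{n,k}(\mathcal G_2)_{k,n}$. Rows $k$ of $\mathcal G_2$ are zero unless $k>n$, and among those, column $n$ is nonzero only in row $n+m$ (value $-r$). So the sum is $(-q)(-r)=qr$, which gives $\partial_{t_1^{(2)}}\tilde q=qr$. Similarly, $\ell=1$ at entry $(n+m,n+m)$ gives $\partial_{t_1^{(1)}}\tilde r=rq$.

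The main obstacle is bookkeeping: confirming that the constant shift rows contribute nothing to each selected entry, and keeping the sign conventions of \eqref{four_fields} straight. I expect this to need care in the degenerate cases $n=1$ or $m=1$, where there are no shift rows and $X$ is a full matrix. Even then the sums over $k$ restrict correctly, because $\mathcal G_1$ has nonzero rows only for $k\le n$ and $\mathcal G_2$ only for $k>n$.
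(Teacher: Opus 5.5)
Your proposal is correct and follows the paper's own proof. Like the paper, you apply Theorem~\ref{2c_marchenko_projection} to $\Gamma=W_{n,m}(\bphi,\bpsi)$ with $P_\ell$ as in \eqref{wpp} and $A_\ell=C_\ell=0$, and read off the $(n,n+m)$ and $(n+m,n)$ entries of \eqref{marchenko_2} and \eqref{marchenko_3} and the $(n,n)$ and $(n+m,n+m)$ entries of \eqref{marchenko_1}; your explicit check that the constant shift rows of $X$ contribute nothing to these entries (using $m\ge 1$), which the paper leaves implicit, is carried out correctly.
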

\begin{proof}
By Corollary~\ref{wronski_logarithmic_derivative}, the $n$-th
and $(n+m)$-th rows of $\mathcal Q=(\partial W)W^{-1}$ are $(\xi_1,\ldots,\xi_{n+m})$ and $(\eta_1,\ldots,\eta_{n+m})$, respectively. Since
$\mathcal G_\ell=P_\ell\mathcal Q$, we have
\[
(\mathcal G_1)_{n,n}=\xi_n,\qquad
(\mathcal G_1)_{n,n+m}=\xi_{n+m},
\]
and
\[
(\mathcal G_2)_{n+m,n}=\eta_n,\qquad
(\mathcal G_2)_{n+m,n+m}=\eta_{n+m}.
\]
Taking the $(n+m,n+m)$-entries of \eqref{marchenko_1} for
$\ell=1$ and the $(n,n)$-entries for $\ell=2$ yields
\begin{align*}
\partial_1^{(1)}\eta_{n+m}
=-\eta_n\,\xi_{n+m}, \qquad
\partial_1^{(2)}\xi_n
=-\xi_{n+m}\,\eta_n.
\end{align*}
Similarly, taking the
$(n,n+m)$-entries of \eqref{marchenko_2} for $\ell=1$ and
\eqref{marchenko_3} for $\ell=2$, and the $(n+m,n)$-entries of
\eqref{marchenko_3} for $\ell=1$ and
\eqref{marchenko_2} for $\ell=2$, we obtain
\begin{align*}
&\partial_{t_1^{(1)}}^2\xi_{n+m}
-\partial_{t_2^{(1)}}\xi_{n+m}
=-2(\partial_{t_1^{(1)}}\xi_n)\,\xi_{n+m},
\\
&\partial_{t_1^{(2)}}^2\xi_{n+m}
+\partial_{t_2^{(2)}}\xi_{n+m}
=-2\xi_{n+m}\partial_{t_1^{(2)}}\eta_{n+m},
\\
&\partial_{t_1^{(1)}}^2\eta_n
+\partial_{t_2^{(1)}}\eta_n
=-2\eta_n\,\partial_{t_1^{(1)}}\xi_n,
\\
&\partial_{t_1^{(2)}}^2\eta_n
-\partial_{t_2^{(2)}}\eta_n
=-2(\partial_{t_1^{(2)}}\eta_{n+m})\,\eta_n.
\end{align*}
These equations are exactly the non-commutative two-component KP equation \eqref{2cKP} by relations \eqref{four_fields}.
\end{proof}
It is worth noting that, although the system \eqref{2cKP} involves four fields $(q,\tilde{q},r,\tilde{r})$, the fields $\tilde{q}$ and $\tilde{r}$ are not independent. From the differential relations
\begin{align*}
\partial_{t_1^{(2)}}\tilde{q}=qr,\quad \partial_{t_1^{(1)}}\tilde{r}=rq,
\end{align*}
we obtain, up to integration constants, that
\begin{align*}
\tilde{q}=\int qr dt_{1}^{(2)},\quad \tilde{r}=\int rq dt_1^{(1)}.
\end{align*}
Therefore, the essential dynamical degrees of freedom are captured by $(q,r)$. Explicit solutions for all four fields in terms of quasi-determinants have already been presented as follows.
\begin{coro}
The non-commutative two-component KP equation \eqref{2cKP} has the solution 
\begin{align*}
q=\left|\begin{array}{cc}
W_{n,m}(\bphi,\bpsi)&e_{n+m}\\
\bphi^{(n)}&\boxed{0}
\end{array}
\right|,\quad  r=\left|\begin{array}{cc}
W_{n,m}(\bphi,\bpsi)&e_{n}\\
\bpsi^{(m)}&\boxed{0}
\end{array}
\right|
\end{align*}
with general dispersion relations \eqref{time_KP}.
\end{coro}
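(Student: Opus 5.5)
The plan is to deduce the corollary from the preceding proposition. That proposition was proved only through Theorem~\ref{2c_marchenko_projection} and Corollary~\ref{wronski_logarithmic_derivative}, so the real task is to check that the general dispersion relations \eqref{time_KP} still put $\Gamma=W_{n,m}(\bphi,\bpsi)$ under the hypotheses \eqref{marchenko_second}--\eqref{marchenko_first}. We take $P_1,P_2$ as in \eqref{wpp} and $A_1=A_2=A=(a_{ij})$, $C_1=C_2=C=(c_{ij})$.

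\textbf{Step 1: reinterpret the quasi-determinants.} We work in the algebra of $(n+m)\times(n+m)$ matrices over $\mathcal A$, with all derivations acting entrywise. Given only that $W:=W_{n,m}(\bphi,\bpsi)$ is invertible, we \emph{define}
\[
\boldsymbol\xi:=(\partial^n\bphi)W^{-1},\qquad \boldsymbol\eta:=(\partial^m\bpsi)W^{-1}.
\]
With these coefficients, every pair $(\phi_j,\psi_j)$ is automatically a solution pair of \eqref{coupled_system}. Theorem~\ref{coefficients} then gives $\xi_k,\eta_k$ as minus the quasi-determinants in \eqref{xi_k_eta_k}. Since $q=-\xi_{n+m}$ and $r=-\eta_n$ by \eqref{four_fields}, these are exactly the quasi-determinants displayed in the corollary, with $\boldsymbol e_{n+m}$ and $\boldsymbol e_n$ respectively. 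Likewise $\tilde q=-\xi_n$ and $\tilde r=-\eta_{n+m}$ are the analogous quasi-determinants with $\boldsymbol e_n$ and $\boldsymbol e_{n+m}$.

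\textbf{Step 2: lift the seed relations to $W$.} Row $i$ of $W$ is $\partial^{i-1}\bphi$ for $i\le n$, and row $n+j$ is $\partial^{j-1}\bpsi$ for $j\le m$. Each relation in \eqref{time_KP} is a row identity, for instance $\partial_{t_1^{(1)}}\bphi=\partial\bphi+\bphi A$. We apply $\partial^{i-1}$ to it. Because $\partial$ commutes with $\partial_{t_k^{(\ell)}}$ and the entries $a_{kj}$, $c_{kj}$ lie in the common kernel, the Leibniz rule gives
\[
\partial^{i-1}(\bphi A)=(\partial^{i-1}\bphi)A .
\]
Right multiplication by $A$ is consistent with $\sum_k\phi_k a_{k,j}$ even though $\mathcal A$ is non-commutative. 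Stacking these rows, the $\phi$-block of $W$ obeys $\partial_{t_1^{(1)}}W=\partial W+WA$, and the $\psi$-block obeys $\partial_{t_1^{(1)}}W=WA$. Together this reads
\[
\partial_{t_1^{(1)}}W-P_1\partial W=WA .
\]
The other three relations in \eqref{time_KP} lift in the same way, giving \eqref{marchenko_first} and \eqref{marchenko_second} for $\ell=1,2$. The only point to watch is that $P_\ell$ acts on the left by selecting rows, while $A$ and $C$ act on the right by mixing columns.

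\textbf{Step 3: apply the projected system.} The matrices $P_1,P_2$ of \eqref{wpp} satisfy \eqref{opera_p}, so Theorem~\ref{2c_marchenko_projection} applies to $\mathcal G_\ell=P_\ell\mathcal Q$, where $\mathcal Q=X$ from Corollary~\ref{wronski_logarithmic_derivative}. Its rows $n$ and $n+m$ are $\boldsymbol\xi$ and $\boldsymbol\eta$. We then repeat the entry extraction from the proof of the preceding proposition word for word: the $(n,n+m)$, $(n+m,n)$, $(n,n)$ and $(n+m,n+m)$ entries of \eqref{marchenko_2}--\eqref{marchenko_1}. That extraction never used $A=C=0$, only the rows of $\mathcal Q$, so it yields \eqref{2cKP}. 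We also need the products of projections to pick out the right entries. For example, the $(n,n)$ entry of $\mathcal G_1\mathcal G_2$ is $\xi_{n+m}\eta_n$, because $\mathcal G_2$ has only the nonzero row $n+m$ among the $\psi$-rows that matter here.

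I expect Step 2 to be the only genuinely non-routine point. The care needed is in confirming that constancy of $A$ and $C$ is used only through $\partial^{i-1}(\bphi A)=(\partial^{i-1}\bphi)A$, and that no commutation of $a_{kj}$ with the $\phi_k$ is ever required. If some entry extraction in Step 3 turned out to need more than the rows of $\mathcal Q$, I would recheck that against the block structure of $P_\ell$.
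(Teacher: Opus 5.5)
Your proposal is correct and is essentially the paper's own argument. The Remark preceding \eqref{time_KP} is exactly your Step 2: the general dispersion relations are the Marchenko conditions \eqref{marchenko_second}--\eqref{marchenko_first} with $P_\ell$ from \eqref{wpp} and constant $A_\ell=A$, $C_\ell=C$. The corollary then follows from the preceding proposition, whose proof via Theorem~\ref{2c_marchenko_projection} and Corollary~\ref{wronski_logarithmic_derivative} never uses $A=C=0$, together with Theorem~\ref{coefficients} and \eqref{four_fields}, just as you argue.

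The only wording to tighten is in Step 3. The shift rows of $\mathcal G_1$ and $\mathcal G_2$ are not zero, but in the columns being extracted their entries either vanish or are constants annihilated by $\partial_1^{(\ell)}$, and that is all the entry extraction needs.
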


We refer to the system~\eqref{2cKP} as the non-commutative version of the two-component KP equation. In the commutative limit, the quasi-determinant representations of $q,\tilde q,r$, and $\tilde r$ reduce to
\begin{align*}
&q = (-1)^m \frac{\tau_{n+1,m-1}}{\tau_{n,m}},  \qquad \quad \tilde{q} = -\frac{\partial_{t_1^{(1)}}\tau_{n,m}}{\tau_{n,m}}, \\
&r = (-1)^{m+1} \frac{\tau_{n-1,m+1}}{\tau_{n,m}}, \qquad \tilde{r} = - \frac{\partial_{t_1^{(2)}}\tau_{n,m}}{\tau_{n,m}}.     
\end{align*}
where $\tau_{n,m}=\det W_{n,m}(\boldsymbol\phi,\boldsymbol\psi)$. 
Consequently, the commutative limit of \eqref{2cKP} takes the form
\begin{equation*} 
\begin{split}
&(\mathrm{D}_{t_1^{(1)}}^2 + \mathrm{D}_{t_2^{(1)}}) \tau_{n,m} \cdot \tau_{n+1,m-1} = 0, \\
&(\mathrm{D}_{t_1^{(2)}}^2 - \mathrm{D}_{t_2^{(2)}}) \tau_{n,m} \cdot \tau_{n+1,m-1} = 0, \\
&(\mathrm{D}_{t_1^{(1)}}^2 - \mathrm{D}_{t_2^{(1)}}) \tau_{n,m} \cdot \tau_{n-1,m+1} = 0, \\
&(\mathrm{D}_{t_1^{(2)}}^2 + \mathrm{D}_{t_2^{(2)}}) \tau_{n,m} \cdot \tau_{n-1,m+1} = 0, \\
&\mathrm{D}_{t_1^{(1)}} \mathrm{D}_{t_1^{(2)}} \tau_{n,m} \cdot \tau_{n,m} = 2 \tau_{n+1,m-1} \tau_{n-1,m+1} ,
\end{split}
\end{equation*}
where $\mathrm D$ denotes the corresponding Hirota bilinear operator, defined by
\begin{align*}
\mathrm D_x^m \mathrm D_y^n f\cdot g
=(\partial_x-\partial_{x'})^m(\partial_y-\partial_{y'})^n
f(x,y)\,g(x',y')\big|_{x'=x,\ y'=y}.
\end{align*}
The above system is precisely the bilinear form of the two-component KP
equation, as given in Refs.~\cite{jimbo83,freeman90}.

\subsection{Specialization to the non-commutative DS equation}\label{sec:3.3}

We now apply the involution reduction established in Proposition~\ref{involution_reduction} to derive the non-commutative DS equation. In this subsection we set
\begin{equation*}
\partial_1=\sigma\partial_y,\qquad
\partial_2=\frac{\mathrm i}{2}\partial_t,
\end{equation*}
where $\sigma=-1\ \text{or}\ \sigma=\mathrm i$, so that the difference derivations of the involution reduction correspond to one spatial and one temporal flow.

For each $j=1,\ldots,n+m$, let $\phi_j$ and $\psi_j$ satisfy the dispersion relations
\begin{align} \label{seed-dispersion-ds}
\sigma\partial_y\phi_j=-\partial_x\phi_j,\qquad
\partial_t\phi_j=2\mathrm i\partial_x^2\phi_j,\qquad
\sigma\partial_y\psi_j=\partial_x\psi_j,\qquad
\partial_t\psi_j=-2\mathrm i\partial_x^2\psi_j,
\end{align}
and consider the two-component Wronski matrix
$W_{n,m}(\boldsymbol\phi,\boldsymbol\psi)$ defined in \eqref{2c_wronski}.
Assuming that $W_{n,m}(\boldsymbol\phi,\boldsymbol\psi)$ is invertible, we set
\begin{align*}
\Gamma=W_{n,m}(\boldsymbol\phi,\boldsymbol\psi),
\qquad
B=\operatorname{diag}(I_n,-I_m).
\end{align*}
Since $B$ acts as the identity on the first $n$ rows of
$W_{n,m}(\boldsymbol\phi,\boldsymbol\psi)$ (the $\phi$-rows) and as $-I_m$ on
the remaining $m$ rows (the $\psi$-rows), the dispersion relations 
\eqref{seed-dispersion-ds} imply
\begin{align*}
\sigma\partial_y W_{n,m}(\boldsymbol\phi,\boldsymbol\psi)
=-B\partial_x W_{n,m}(\boldsymbol\phi,\boldsymbol\psi),\qquad
\frac{\mathrm i}{2}\partial_t W_{n,m}(\boldsymbol\phi,\boldsymbol\psi)
=-B\partial_x^2 W_{n,m}(\boldsymbol\phi,\boldsymbol\psi),
\end{align*}
which are precisely the Marchenko conditions with $A=C=0$. Consequently, Proposition~\ref{involution_reduction} applies. With $\mathcal Q=(\partial_x W_{n,m}(\boldsymbol\phi,\boldsymbol\psi))
W_{n,m}(\boldsymbol\phi,\boldsymbol\psi)^{-1}$, the matrices
$\mathcal F_1=\mathcal Q+B\mathcal Q$ and $\mathcal F_2=\mathcal Q-B\mathcal Q$
satisfy \eqref{ds_f1}-\eqref{ds_f4}.

By Corollary~\ref{wronski_logarithmic_derivative}, the $n$-th and
$(n+m)$-th rows of $\mathcal Q$ are given by the vectors
$(\xi_1,\ldots,\xi_{n+m})$ and $(\eta_1,\ldots,\eta_{n+m})$, respectively, where
$\xi_j,\eta_j$ are the coefficients in \eqref{coupled_system} determined by the
chosen seed functions through Theorem~\ref{coefficients}. Analogously to \eqref{four_fields}, we define
\begin{equation}\label{ds_fields}
q=-\xi_{n+m},\qquad
\tilde q=-\xi_n,\qquad
r=-\eta_n,\qquad
\tilde r=-\eta_{n+m}.
\end{equation}
Since $\mathcal F_1=\mathcal Q+B\mathcal Q$ and
$\mathcal F_2=\mathcal Q-B\mathcal Q$, the first $n$ rows of
$\mathcal Q$ are doubled in $\mathcal F_1$ and the last $m$ rows of
$\mathcal Q$ are doubled in $\mathcal F_2$. Consequently, the fields
$q,\tilde q,r,\tilde r$ defined in \eqref{ds_fields} appear in the
entries of $\mathcal F_1$ and $\mathcal F_2$ as
\begin{align*}
(\mathcal F_1)_{n,n}=-2\tilde q,\qquad
(\mathcal F_1)_{n,n+m}=-2q,\qquad
(\mathcal F_2)_{n+m,n}=-2r,\qquad
(\mathcal F_2)_{n+m,n+m}=-2\tilde r.
\end{align*}

\begin{proposition}
The fields $q,\tilde q,r,\tilde r$ defined in \eqref{ds_fields} satisfy
\begin{align}
\label{ncds_marchenko}
\begin{aligned}
&\mathrm i\partial_t q+(\partial_x^2+\sigma^2\partial_y^2)q
-4\sigma \bigl(q\partial_y\tilde r-(\partial_y\tilde q)q\bigr)-8qrq=0, \\
&\mathrm i\partial_t r-(\partial_x^2+\sigma^2\partial_y^2)r
-4\sigma \bigl(r\partial_y\tilde q-(\partial_y\tilde r)r\bigr)+8rqr=0, \\
&(\partial_x+\sigma\partial_y)\tilde q-2qr=0,\\
&(\partial_x-\sigma\partial_y)\tilde r-2rq=0.
\end{aligned}
\end{align}
\end{proposition}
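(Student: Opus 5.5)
The plan is to read off the four equations \eqref{ncds_marchenko} as individual matrix entries of \eqref{ds_f1}--\eqref{ds_f4} from Proposition~\ref{involution_reduction}. The Marchenko conditions with $B=\operatorname{diag}(I_n,-I_m)$, $A=C=0$ have already been checked above, so we substitute $\partial_1=\sigma\partial_y$, $\partial_2=\frac{\mathrm i}{2}\partial_t$, $\partial=\partial_x$ into \eqref{ds_f1}--\eqref{ds_f4}. We then take the $(n,n+m)$-entry of \eqref{ds_f1}, the $(n+m,n)$-entry of \eqref{ds_f2}, the $(n,n)$-entry of \eqref{ds_f3} and the $(n+m,n+m)$-entry of \eqref{ds_f4}.

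The bookkeeping rests on the sparsity pattern from Corollary~\ref{wronski_logarithmic_derivative}. The matrix $\mathcal F_1=2P_1\mathcal Q$ is supported on rows $1,\dots,n$, and rows $k<n$ are the constant shift rows $2\boldsymbol e_{k+1}^{T}$. Similarly $\mathcal F_2=2P_2\mathcal Q$ is supported on rows $n+1,\dots,n+m$, and rows $k<n+m$ are $2\boldsymbol e_{k+1}^{T}$. Row $n$ of $\mathcal F_1$ is $2\boldsymbol\xi$ and row $n+m$ of $\mathcal F_2$ is $2\boldsymbol\eta$. Two consequences drive everything:
\begin{itemize}
\item derivatives of the shift rows vanish;
\item in column $n$ (resp.\ $n+m$), the only nonzero entry of $\mathcal F_1$ (resp.\ $\mathcal F_2$) among the rows it can be paired with is the last one.
\end{itemize}
For example, in $(\mathcal F_1\mathcal F_2)_{n,n}=\sum_{k>n}(\mathcal F_1)_{n,k}(\mathcal F_2)_{k,n}$ the shift rows contribute $2\delta_{k+1,n}=0$. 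Hence only $k=n+m$ survives, giving $(-2q)(-2r)=4qr$. Symmetrically, $(\mathcal F_2\mathcal F_1)_{n+m,n+m}=4rq$. Inserting these into the diagonal entries of \eqref{ds_f3} and \eqref{ds_f4} and dividing by $-2$ gives $(\partial_x+\sigma\partial_y)\tilde q-2qr=0$ and $(\partial_x-\sigma\partial_y)\tilde r-2rq=0$.

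For \eqref{ds_f1} at $(n,n+m)$ the same reasoning applies term by term:
\begin{itemize}
\item $\bigl((\partial_y\mathcal F_1)\mathcal F_1\bigr)_{n,n+m}=\partial_y(-2\tilde q)(-2q)=4(\partial_y\tilde q)q$, since only $k=n$ survives in $(\mathcal F_1)_{k,n+m}$;
\item $\bigl(\mathcal F_1\partial_y\mathcal F_2\bigr)_{n,n+m}=4q\,\partial_y\tilde r$, since derivatives of the shift rows of $\mathcal F_2$ vanish;
\item $(\mathcal F_1\mathcal F_2\mathcal F_1)_{n,n+m}=(\mathcal F_1\mathcal F_2)_{n,n}(\mathcal F_1)_{n,n+m}=4qr\cdot(-2q)=-8qrq$.
\end{itemize}
The linear part contributes $-2\mathrm i\partial_tq-2(\partial_x^2+\sigma^2\partial_y^2)q$. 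Dividing the whole entry by $-2$ yields the first equation of \eqref{ncds_marchenko}. The second equation follows identically from the $(n+m,n)$-entry of \eqref{ds_f2}, with the roles of the two blocks exchanged and the opposite signs of the dispersive and cubic terms inherited from \eqref{ds_f2}.

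I expect no conceptual obstacle. The only delicate step is the index bookkeeping in the triple product $\mathcal F_1\mathcal F_2\mathcal F_1$ (and $\mathcal F_2\mathcal F_1\mathcal F_2$), where one must verify carefully that the shift rows of both matrices never feed into the relevant entries. This uses $n\ge1$, $m\ge1$, and the fact that the shift rows of $\mathcal F_2$ point to columns $\ge n+2$, never to column $n$. I will also track the factors of $2$ and $\sigma$ carefully, since $\sigma^2$ appears only through $\partial_1^2=\sigma^2\partial_y^2$.
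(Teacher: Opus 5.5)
Your proposal is correct and follows the paper's proof essentially verbatim: it reads off the $(n,n)$- and $(n+m,n+m)$-entries of \eqref{ds_f3} and \eqref{ds_f4}, and the $(n,n+m)$- and $(n+m,n)$-entries of \eqref{ds_f1} and \eqref{ds_f2}, and it evaluates the products through the same shift-row sparsity of $\mathcal F_1,\mathcal F_2$ given by Corollary~\ref{wronski_logarithmic_derivative}. One wording slip: your second bullet should say that column $n+m$ of $\mathcal F_1$ and column $n$ of $\mathcal F_2$ each have a single nonzero entry, namely in rows $n$ and $n+m$. As written, it refers to column $n$ of $\mathcal F_1$ and column $n+m$ of $\mathcal F_2$, and these contain the constant shift entries $(\mathcal F_1)_{n-1,n}=2$ when $n\ge2$ and $(\mathcal F_2)_{n+m-1,n+m}=2$ when $m\ge2$. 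Those entries only ever appear differentiated, so they drop out by your first bullet, and your explicit computations already use the correct version.
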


\begin{proof}
By Corollary~\ref{wronski_logarithmic_derivative}, the only rows of
$\mathcal Q$ that are not shift rows are the $n$-th and $(n+m)$-th rows, which
contain the coefficients $\xi_j$ and $\eta_j$. Consequently, every
product below depends only on the fields defined in \eqref{ds_fields}. A direct computation gives
\begin{align*}
&(\mathcal F_1\mathcal F_2)_{n,n}=4qr,
&&(\mathcal F_2\mathcal F_1)_{n+m,n+m}=4rq,\\
&(\mathcal F_1\mathcal F_2\mathcal F_1)_{n,n+m}=-8qrq,
&&(\mathcal F_2\mathcal F_1\mathcal F_2)_{n+m,n}=-8rqr,\\
&\bigl((\partial_1\mathcal F_1)\mathcal F_1\bigr)_{n,n+m}
=4(\partial_1\tilde q)q,
&&\bigl(\mathcal F_1\partial_1\mathcal F_2\bigr)_{n,n+m}
=4q\,\partial_1\tilde r,\\
&\bigl((\partial_1\mathcal F_2)\mathcal F_2\bigr)_{n+m,n}
=4(\partial_1\tilde r)r,
&&\bigl(\mathcal F_2\partial_1\mathcal F_1\bigr)_{n+m,n}
=4r\,\partial_1\tilde q.
\end{align*}
Taking the $(n,n)$- and $(n+m,n+m)$-entries of \eqref{ds_f3} and \eqref{ds_f4}
, respectively, yields the last two equations in \eqref{ncds_marchenko}. Similarly, extracting the $(n,n+m)$-entry of \eqref{ds_f1} and dividing by $-2$,
together with the $(n+m,n)$-entry of \eqref{ds_f2} multiplied by $-\tfrac12$,
gives the first two equations of \eqref{ncds_marchenko}.
\end{proof}

\begin{coro}
If the seed functions satisfy 
\begin{align*}
&(\sigma\partial_y+\partial_x)\phi_j=\sum_{k=1}^{n+m} \phi_k a_{k,j},
\quad \left(
\frac{i}{2}\partial_t+\partial_x^2
\right)\phi_j=\sum_{k=1}^{n+m}\phi_k c_{k,j},\\
&(\sigma\partial_y-\partial_x)\psi_j=\sum_{k=1}^{n+m} \psi_k a_{k,j},
\quad \left(
\frac{i}{2}\partial_t-\partial_x^2
\right)\psi_j=\sum_{k=1}^{n+m}\psi_k c_{k,j},
\end{align*}
then
\eqref{ncds_marchenko} admits the quasi-Wronskian solution
\begin{align*}
q&=
\left|\begin{array}{cc}
W_{n,m}(\bphi,\bpsi)&\boldsymbol e_{n+m}\\
\bphi^{(n)}&\boxed{0}
\end{array}\right|,
\qquad
r=
\left|\begin{array}{cc}
W_{n,m}(\bphi,\bpsi)&\boldsymbol e_n\\
\bpsi^{(m)}&\boxed{0}
\end{array}\right|.
\end{align*}
\end{coro}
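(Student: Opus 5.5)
The plan is to reduce the corollary to the preceding proposition by checking that the generalized seed relations are exactly the involution-reduced Marchenko conditions \eqref{gamma_ds}. The only difference from the earlier setting is that now $A=(a_{k,j})$ and $C=(c_{k,j})$ are nonzero constant matrices.

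\textbf{Step 1: matrix form of the seed relations.} Set $\Gamma=W_{n,m}(\bphi,\bpsi)$, $B=\operatorname{diag}(I_n,-I_m)$, $\partial_1=\sigma\partial_y$ and $\partial_2=\frac{\mathrm i}{2}\partial_t$. Apply $\partial_x^{i}$ to the $\phi$-relations for $0\le i\le n-1$, and $\partial_x^{j}$ to the $\psi$-relations for $0\le j\le m-1$. Because $\partial_x$ commutes with $\partial_y$ and $\partial_t$, and $a_{k,j},c_{k,j}$ are constants, each row of $\Gamma$ satisfies the corresponding relation with right-multiplication by $A$ or $C$. The sign in front of $\partial_x$ is $+$ on the $\phi$-rows and $-$ on the $\psi$-rows, which is exactly what $B$ records. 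Collecting the rows gives
\begin{align*}
\partial_1\Gamma+B\partial_x\Gamma=\Gamma A,\qquad \partial_2\Gamma+B\partial_x^2\Gamma=\Gamma C .
\end{align*}
These are the relations \eqref{gamma_ds}.

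\textbf{Step 2: apply the reduction.} To invoke Proposition~\ref{involution_reduction}, I trace back to Theorem~\ref{2c_marchenko_lemma}. Its proof uses only that $A_\ell,C_\ell,P_\ell$ are constants; the signed version \eqref{EGR_Q} comes from the identical computation with $P_\ell$ replaced by $-B$. Differentiating $\Gamma^{-1}\Gamma C=C$ kills $C$, so \eqref{EGR_Q} holds for $\mathcal Q=(\partial_x\Gamma)\Gamma^{-1}$ with $A,C\neq 0$. The proof of Proposition~\ref{involution_reduction} uses only \eqref{EGR_Q} and $B^2=1$, so \eqref{ds_f1}--\eqref{ds_f4} follow.

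\textbf{Step 3: extract the entries.} Corollary~\ref{wronski_logarithmic_derivative} depends only on invertibility of $\Gamma$, not on any dispersion relation, so $\mathcal Q$ still has shift rows together with the rows $\boldsymbol\xi$ and $\boldsymbol\eta$. The entry computation in the preceding proof therefore carries over verbatim and yields \eqref{ncds_marchenko} for the fields \eqref{ds_fields}.

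\textbf{Step 4: match the quasi-determinant formulas.} By Theorem~\ref{coefficients}, $\xi_{n+m}=-\left|\begin{smallmatrix}W&\boldsymbol e_{n+m}\\ \bphi^{(n)}&\boxed{0}\end{smallmatrix}\right|$, so $q=-\xi_{n+m}$ is the displayed quasi-Wronskian. Likewise $r=-\eta_n$ equals the quasi-determinant with $\boldsymbol e_n$ and bottom row $\bpsi^{(m)}$.

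The point I expect to need the most care is Step 2. The earlier derivation was stated for the specific choice $A=C=0$, so I must confirm that $A$ and $C$ play no role in passing to \eqref{EGR_Q}. This holds because they are right factors killed by the derivation of $\Gamma^{-1}\Gamma C$, exactly as in the proof of Theorem~\ref{2c_marchenko_lemma}.
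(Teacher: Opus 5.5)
Your proposal is correct and follows the paper's implicit argument: the corollary is stated without a separate proof precisely because the generalized seed relations are the involution-reduced Marchenko conditions \eqref{gamma_ds} with constant $A$ and $C$, which drop out when passing to \eqref{EGR_Q}. From there, Proposition~\ref{involution_reduction}, the entry extraction via Corollary~\ref{wronski_logarithmic_derivative}, and the identification $q=-\xi_{n+m}$, $r=-\eta_n$ from Theorem~\ref{coefficients} go through unchanged, and your Step~2 correctly isolates the only point that needs checking.
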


\section{A direct approach to the non-commutative two-component KP hierarchy}\label{sec:4}

In the preceding section, the non-commutative two-component KP equation was obtained by applying the twisted Marchenko lemma to the Wronski matrix $W_{n,m}(\bphi,\bpsi)$. 
In this section, we adopt the direct method developed in~\cite{li26} to verify the solutions of the non-commutative two-component KP equation directly, and to demonstrate that two-component quasi-Wronskians provide a systematic tool for constructing higher flows.

\subsection{Two-component quasi-Wronskians}
For $\ell=1, 2$, let 
\[
\boldsymbol{t}^{(\ell)}=(t_1^{(\ell)}, t_2^{(\ell)}, \ldots)
\]
be two sequences of independent variables. For $i=1,\ldots,n+m$, let 
\[
\phi_i=\phi_i(x,\boldsymbol{t}^{(1)}), \qquad  
\psi_i=\psi_i(x,\boldsymbol{t}^{(2)})
\] 
be functions taking values in a (not necessarily commutative) unital associative algebra $\mathcal{A}$, satisfying the dispersion relations
\begin{equation}
\partial_{t_p^{(1)}}\phi_j
=\partial_x^p\phi_j,
\quad
\partial_{t_p^{(2)}}\phi_j
=0,\quad 
\partial_{t_p^{(1)}}\psi_j
=0,
\quad
\partial_{t_p^{(2)}}\psi_j
=\partial_x^p\psi_j
\label{dispersion}
\end{equation}
for all $p\in\mathbb{N}_+$. These relations include the higher flows compared to the dispersion relations in \eqref{seed-dispersion}. A typical choice of seed functions is, for instance,
\begin{equation}
\begin{split}
\phi_j &= a_j \exp\left(k_j (t_1^{(1)}+x)+\sum_{p=2}^{\infty} k_j^p t_p^{(1)}\right)+b_j \exp\left(l_j (t_1^{(1)}+x)+\sum_{p=2}^{\infty} l_j^p t_p^{(1)}\right), \\
\psi_j &= c_j \exp\left(\alpha_j (t_1^{(2)}+x)+\sum_{p=2}^{\infty} \alpha_j^p t_p^{(2)}\right)+d_j \exp\left(\beta_j (t_1^{(2)}+x)+\sum_{p=2}^{\infty} \beta_j^p t_p^{(2)}\right),
\end{split} 
\end{equation}
where $a_j,b_j,c_j,d_j\in\mathcal{A}$ and $k_j,l_j,\alpha_j,\beta_j\in\mathcal{A}$.
With this notation, we introduce the following definition for generalized quasi-Wronskians.
\begin{definition} \label{quasi-Wronskians}
For $j=0,\cdots,m-1$ and $i\in\mathbb{N}$, we define the following generalized quasi-Wronskians
\begin{equation*}
Q^{i,j}
=
\begin{vmatrix}
W_{n,m}(\bphi,\bpsi) & e_{n+m-j} \\
\partial^{n+i}\boldsymbol{\phi}& \boxed{0}
\end{vmatrix},
\quad
\widetilde{R}^{i,j}
=
\begin{vmatrix}
W_{n,m}(\bphi,\bpsi) & e_{n+m-j} \\
\partial^{m+i}\boldsymbol{\psi} & \boxed{0}
\end{vmatrix},
\end{equation*}
and for $k=0,\cdots,n-1$ and $i\in\mathbb{N}$, we define
\begin{equation*}
\widetilde{Q}^{i,k}
=
\begin{vmatrix}
W_{n,m}(\bphi,\bpsi) & e_{n-k} \\
\partial^{n+i}\boldsymbol{\phi} & \boxed{0}
\end{vmatrix},
\quad
R^{i,k}
=
\begin{vmatrix}
W_{n,m}(\bphi,\bpsi) & e_{n-k} \\
\partial^{m+i}\boldsymbol{\psi} & \boxed{0}
\end{vmatrix}.
\end{equation*}
\end{definition}

From the dispersion relation of seed functions \eqref{dispersion}, we obtain the following derivative formulas for the two-component quasi-Wronskians.
\begin{proposition} \label{prop1}
If the seed functions $\phi_1, \phi_2, \ldots, \phi_{n+m}$ and $\psi_1, \psi_2, \ldots, \psi_{n+m}$ satisfy the dispersion relations \eqref{dispersion}, then for any $t_p^{(1)}-$flows and $t_p^{(2)}$-flows $(p\in\mathbb{N})$, the quasi-Wronskians $Q^{i,j}$, $\tilde{R}^{i,j}$, $\tilde{Q}^{i,k}$ and $R^{i,k}$ obey the following derivative rules
\begin{align}
\begin{aligned}
&\partial_{t_p^{(1)}} Q^{i, j} = Q^{i + p, j} + \sum_{a=0}^{p-1} \tilde{Q}^{i,a} Q^{p-a-1,j}, \quad
\partial_{t_p^{(2)}} Q^{i, j} = -Q^{i, j+p} + \sum_{a=0}^{p-1} Q^{i,a} \tilde{R}^{p-a-1,j}, \\
&\partial_{t_p^{(1)}} R^{i, k} = -R^{i, k+p} + \sum_{a=0}^{p-1} R^{i,a} \tilde{Q}^{p-a-1,k}, \quad
\partial_{t_p^{(2)}} R^{i, k} = R^{i + p, k} + \sum_{a=0}^{p-1} \tilde{R}^{i,a} R^{p-a-1,k},\\
&\partial_{t_p^{(1)}} \tilde{Q}^{i, k} = \tilde{Q}^{i+p,k} - \tilde{Q}^{i,k+p} + \sum_{a=0}^{p-1} \tilde{Q}^{i,a} \tilde{Q}^{p-a-1,k}, \quad
\partial_{t_p^{(2)}} \tilde{Q}^{i, k} = \sum_{a=0}^{p-1} Q^{i,a} R^{p-a-1,k},\\
&\partial_{t_p^{(1)}} \tilde{R}^{i, j} = \sum_{a=0}^{p-1} R^{i,a} Q^{p-a-1,j}, \quad
\partial_{t_p^{(2)}} \tilde{R}^{i, j} = \tilde{R}^{i+p,j} - \tilde{R}^{i,j+p} + \sum_{a=0}^{p-1} \tilde{R}^{i,a} \tilde{R}^{p-a-1,j}.
\end{aligned}
\end{align}
\end{proposition}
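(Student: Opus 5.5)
The plan is to reduce every quasi-Wronskian to a bilinear expression in $W^{-1}$, where $W=W_{n,m}(\bphi,\bpsi)$, and differentiate with the formula $\partial(W^{-1})=-W^{-1}(\partial W)W^{-1}$. By the definition of quasi-determinants used in Theorem \ref{coefficients}, we have
\[
Q^{i,j}=-(\partial^{n+i}\bphi)\,W^{-1}e_{n+m-j},\qquad \widetilde Q^{i,k}=-(\partial^{n+i}\bphi)\,W^{-1}e_{n-k},
\]
and similarly $\widetilde R^{i,j}=-(\partial^{m+i}\bpsi)W^{-1}e_{n+m-j}$ and $R^{i,k}=-(\partial^{m+i}\bpsi)W^{-1}e_{n-k}$. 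For a generic element $-vW^{-1}b$ with $b$ a constant unit column, any flow $\partial_t$ gives
\[
\partial_t(-vW^{-1}b)=-(\partial_t v)W^{-1}b+vW^{-1}(\partial_t W)W^{-1}b .
\]
So the whole proof amounts to computing $\partial_{t_p^{(\ell)}}W$ in a form adapted to this identity.

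\textbf{Step 1: the flow acting on $W$.} Take $\ell=1$. By \eqref{dispersion}, $\partial_{t_p^{(1)}}$ acts on the $k$-th $\phi$-row of $W$ ($1\le k\le n$) as $\partial^{p}$, turning it into $\partial^{k-1+p}\bphi$, and it annihilates the $\psi$-rows. If $k-1+p\le n-1$, the new row is row $k+p$ of $W$. Otherwise it is an overflow row $\partial^{n+s}\bphi$ with $s=k-1+p-n\in\{0,\ldots,p-1\}$, which sits in position $k=n-(p-1-s)$. Writing $a=p-1-s$, this gives
\[
\partial_{t_p^{(1)}}W=J_pW+\sum_{a=0}^{p-1}e_{n-a}\,\partial^{\,n+p-1-a}\bphi,
\]
where $J_p$ is the nilpotent shift acting only inside the $\phi$-block, with $J_p e_c=e_{c-p}$ if $p<c\le n$ and $J_p e_c=0$ otherwise. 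The case $\ell=2$ is symmetric: the shift $J'_p$ acts only inside the $\psi$-block, and the overflow rows $\partial^{m+p-1-a}\bpsi$ sit at positions $e_{n+m-a}$.

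\textbf{Step 2: substitute and read off the terms.} The term $vW^{-1}J_pWW^{-1}b=vW^{-1}J_pb$ produces the "shift in the second index" contributions:
\begin{itemize}
\item For $b=e_{n-k}$ it gives $vW^{-1}e_{n-k-p}$, hence $-\widetilde Q^{i,k+p}$ and $-R^{i,k+p}$.
\item For $b=e_{n+m-j}$ with $\ell=1$ it vanishes, since the column lies in the $\psi$-block.
\item For $\ell=2$ the roles are reversed, producing $-Q^{i,j+p}$ and $-\widetilde R^{i,j+p}$.
\end{itemize}
Each overflow term gives $(vW^{-1}e_{n-a})(\partial^{n+p-1-a}\bphi\,W^{-1}b)$. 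Because each factor carries its own minus sign, this equals (quasi-Wronskian with column $e_{n-a}$)$\times$(quasi-Wronskian with row $\partial^{n+p-1-a}\bphi$). That is exactly $\widetilde Q^{i,a}Q^{p-a-1,j}$, $\widetilde Q^{i,a}\widetilde Q^{p-a-1,k}$, $R^{i,a}Q^{p-a-1,j}$ or $R^{i,a}\widetilde Q^{p-a-1,k}$, according to $v$ and $b$. For $\ell=2$ it gives $Q^{i,a}\widetilde R^{p-a-1,j}$, $Q^{i,a}R^{p-a-1,k}$, and so on.

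The remaining term $-(\partial_t v)W^{-1}b$ is handled by \eqref{dispersion}. It equals $Q^{i+p,j}$ or $\widetilde Q^{i+p,k}$ when $v=\partial^{n+i}\bphi$ and $\ell=1$, and $R^{i+p,k}$ or $\widetilde R^{i+p,j}$ when $v=\partial^{m+i}\bpsi$ and $\ell=2$. In the mixed cases it vanishes, because $\phi$ does not depend on $\boldsymbol t^{(2)}$ and $\psi$ does not depend on $\boldsymbol t^{(1)}$. Collecting the three kinds of terms reproduces all eight formulas of the proposition.

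\textbf{Main obstacle.} The only delicate point is index bookkeeping: placing the overflow rows at the correct unit vectors, and reading off the signs. One also needs the convention that $\widetilde Q^{i,k+p}$, $R^{i,k+p}$ (when $k+p\ge n$) and $Q^{i,j+p}$, $\widetilde R^{i,j+p}$ (when $j+p\ge m$) are interpreted as $0$, matching $J_pe_c=0$. I would fix this convention explicitly at the start and check Step 1 carefully on the case $p=1$ before doing general $p$.
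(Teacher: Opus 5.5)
Your proposal is correct and is essentially the paper's argument: the paper applies the quasi-Wronskian derivative formula \eqref{dquasiwronskian}, which is your identity $\partial(W^{-1})=-W^{-1}(\partial W)W^{-1}$ with $\partial W$ split row by row, and its step of evaluating the degenerate quasi-determinants (those whose bottom row is already a row of $W$) as $0$ or $-1$ is exactly what your shift matrices $J_p$, $J'_p$ encode. One convention is needed for $p>n$ (resp.\ $p>m$), and the paper also leaves it implicit: only $\min(p,n)$ (resp.\ $\min(p,m)$) rows overflow, so in the sums you must read $\widetilde Q^{i,a}, R^{i,a}$ with $a\ge n$ (resp.\ $Q^{i,a}, \widetilde R^{i,a}$ with $a\ge m$) as $0$.
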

\begin{proof}
We prove the two identities for $Q^{i,j}$, and the rest can be verified similarly. Using the linear relations (\ref{dispersion}) and the quasi-determinant derivative formula (\ref{dquasiwronskian}), we obtain the derivative of $Q^{i,j}$ with respect to $t_p^{(1)}$ 
\begin{align*}
\partial_{t_p^{(1)}} Q^{i, j} 
&= \left| \begin{array}{cc}
W_{n,m}(\bphi,\bpsi) & e_{n+m-j} \\
\partial^{n+i+p}\boldsymbol{\phi} & \boxed{0}
\end{array} \right| 
+ \sum_{a=1}^{n} 
\left| \begin{array}{cc}
W_{n,m}(\bphi,\bpsi) & e_a \\
\partial^{n+i}\boldsymbol{\phi} & \boxed{0}
\end{array} \right|
\left| \begin{array}{cc}
W_{n,m}(\bphi,\bpsi) & e_{n+m-j} \\
\partial^{a-1+p}\boldsymbol{\phi} & \boxed{0}
\end{array} \right| \\
&= Q^{i + p, j} + \sum_{a=0}^{n-1} \tilde{Q}^{i,a} Q^{p-1-a,j}.
\end{align*}
For $a>p-1$, it is easily checked that $Q^{p-1-a,j}=0$, then 
\begin{equation}
\partial_{t_p^{(1)}} Q^{i, j} = Q^{i + p, j} + \sum_{a=0}^{p-1} \tilde{Q}^{i,a} Q^{p-a-1,j}.
\end{equation}
Similarly, the derivative of $Q^{i,j}$ with respect to $t_p^{(2)}$ is given by
\begin{align*}
\partial_{t_p^{(2)}} Q^{i, j} 
&= \sum_{a=1}^{m} 
\left| \begin{array}{cc}
W_{n,m}(\bphi,\bpsi) & e_{n+a} \\
\partial^{n+i}\boldsymbol{\phi}& \boxed{0}
\end{array} \right|
\left| \begin{array}{cc}
W_{n,m}(\bphi,\bpsi) & e_{n+m-j} \\
\partial^{a+1-p}\boldsymbol{\psi} & \boxed{0}
\end{array} \right| \\
&= \sum_{a=0}^{m-1} Q^{i,a} \tilde{R}^{p-1-a,j}.
\end{align*}
For $a>p-1$ and $p-a+j\neq 0$, we have $\tilde{R}^{p-1-a,j}=0$, and for $p-a+j = 0$, we have $\tilde{R}^{p-1-a,j}=-1$. Consequently,
\begin{equation}
\partial_{t_p^{(2)}} Q^{i, j} = -Q^{i, j+p} + \sum_{a=0}^{p-1} Q^{i,a} \tilde{R}^{p-a-1,j}.
\end{equation}  
The derivative formulas for the remaining two-component quasi-Wronskians $\tilde{R}^{i,j}$, $\tilde{Q}^{i,k}$ and $R^{i,k}$ are obtained in analogous manners.
\end{proof}

\subsection{The construction of non-commutative two-component KP from quasi-Wronskians}
In the commutative case, the construction of the two-component KP equation via the $\tau$-function ultimately yields determinant identities and the Plücker relations. In contrast, within the non-commutative framework, in analogy with the one-component case, employing the quasi-Wronskians $Q^{0, 0}$, $\tilde{R}^{0, 0}$, $\tilde{Q}^{0, 0}$ and $R^{0, 0}$ to derive the non-commutative two-component KP equation leads to a complete cancellation of all $Q, \tilde{R}, \tilde{Q}, R$-terms except for $Q^{0,0}$, $\tilde{R}^{0,0}$, $\tilde{Q}^{0,0}$ and $R^{0,0}$. To facilitate this procedure, we begin by computing the derivatives of these quasi-Wronskians. For notational convenience, we set
\begin{equation}
\begin{split}
q=Q^{0,0}, \quad \tilde{r}=\tilde{R}^{0,0}, \quad \tilde{q}=\tilde{Q}^{0,0}, \quad r=R^{0,0}.       
\end{split}
\end{equation} 

From Proposition \ref{prop1}, the first-order derivatives of $q$, $\tilde{r}$, $\tilde{q}$ and $r$ with respect to $t_1^{(1)}$ and $t_1^{(2)}$ are
\begin{align}
\begin{aligned}
&\partial_{t_1^{(1)}} q = Q^{1,0} + \tilde{q} q, &&\partial_{t_1^{(2)}} q = - Q^{0,1} + q \tilde{r}, \\
&\partial_{t_1^{(1)}} \tilde{q} = \tilde{Q}^{1,0} - \tilde{Q}^{0,1} + \tilde{q}^2, &&
\partial_{t_1^{(2)}} \tilde{q} = q r, \label{first_2}\\
&\partial_{t_1^{(1)}} \tilde{r} = r q, &&
\partial_{t_1^{(2)}} \tilde{r} = \tilde{R}^{1,0} - \tilde{R}^{0,1} + \tilde{r}^2, \\
&\partial_{t_1^{(1)}} r = - R^{0,1} + r \tilde{q},&&
\partial_{t_1^{(2)}} r = R^{1,0} + \tilde{r} r. 
\end{aligned}
\end{align}
We  now seek suitable linear combinations of the derivative identities such that all $Q^{i,j}$, $\tilde{Q}^{i,j}$, $R^{i,j}$ and $\tilde{R}^{i,j}$ with $(i,j)\neq (0,0)$ are eliminated. 
Since the first-order identities alone do not yield the required closed
relations, we compute the second-order derivatives, which are listed below
\begin{align}
\begin{aligned}
&\partial_{t_1^{(1)}}^2 q = Q^{2,0} + (2\tilde{Q}^{1,0} - \tilde{Q}^{0,1}) q + \tilde{q} Q^{1,0} + 2 \tilde{q}^2 q, 
&&\partial_{t_1^{(2)}}^2 q = Q^{0,2} + q (\tilde{R}^{1,0} - 2\tilde{R}^{0,1}) - Q^{0,1} \tilde{r} + 2q \tilde{r}^2, \\
&\partial_{t_2^{(1)}} q = Q^{2,0} + \tilde{q} Q^{1,0} + \tilde{Q}^{0,1} q, 
&&\partial_{t_2^{(2)}} q = - Q^{0,2} + q \tilde{R}^{1,0} + Q^{0,1} \tilde{r}, \\
&\partial_{t_1^{(1)}}^2 r = R^{0,2} + r (\tilde{Q}^{1,0} - 2\tilde{Q}^{0,1}) - R^{0,1} \tilde{q} + 2r \tilde{q}^2, 
&&\partial_{t_1^{(2)}}^2 r = R^{2,0} + (2\tilde{R}^{1,0} - \tilde{R}^{0,1})r + \tilde{r} R^{1,0} + 2 \tilde{r}^2 r, \\
&\partial_{t_2^{(1)}} r = - R^{0,2} + r \tilde{Q}^{1,0} + R^{0,1} \tilde{q}, 
&&\partial_{t_2^{(2)}} r = R^{2,0} + \tilde{r} R^{1,0} + \tilde{R}^{0,1} r. 
\end{aligned}
\end{align}
Starting with $q$, we examine its highest-order derivatives, namely $\partial_{t_1^{(1)}}^2 q$,  $\partial_{t_1^{(2)}}^2 q$, $\partial_{t_2^{(1)}} q$ and $\partial_{t_2^{(2)}} q$. Among the terms on the right-hand side, we focus on those containing the monomials $Q^{i,j}$ up to $i+j=2$, in particular $Q^{2,0}$ and $Q^{0,2}$. These two monomials do not appear simultaneously in any single derivative equation. To eliminate them, we therefore consider the following linear combinations
\begin{align}
\begin{aligned}
&\partial_{t_1^{(1)}}^2 q - \partial_{t_2^{(1)}} q = 2 \left(\tilde{Q}^{1,0} - \tilde{Q}^{0,1} + \tilde{q}^2\right) q = 2 (\partial_{t_1^{(1)}} \tilde{q}) q, \\
&\partial_{t_1^{(2)}}^2 q + \partial_{t_2^{(2)}} q = 2 q \left(\tilde{R}^{1,0} - \tilde{R}^{0,1} + \tilde{r}^2\right) = 2 q (\partial_{t_1^{(2)}} \tilde{r}).
\end{aligned}
\end{align}
Similarly, from $\partial_{t_1^{(1)}}^2 r$, $\partial_{t_2^{(1)}} r$, $\partial_{t_1^{(2)}}^2 r$ and $\partial_{t_2^{(2)}} r$, we obtain
\begin{align}
\begin{aligned}
&\partial_{t_1^{(1)}}^2 r + \partial_{t_2^{(1)}} r = 2 r \left(\tilde{Q}^{1,0} - \tilde{Q}^{0,1} + \tilde{q}^2\right) = 2 r (\partial_{t_1^{(1)}} \tilde{q}), \\
&\partial_{t_1^{(2)}}^2 r - \partial_{t_2^{(2)}} r = 2 \left(\tilde{R}^{1,0} - \tilde{R}^{0,1} + \tilde{r}^2\right) r = 2 (\partial_{t_1^{(2)}} \tilde{r}) r.
\end{aligned}
\end{align}
Combining these results yields the closed system of nonlinear partial differential equations \eqref{2cKP}.

A broader class of higher-order equations in the non-commutative two-component KP hierarchy can be derived by the same method. In the following subsection, we present a higher-order non-commutative two-component KP equation obtained via this direct method.

\subsection{A higher-order non-commutative two-component KP equation}

Applying the same elimination procedure to the third-order derivatives listed in
Appendix~\ref{appendix:B}, we obtain the following higher-order closed system of
non-commutative two-component KP equations.

\begin{proposition}
\label{third_order_2cKP_theorem}
The two-component quasi-Wronskians $q,\tilde q,r,\tilde r$ satisfy the following
closed system of third-order non-commutative two-component KP equations
\begin{equation} \label{third_order_2cKP}
\begin{split}
&4\partial_{t_3^{(1)}}q
-\partial_{t_1^{(1)}}^3q
-3\partial_{t_1^{(1)}t_2^{(1)}}^2q
+6(\partial_{t_2^{(1)}}\tilde q)q
+6(\partial_{t_1^{(1)}}\tilde q)\partial_{t_1^{(1)}}q=0, \\
&4\partial_{t_3^{(2)}}q
-\partial_{t_1^{(2)}}^3q
+3\partial_{t_1^{(2)}t_2^{(2)}}^2q
-6q\partial_{t_2^{(2)}}\tilde r
+6(\partial_{t_1^{(2)}}q)\partial_{t_1^{(2)}}\tilde r=0, \\
&4\partial_{t_3^{(1)}}r
-\partial_{t_1^{(1)}}^3r
+3\partial_{t_1^{(1)}t_2^{(1)}}^2r
-6r\partial_{t_2^{(1)}}\tilde q
+6(\partial_{t_1^{(1)}}r)\partial_{t_1^{(1)}}\tilde q=0, \\
&4\partial_{t_3^{(2)}}r
-\partial_{t_1^{(2)}}^3r
-3\partial_{t_1^{(2)}t_2^{(2)}}^2r
+6(\partial_{t_2^{(2)}}\tilde r)r
+6(\partial_{t_1^{(2)}}\tilde r)
\partial_{t_1^{(2)}}r=0.
\end{split}
\end{equation}
\end{proposition}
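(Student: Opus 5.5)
Using the derivative rules of Proposition \ref{prop1}, I will express every term in \eqref{third_order_2cKP} as a polynomial in the generalized quasi-Wronskians $Q^{i,j},\tilde Q^{i,k},R^{i,k},\tilde R^{i,j}$. Then I will check that all terms cancel. By the symmetry $\phi\leftrightarrow\psi$, $n\leftrightarrow m$, $t^{(1)}\leftrightarrow t^{(2)}$, which exchanges $(q,\tilde q)\leftrightarrow(r,\tilde r)$ and reverses the order of products, the fourth equation follows from the first. The second and third equations also correspond to each other. So it suffices to treat the first two equations in detail.

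For the first equation, I compute with $p=1,2,3$ in Proposition \ref{prop1}:
\begin{align*}
\partial_{t_3^{(1)}}q&=Q^{3,0}+\tilde qQ^{2,0}+\tilde Q^{0,1}Q^{1,0}+\tilde Q^{0,2}q,\\
\partial_{t_2^{(1)}}\tilde q&=\tilde Q^{2,0}-\tilde Q^{0,2}+\tilde q\tilde Q^{1,0}+\tilde Q^{0,1}\tilde q.
\end{align*}
I then iterate $\partial_{t_1^{(1)}}$ on the second-order formulas already listed, namely $\partial_{t_1^{(1)}}^2q$ and $\partial_{t_2^{(1)}}q$, to get $\partial^3_{t_1^{(1)}}q$ and $\partial_{t_1^{(1)}}\partial_{t_2^{(1)}}q$. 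This brings in $\partial_{t_1^{(1)}}Q^{2,0}$, $\partial_{t_1^{(1)}}Q^{1,0}$, $\partial_{t_1^{(1)}}\tilde Q^{1,0}$ and $\partial_{t_1^{(1)}}\tilde Q^{0,1}$, each available from the same proposition. The products $(\partial_{t_2^{(1)}}\tilde q)q$ and $(\partial_{t_1^{(1)}}\tilde q)(\partial_{t_1^{(1)}}q)$ come from the first-order formulas \eqref{first_2}. These are presumably the third-order list of Appendix B, which I take as given.

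Next I sort all terms by monomial type and check that the coefficients in $4\partial_{t_3^{(1)}}-\partial^3_{t_1^{(1)}}-3\partial_{t_1^{(1)}}\partial_{t_2^{(1)}}$ plus the nonlinear terms vanish class by class:
\begin{itemize}
\item the linear top terms $Q^{3,0}$, with coefficients $4-1-3=0$;
\item terms of the form $\tilde Q^{\ast}Q^{\ast}$, such as $\tilde Q^{2,0}q$, $\tilde Q^{0,2}q$, $\tilde Q^{1,1}q$, $\tilde Q^{1,0}Q^{1,0}$ and $\tilde Q^{0,1}Q^{1,0}$;
\item cubic and quartic terms in $\tilde q,\tilde Q^{1,0},\tilde Q^{0,1},q,Q^{1,0}$.
\end{itemize}
The combination $-\tilde Q^{i,k+1}$ appearing in $\partial_{t_1^{(1)}}\tilde Q^{i,k}$ generates mixed indices like $\tilde Q^{1,1}$. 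These must cancel between $\partial^3_{t_1^{(1)}}q$ and $3\partial_{t_1^{(1)}}\partial_{t_2^{(1)}}q$.

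For the second equation the same procedure applies with the $t^{(2)}$-flows. Here $Q^{0,j}$ shifts upward in $j$ with a minus sign, $\tilde R$ multiplies $q$ from the right, and the terms $-Q^{0,3}$ cancel via $-4-1+\ldots$ after the signs are accounted for. This sign pattern explains the $+3$ and $-6q\partial\tilde r$ coefficients.

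The main obstacle is the bookkeeping of the non-commutative products in the third derivatives: keeping the left/right order of $\tilde Q$ versus $Q$ factors, and handling terms like $\tilde q\tilde Q^{0,1}q$ versus $\tilde Q^{0,1}\tilde q q$, which do not combine. My first attempt is to organize the check by the total weight $i+j$ and by the number of factors. Within each class there are only a handful of monomials, each of which must cancel separately because the quasi-Wronskians are generic algebra elements. If a residual term remains, I would revisit the boundary conventions of Proposition \ref{prop1}: $Q^{-1,j}$ vanishes or equals $-1$ at the diagonal index, and $\tilde Q^{i,k}$ vanishes once $k\ge n$. These conventions matter precisely when $p=3$ and $n$ or $m$ is small.
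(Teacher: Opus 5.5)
Your proposal is correct and follows essentially the paper's own argument: you expand the same third-order derivatives via Proposition \ref{prop1} (the list in Appendix \ref{appendix:B}) and check cancellation monomial by monomial, and the paper's undetermined-coefficient solve reduces to your direct check at $\alpha=1,\beta=0$ (I checked that all classes cancel; in the second equation $Q^{0,3}$ has coefficient $-4+1+3=0$, not $-4-1+\ldots$). One correction: the relabeling $\phi\leftrightarrow\psi$, $n\leftrightarrow m$, $t^{(1)}\leftrightarrow t^{(2)}$ is just a row permutation of $W_{n,m}(\bphi,\bpsi)$ and carries the derivative rules for $Q,\tilde Q$ onto those for $R,\tilde R$ without reversing products, which is why the fourth equation contains $6(\partial_{t_2^{(2)}}\tilde r)r$ and not $6r\,\partial_{t_2^{(2)}}\tilde r$, as an order-reversing symmetry would give.
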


\begin{proof}
We begin with the $\boldsymbol t^{(1)}$ derivatives of $q$.
Consider the linear combination
\begin{align*}
x_1\partial_{t_3^{(1)}}q+x_2\partial_{t_1^{(1)}}^3q+x_3\partial_{t_1^{(1)}t_2^{(1)}}^2q=\star,
\end{align*}
where $x_1,x_2,x_3$ are undetermined coefficients in $\mathbb{C}$, and $\star$ denotes the resulting (yet to be simplified) expression.
From the derivative formulas \eqref{third_q} in Appendix \ref{appendix:B}, each of the three derivatives
contributes the single terms $Q^{3,0}$ and $\tilde q\,Q^{2,0}$. Collecting them in $\star$ yields
\begin{align*}
(x_1+x_2+x_3)Q^{3,0}+(x_1+x_2+x_3)\tilde q\,Q^{2,0}.
\end{align*}
Hence their cancellation requires
\begin{align*}
x_1+x_2+x_3=0.
\end{align*}
Under this condition, $\star$ reduces to
\begin{align*}
\star={}&\bigl[(3x_2+x_3)\tilde Q^{1,0}+(x_1-2x_2)\tilde Q^{0,1}+(3x_2+x_3)\tilde q^2\bigr]Q^{1,0}\\
&+\bigl[(3x_2+x_3)\tilde Q^{2,0}+(-3x_2+x_3)\tilde Q^{1,1}+(x_1+x_2-x_3)\tilde Q^{0,2}\\
&+6x_2\tilde Q^{1,0}\tilde q+(-3x_2+x_3)\tilde Q^{0,1}\tilde q+(3x_2+x_3)\tilde q\tilde Q^{1,0}\\
&+(-3x_2+x_3)\tilde q\tilde Q^{0,1}+6x_2\tilde q^3\bigr]q.
\end{align*}

We now examine the structure of the residual quasi-Wronskian monomials on the right-hand side.
The monomials multiplying $Q^{1,0}$ are generated naturally by the product $(\partial_{t_1^{(1)}}\tilde q)\partial_{t_1^{(1)}}q$, since
\begin{align*}
\partial_{t_1^{(1)}}q=Q^{1,0}+\tilde q\,q,\qquad
\partial_{t_1^{(1)}}\tilde q=\tilde Q^{1,0}-\tilde Q^{0,1}+\tilde q^2.
\end{align*}
The monomials multiplying $q$ include $\tilde Q^{2,0}q$, $\tilde Q^{0,2}q$, and notably $\tilde Q^{1,1}q$. Therefore, both $(\partial_{t_2^{(1)}}\tilde q)q$ and $(\partial_{t_1^{(1)}}^2\tilde q)q$ may enter the required combination.
We therefore consider the most general combination of products of lower-order derivatives with undetermined coefficients in $\mathbb{C}$
\begin{align*}
z_1(\partial_{t_1^{(1)}}\tilde q)\partial_{t_1^{(1)}}q
+z_2(\partial_{t_1^{(1)}}^2\tilde q)q
+z_3(\partial_{t_2^{(1)}}\tilde q)q,
\end{align*}
where
\begin{align*}
&\partial_{t_2^{(1)}}\tilde q=\tilde Q^{2,0}-\tilde Q^{0,2}+\tilde q\,\tilde Q^{1,0}+\tilde Q^{0,1}\tilde q,\\
&\partial_{t_1^{(1)}}^2\tilde q=\tilde Q^{2,0}-2\tilde Q^{1,1}+\tilde Q^{0,2}
+(2\tilde Q^{1,0}-\tilde Q^{0,1})\tilde q+\tilde q(\tilde Q^{1,0}-2\tilde Q^{0,1})+2\tilde q^3.
\end{align*}
The coefficients of the product terms are summarized in Table~\ref{higher_Q_terms_3}.

\begin{table}[h]
\centering
\renewcommand{\arraystretch}{1.25}
\begin{tabular}{|c|c|c|c|c|}
\hline
 & $\star$ & $(\partial_{t_1^{(1)}}\tilde q)\partial_{t_1^{(1)}}q$ & $(\partial_{t_1^{(1)}}^2\tilde q)q$ & $(\partial_{t_2^{(1)}}\tilde q)q$ \\ \hline
$\tilde Q^{1,0}Q^{1,0}$ ($\tilde q^2Q^{1,0}$) & $3x_2+x_3$ & $1$ & $0$ & $0$ \\ \hline
$\tilde Q^{0,1}Q^{1,0}$ & $x_1-2x_2$ & $-1$ & $0$ & $0$ \\ \hline
$\tilde Q^{2,0}q$ ($\tilde q\,\tilde Q^{1,0}q$) & $3x_2+x_3$ & $0$ & $1$ & $1$ \\ \hline
$\tilde Q^{1,1}q$ ($\tilde q\,\tilde Q^{0,1}q$) & $-3x_2+x_3$ & $0$ & $-2$ & $0$ \\ \hline
$\tilde Q^{0,2}q$ & $x_1+x_2-x_3$ & $0$ & $1$ & $-1$ \\ \hline
$\tilde Q^{1,0}\tilde q\,q$ ($\tilde q^3q$) & $6x_2$ & $1$ & $2$ & $0$ \\ \hline
$\tilde Q^{0,1}\tilde q\,q$ & $-3x_2+x_3$ & $-1$ & $-1$ & $1$ \\ \hline
\end{tabular}
\caption{Coefficients of the product terms in the $\boldsymbol t^{(1)}$ derivatives of $q$.}
\label{higher_Q_terms_3}
\end{table}

Requiring
\begin{align*}
\star+z_1(\partial_{t_1^{(1)}}\tilde q)\partial_{t_1^{(1)}}q
+z_2(\partial_{t_1^{(1)}}^2\tilde q)q
+z_3(\partial_{t_2^{(1)}}\tilde q)q=0
\end{align*}
and comparing the coefficients row by row in Table~\ref{higher_Q_terms_3}, we obtain the following linear system
\begin{align*}
\left\{
\begin{array}{rcrcrcrcrcrcr}
x_1 & + & x_2 & + & x_3 & & & & & & & = & 0\\
    &   & 3x_2 & + & x_3 & + & z_1 & & & & & = & 0\\
    &   &      &   & 2x_3 & + & z_1 & - & 2z_2 & & & = & 0\\
    &   &      &   &      &   & z_1 & - & z_2 & - & z_3 & = & 0
\end{array}
\right.
\end{align*}
Solving this system gives
\begin{align*}
(x_1,x_2,x_3,z_1,z_2,z_3)
=\alpha(4,-1,-3,6,0,6)+\beta(0,1,-1,-2,-2,0).
\end{align*}
The term proportional to $\beta$ is precisely the $t_1^{(1)}$-derivative of the first equation in~\eqref{2cKP},
\begin{align*}
\partial_{t_1^{(1)}}\Bigl(\partial_{t_1^{(1)}}^2q-\partial_{t_2^{(1)}}q-2(\partial_{t_1^{(1)}}\tilde q)q\Bigr)
=\partial_{t_1^{(1)}}^3q-\partial_{t_1^{(1)}t_2^{(1)}}^2q-2(\partial_{t_1^{(1)}}^2\tilde q)q-2(\partial_{t_1^{(1)}}\tilde q)\partial_{t_1^{(1)}}q=0,
\end{align*}
hence it does not produce a genuinely new equation. Removing it by setting $\beta=0$ and fixing the normalization with $\alpha=1$, we obtain the first equation in \eqref{third_order_2cKP}.
Applying the same procedure to the remaining three cases, namely the $\boldsymbol{t}^{(2)}$-derivatives of $q$ and the $\boldsymbol{t}^{(1)}$- and $\boldsymbol{t}^{(2)}$-derivatives of $r$, and removing in each case the derivative of the corresponding second-order equation in \eqref{2cKP}, we obtain the remaining three equations in \eqref{third_order_2cKP}.
\end{proof}

In the commutative limit, the above system reduces to the following bilinear forms
\begin{equation}
\begin{split}
&(4\mathrm{D}_{t_3^{(1)}}-\mathrm{D}_{t_1^{(1)}}^3
-3\mathrm{D}_{t_1^{(1)}}\mathrm{D}_{t_2^{(1)}})
\tau_{n+1,m-1}\cdot\tau_{n,m}=0, \\
&(4\mathrm{D}_{t_3^{(2)}}-\mathrm{D}_{t_1^{(2)}}^3
+3\mathrm{D}_{t_1^{(2)}}\mathrm{D}_{t_2^{(2)}})
\tau_{n+1,m-1}\cdot\tau_{n,m}=0, \\
&(4\mathrm{D}_{t_3^{(1)}}-\mathrm{D}_{t_1^{(1)}}^3
+3\mathrm{D}_{t_1^{(1)}}\mathrm{D}_{t_2^{(1)}})
\tau_{n-1,m+1}\cdot\tau_{n,m}=0, \\
&(4\mathrm{D}_{t_3^{(2)}}-\mathrm{D}_{t_1^{(2)}}^3
-3\mathrm{D}_{t_1^{(2)}}\mathrm{D}_{t_2^{(2)}})
\tau_{n-1,m+1}\cdot\tau_{n,m}=0.
\end{split}
\end{equation}

\subsection{Matrix-valued solutions of the non-commutative two-component KP hierarchy} The preceding construction provides solutions of the non-commutative two-component KP hierarchy in terms of quasi-Wronskians. In this part, we demonstrate how to choose appropriate seed functions so as to obtain matrix-valued solutions. To this end, we take the seed functions $\phi_i$ and $\psi_i$ as $d \times d$ matrix-valued functions of the form
\begin{equation} \label{matrixseed}
\begin{split}
\phi_i = \frac{1-(-1)^i}{2}I_d + r_i P_i e^{\eta_i}, \\
\psi_i = \frac{1+(-1)^i}{2}I_d + v_i P_i e^{\gamma_i},
\end{split} 
\end{equation}
where
\begin{align*} 
\eta_i=k_ix+\sum_{j=1}^{\infty} k_i^j t_j^{(1)} + \eta_i^0, \qquad \gamma_i=q_ix+\sum_{j=1}^{\infty} q_i^j t_j^{(2)} + \gamma_i^0, 
\end{align*}
with $r_i$, $v_i$, $k_i$, $q_i$, $\eta_i^0$, $\gamma_i^0$ being scalars, and $I_d$ denoting the $d\times d$ identity matrix. Moreover, $P_i$ are the rank-$1$ projection matrices given by
\begin{equation}
P_i=\frac{\mu_i\otimes\nu_i}{(\mu_i,\nu_i)}
=\frac{\mu_i\nu_i^{\mathsf T}}{\mu_i^{\mathsf T}\nu_i},
\end{equation}
where $\mu_i,\nu_i\in\mathbb{C}^d$ and
$\mu_i^{\mathsf T}\nu_i\neq 0$. 

Specializing to the case $n=m=1$, we have
\begin{equation}
\begin{split}
q &= \left|\begin{array}{ccc}
\phi_1&\phi_2&0_d\\
\psi_1&\psi_2&I_d\\
\phi_1'&\phi_2'& \boxed{0_d}
\end{array}
\right| 
=-\begin{pmatrix} \phi_1' & \phi_2'\end{pmatrix}\begin{pmatrix} \phi_1&\phi_2 \\ \psi_1&\psi_2\end{pmatrix}^{-1} \begin{pmatrix} 0_d \\ I_d \end{pmatrix} \\
& = - \begin{pmatrix} K_1 & K_2 \end{pmatrix} \begin{pmatrix} 0_d \\ I_d \end{pmatrix} = - K_2,
\end{split}
\end{equation}
where $0_d$ is the $d \times d$ zero matrix, and $K_1$, $K_2$ satisfy
\begin{equation}
\begin{split}
&K_1 \phi_1 + K_2 \psi_1 = \phi_1', \\
&K_1 \phi_2 + K_2 \psi_2= \phi_2'.
\end{split}
\end{equation}
Solving for $K_2$ gives
\begin{align}
\begin{aligned}
q=-K_2 = -\frac{\alpha_2[k_2(1+\alpha_1)I_d-k_1\alpha_1P_1]P_2}{(1+\alpha_1)(1+\beta_2)-\epsilon \alpha_2 \beta_1},
\end{aligned}
\end{align}
where $\alpha_i=r_i e^{\eta_i}$, $\beta_i=v_i e^{\gamma_i}$, and
\begin{equation} \label{rank-one-epsilon}
\epsilon=
\frac{(\nu_1^{\mathsf T}\mu_2)(\nu_2^{\mathsf T}\mu_1)}
{(\mu_1^{\mathsf T}\nu_1)(\mu_2^{\mathsf T}\nu_2)}.
\end{equation}

Analogously, the expression for $r$ is obtained as
\begin{equation}
r = -\frac{\beta_1[q_1(1+\beta_2)I_d-q_2\beta_2P_2]P_1}{(1+\alpha_1)(1+\beta_2)-\epsilon \alpha_2 \beta_1}.
\end{equation}

\section{Reduction techniques in the direct method} \label{sec:5}

In this section, we impose reductions directly on the seed functions and investigate how these induce corresponding changes in the quasi-Wronskian derivation rules. We first reduce the seed functions of the non-commutative two-component KP hierarchy to those of the one-component counterpart. Subsequently, we perform a stepwise reduction on the seed functions associated with the non-commutative two-component KP equation, which yields the non-commutative DS and NLS equations.

\subsection{Reduction to the non-commutative one-component KP hierarchy}
\label{sec:5.1}

Recall that the seed functions of the non-commutative two-component KP hierarchy
satisfy the dispersion relations \eqref{dispersion}.  The reduction to the non-commutative one-component KP hierarchy is achieved by setting $m=0$.  Under this reduction, the $\bpsi$-family vanishes, leaving only the seed functions $\phi_1,\ldots,\phi_n$.  We retain the first family of flows, denoting $t_p=t_p^{(1)}$ and $\partial_{t_p}=\partial_{t_p^{(1)}}$ for $p\geq 1$. The surviving seed functions therefore satisfy
\begin{align*}
\partial_{t_p}\phi_j=\partial_x^p\phi_j,
\qquad
j=1,\ldots,n,\quad p\geq 1,
\end{align*}
while the second family of flows $\partial_{t_p^{(2)}}$ is discarded.

When $m=0$, the two-component Wronski matrix \eqref{2c_wronski} reduces to the ordinary Wronski matrix
\[
W_n(\bphi)=W(\phi_1,\ldots,\phi_n),
\]
and among the generalized quasi-Wronskians introduced in
Definition~\ref{quasi-Wronskians}, only the family
$\tilde Q^{i,k}$ remains, namely
\begin{equation}
\tilde Q^{i,k}
=
\left|
\begin{array}{cc}
W_n(\boldsymbol{\phi}) & e_{n-k}\\
\partial^{n+i}\boldsymbol{\phi} & \boxed{0}
\end{array}
\right|.
\label{single-component-quasi-wronskian}
\end{equation}
These are precisely the one-component quasi-Wronskians.

Recall that, in \cite{li26}, we have the evolution rule
\begin{align*}
\partial_{t_p}\tilde{Q}^{i,k}=\tilde{Q}^{i+p,k}-\tilde{Q}^{i,k+p}+\sum_{a=0}^{p-1}\tilde{Q}^{i,a}\tilde{Q}^{p-a-1,k},\quad p\geq 1,
\end{align*}
which coincide with the $\partial_{t_p^{(1)}}$-derivation rule for $\tilde{Q}^{i,k}$ in Proposition \ref{prop1}. For further details on the resulting non-commutative one-component KP hierarchy, including its higher flows and associated quasi-Wronskian solutions, we refer the reader to \cite{li26}.

\subsection{Reduction to the non-commutative DS equation} 
Recall that the seed functions of the non-commutative two-component KP
equation \eqref{2cKP} satisfying the dispersion relations
\eqref{seed-dispersion} can be taken as
\begin{equation}
\begin{split}
\phi_j &= a_j\exp\bigl(k_j (t_1^{(1)}+x)+k_j^2 t_2^{(1)}\bigr)
+b_j\exp\bigl(l_j (t_1^{(1)}+x)+l_j^2 t_2^{(1)}\bigr), \\
\psi_j &= c_j\exp\bigl(\alpha_j (t_1^{(2)}+x)+\alpha_j^2 t_2^{(2)}\bigr)
+d_j\exp\bigl(\beta_j (t_1^{(2)}+x)+\beta_j^2 t_2^{(2)}\bigr),
\end{split}
\label{kp_seed_exp}
\end{equation}
where $a_j,b_j,c_j,d_j,k_j,l_j,\alpha_j,\beta_j$ are elements of a (not
necessarily commutative) unital associative algebra $\mathcal A$. We now introduce
new variables $y,t$ by
\begin{equation}
\label{reduction}
t_1^{(1)}=-\sigma^{-1}y,\qquad
t_1^{(2)}=\sigma^{-1}y,\qquad
t_2^{(1)}=2\mathrm{i}t,\qquad
t_2^{(2)}=-2\mathrm{i}t,
\end{equation}
where $\sigma=-1$ or $\sigma=\mathrm{i}$. Under this substitution the seed
functions \eqref{kp_seed_exp} become
\begin{equation}
\begin{split}
\phi_j &= a_j\exp\left(k_j(x-\sigma^{-1}y)+2\mathrm{i}k_j^2 t\right)
+b_j\exp\left(l_j(x-\sigma^{-1}y)+2\mathrm{i}l_j^2 t\right), \\
\psi_j &= c_j\exp\left(\alpha_j(x+\sigma^{-1}y)-2\mathrm{i}\alpha_j^2 t\right)
+d_j\exp\left(\beta_j(x+\sigma^{-1}y)-2\mathrm{i}\beta_j^2 t\right),
\end{split}
\label{ds_seed_exp}
\end{equation}
and one checks directly that they satisfy the
dispersion relations \eqref{seed-dispersion-ds} used in
Section~\ref{sec:3.3}.

The dispersion relations \eqref{seed-dispersion-ds} specialize the derivative
rules of Proposition~\ref{prop1} as follows.

\begin{proposition} \label{prop2}
If the seed functions $\phi_1, \phi_2, \ldots, \phi_{n+m}$ and $\psi_1, \psi_2, \ldots, \psi_{n+m}$ satisfy the dispersion relations \eqref{seed-dispersion-ds}, then the quasi-Wronskians $Q^{i,j}$, $\tilde{Q}^{i,k}$, $\tilde R^{i,j}$ and $R^{i,k}$ satisfy the following derivative rules
\begin{equation}
\begin{split}
\partial_xQ^{i,j} &= Q^{i+1,j} - Q^{i,j+1} + Q^{i,0}\tilde{R}^{0,j} + \tilde{Q}^{i,0}Q^{0,j}, \\  
\partial_yQ^{i,j} &= -\sigma^{-1}
(Q^{i+1,j} + Q^{i,j+1} - Q^{i,0}\tilde{R}^{0,j} + \tilde{Q}^{i,0}Q^{0,j}), \\ 
\partial_tQ^{i,j} &= 2\mathrm{i}(Q^{i+2,j} + Q^{i,j+2} - Q^{i,0}\tilde{R}^{1,j} - Q^{i,1}\tilde{R}^{0,j} + \tilde{Q}^{i,0}Q^{1,j} + \tilde{Q}^{i,1}Q^{0,j}), \\
\partial_x\tilde{Q}^{i,k} &= \tilde{Q}^{i+1,k} - \tilde{Q}^{i,k+1} + Q^{i,0} R^{0,k} + \tilde{Q}^{i,0} \tilde{Q}^{0,k}, \\
\partial_y\tilde{Q}^{i,k} &= -\sigma^{-1}(\tilde{Q}^{i+1,k} - \tilde{Q}^{i,k+1} - Q^{i,0} R^{0,k} + \tilde{Q}^{i,0} \tilde{Q}^{0,k}), \\
\partial_x\tilde{R}^{i,j} &= \tilde{R}^{i+1,j} - \tilde{R}^{i,j+1} + R^{i,0} Q^{0,j} + \tilde{R}^{i,0} \tilde{R}^{0,j}, \\
\partial_y\tilde{R}^{i,j} &= \sigma^{-1}(\tilde{R}^{i+1,j} - \tilde{R}^{i,j+1} - R^{i,0} Q^{0,j} + \tilde{R}^{i,0} \tilde{R}^{0,j}), \\
\partial_xR^{i,k} &= R^{i+1,k} - R^{i,k+1} + R^{i,0}\tilde{Q}^{0,k} + \tilde{R}^{i,0}R^{0,k}, \\  
\partial_yR^{i,k} &= \sigma^{-1}
(R^{i+1,k} + R^{i,k+1} - R^{i,0}\tilde{Q}^{0,k} + \tilde{R}^{i,0}R^{0,k}), \\ 
\partial_tR^{i,k} &= -2\mathrm{i}(R^{i+2,k} + R^{i,k+2} - R^{i,0}\tilde{Q}^{1,k} - R^{i,1}\tilde{Q}^{0,k} + \tilde{R}^{i,0}R^{1,k} + \tilde{R}^{i,1}R^{0,k}).
\end{split}
\end{equation}
\end{proposition}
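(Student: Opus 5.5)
The plan is to reuse the quasi-determinant derivative formula (\ref{dquasiwronskian}) already used in the proof of Proposition~\ref{prop1}, now applied to $\partial_x$, $\partial_y$ and $\partial_t$. Since $\phi_j,\psi_j$ depend on $x$ through the seeds themselves, a single derivation $D$ acting on $W_{n,m}(\bphi,\bpsi)$ differentiates every row. Here is how each derivation acts on the rows:
\begin{itemize}
\item under $\partial_x$, each $\phi$-block and each $\psi$-block is shifted up by one order;
\item by \eqref{seed-dispersion-ds}, $\partial_y$ acts as $-\sigma^{-1}\partial_x$ on $\phi$-rows and as $+\sigma^{-1}\partial_x$ on $\psi$-rows;
\item $\partial_t$ acts as $2\mathrm i\partial_x^2$ on $\phi$-rows and as $-2\mathrm i\partial_x^2$ on $\psi$-rows.
\end{itemize}

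For a quasi-Wronskian such as $Q^{i,j}$, the formula gives
\[
DQ^{i,j}=\bigl|\,\cdot\,\bigr|_{\text{bottom row } D\partial^{n+i}\bphi}
-\sum_{\text{rows } s}\bigl|W,\ e_s;\ \partial^{n+i}\bphi\bigr|\,\bigl|W,\ e_{n+m-j};\ \text{row}_s(DW)\bigr|.
\]
In the $\partial_t$ case there is an additional last-column term, which vanishes because $e_{n+m-j}$ is constant.

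The next step is to evaluate the factors $|W,e_{n+m-j};\text{row}|$ when the row is a shifted Wronskian row. If the shifted row still lies inside $W$, that quasi-determinant is $0$ or $-1$; the $-1$ occurs exactly when the row equals $\boldsymbol e_{n+m-j}^{T}W$. This mirrors the argument for $\tilde R^{p-1-a,j}$ in Proposition~\ref{prop1}.

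The cleanest organisation, which I would follow, is to write $\partial_x$ as $\partial_{t_1^{(1)}}+\partial_{t_1^{(2)}}$ and $\partial_y$ as $\sigma^{-1}(-\partial_{t_1^{(1)}}+\partial_{t_1^{(2)}})$ on the Wronski matrix. Likewise, $\partial_t$ corresponds to $2\mathrm i(\partial_{t_2^{(1)}}-\partial_{t_2^{(2)}})$, since the substitution \eqref{reduction} makes these identities hold on the seeds, and hence on $W$ and on all quasi-Wronskians by the chain rule. Once this is in place, every formula follows by taking the corresponding linear combination of the $p=1$ and $p=2$ cases of Proposition~\ref{prop1}.

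For example, for $\partial_x Q^{i,j}$ the combination is
\[
\bigl(Q^{i+1,j}+\tilde Q^{i,0}Q^{0,j}\bigr)+\bigl(-Q^{i,j+1}+Q^{i,0}\tilde R^{0,j}\bigr),
\]
which is exactly the claimed expression. The $\partial_y$ formulas arise from the difference of the same two cases, multiplied by $\sigma^{-1}$, with the signs matching as stated.

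For $\partial_t$, take $2\mathrm i$ times the difference of the $p=2$ rules:
\[
\partial_{t_2^{(1)}}Q^{i,j}=Q^{i+2,j}+\tilde Q^{i,0}Q^{1,j}+\tilde Q^{i,1}Q^{0,j},\qquad
\partial_{t_2^{(2)}}Q^{i,j}=-Q^{i,j+2}+Q^{i,0}\tilde R^{1,j}+Q^{i,1}\tilde R^{0,j}.
\]
This yields the stated $\partial_t Q^{i,j}$. The $R^{i,k}$ case is analogous with the roles of the two flows swapped, which accounts for the overall factor $-2\mathrm i$.

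The main obstacle is justifying the chain-rule identification for general non-commutative seeds satisfying only \eqref{seed-dispersion-ds}, rather than the exponential family. To handle this, I would not pass through $t^{(\ell)}$ at all. Instead I would split each derivation row-wise as $D=D_\phi+D_\psi$, where $D_\phi$ acts only on the $\phi$-rows, and rerun the computation of Proposition~\ref{prop1} verbatim for each piece. That computation only used how the derivation acts row by row, not the existence of the variables $t_p^{(\ell)}$.

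A secondary point is index bookkeeping. The boundary terms $Q^{i,j+1}$ and $\tilde Q^{i,k+1}$ come from the $-1$ evaluations when a shifted $\psi$-row (respectively $\phi$-row) lands on $e_{n+m-j}$ (respectively $e_{n-k}$), and this must be checked for each of the four families.
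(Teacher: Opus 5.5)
Your proposal is correct and is essentially the paper's own argument. The paper gives no separate proof: it obtains these rules by specializing Proposition~\ref{prop1} through the substitution \eqref{reduction}, i.e.\ via $\partial_x=\partial_{t_1^{(1)}}+\partial_{t_1^{(2)}}$, $\partial_y=\sigma^{-1}(\partial_{t_1^{(2)}}-\partial_{t_1^{(1)}})$ and $\partial_t=2\mathrm i(\partial_{t_2^{(1)}}-\partial_{t_2^{(2)}})$. Your row-wise splitting $D=D_\phi+D_\psi$ is a useful refinement: it is legitimate because the quasi-determinant derivative is linear in the derivatives of the entries, and it covers arbitrary seeds satisfying \eqref{seed-dispersion-ds}, not only those obtained from the KP seeds.

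One slip to fix: in your preliminary display the sum over rows should enter with a plus sign, as in \eqref{dquasiwronskian}. That sign is what turns the $-1$ evaluations into the boundary terms $-Q^{i,j+1}$; since your main route reuses Proposition~\ref{prop1}, the slip does not affect the result.
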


In particular, Proposition \ref{prop2} gives
\begin{align} \label{ncds_d}
\begin{aligned}
&\partial_xq = Q^{1,0} - Q^{0,1} + q \tilde{r} + \tilde{q} q,
\qquad \partial_yq = -\sigma^{-1}(Q^{1,0} + Q^{0,1} - q \tilde{r} + \tilde{q} q), \\
&\partial_x\tilde q=\tilde Q^{1,0}-\tilde Q^{0,1}+qr+\tilde q^2,
\qquad \partial_y\tilde q=-\sigma^{-1}(\tilde Q^{1,0}-\tilde Q^{0,1}-qr+\tilde q^2), \\
&\partial_xr = R^{1,0} - R^{0,1} + r \tilde{q} + \tilde{r} r,
\qquad \partial_yr = \sigma^{-1}(R^{1,0} + R^{0,1} - r \tilde{q} + \tilde{r} r), \\
&\partial_x\tilde r=\tilde R^{1,0}-\tilde R^{0,1}+rq+\tilde r^2,
\qquad \partial_y\tilde r=\sigma^{-1}(\tilde R^{1,0}-\tilde R^{0,1}-rq+\tilde r^2), \\
&\partial_tq = 2\mathrm{i}(Q^{2,0} + Q^{0,2} - q \tilde{R}^{1,0} - Q^{0,1}\tilde{r} + \tilde{q} Q^{1,0} + \tilde{Q}^{0,1} q), \\
&\partial_tr = -2\mathrm{i}(R^{2,0} + R^{0,2} - r \tilde{Q}^{1,0} - R^{0,1}\tilde{q} + \tilde{r} R^{1,0} + \tilde{R}^{0,1} r), \\
&\partial_x^2q = Q^{2,0} - 2 Q^{1,1} + Q^{0,2} + (2Q^{1,0}-Q^{0,1})\tilde{r} + q (\tilde{R}^{1,0}-2\tilde{R}^{0,1}) \\
&\quad+ (2\tilde{Q}^{1,0}-\tilde{Q}^{0,1}) q + \tilde{q} (Q^{1,0}-2Q^{0,1}) + 2q \tilde{r}^{2} + 2\tilde{q} q \tilde{r} + 2q r q + 2 \tilde{q}^2 q, \\
&\partial_y^2q = \sigma^{-2}[Q^{2,0} + 2 Q^{1,1} + Q^{0,2} - (2Q^{1,0}+Q^{0,1})\tilde{r} + q (\tilde{R}^{1,0}-2\tilde{R}^{0,1})  \\
&\quad+ (2\tilde{Q}^{1,0}-\tilde{Q}^{0,1}) q + \tilde{q} (Q^{1,0}+2Q^{0,1}) + 2q \tilde{r}^{2} - 2\tilde{q} q \tilde{r} - 2q r q + 2\tilde{q}^2 q], \\
&\partial_x^2r = R^{2,0} - 2 R^{1,1} + R^{0,2} + (2R^{1,0}-R^{0,1})\tilde{q} + r (\tilde{Q}^{1,0}-2\tilde{Q}^{0,1}) \\
&\quad+ (2\tilde{R}^{1,0}-\tilde{R}^{0,1}) r + \tilde{r} (R^{1,0}-2R^{0,1}) + 2r \tilde{q}^{2} + 2\tilde{r} r \tilde{q} + 2r q r + 2 \tilde{r}^2 r, \\
&\partial_y^2r = \sigma^{-2}[R^{2,0} + 2 R^{1,1} + R^{0,2} - (2R^{1,0}+R^{0,1})\tilde{q} + r (\tilde{Q}^{1,0}-2\tilde{Q}^{0,1})  \\
&\quad+ (2\tilde{R}^{1,0}-\tilde{R}^{0,1}) r + \tilde{r} (R^{1,0}+2R^{0,1}) + 2r \tilde{q}^{2} - 2\tilde{r} r \tilde{q} - 2r q r + 2\tilde{r}^2 r].
\end{aligned}
\end{align}

Eliminating the quasi-Wronskian monomials $Q^{i,j}$, $\tilde{Q}^{i,j}$, $\tilde R^{i,j}$ and $R^{i,j}$ with $(i,j)\neq(0,0)$ from
\eqref{ncds_d} by suitable linear combinations yields the reduced system. For instance, the coefficients of the monomials
$Q^{i,j}$ with $i+j=2$ appearing in $\partial_t q$, $\partial_x^2 q$ and
$\partial_y^2 q$ are listed in Table~\ref{Q_terms_5}.
\begin{table}[h]
    \centering
    \renewcommand{\arraystretch}{1.25}
    \begin{tabular}{|c|c|c|c|}
        \hline
        & $\partial_tq$ & $\partial_x^2q$ & $\partial_y^2q$ \\ \hline
        $Q^{2, 0}(Q^{0, 2})$ & $2\mathrm{i}$ & $1$ & $\sigma^{-2}$ \\ \hline
        $Q^{1, 1}$ & $0$ & $-2$ & $2\sigma^{-2}$ \\ \hline
    \end{tabular}
    \caption{Coefficients of the monomials $Q^{i,j}$ with $i+j=2$ in the
    derivatives of $q$.}
    \label{Q_terms_5}
\end{table}

Consequently, the combination
\[
\mathrm{i}\partial_tq+\partial_x^2q+\sigma^2\partial_y^2q
\]
eliminates all monomials $Q^{i,j}$ with $i+j=2$, and the remaining terms give
\begin{equation} \label{mid_step}
\begin{split}
\mathrm{i}\partial_tq+\partial_x^2q+\sigma^2\partial_y^2q &= 4(\tilde{Q}^{1, 0}-\tilde{Q}^{0, 1}+\tilde{q}^2)q + 4q(\tilde{R}^{1, 0}-\tilde{R}^{0, 1}+\tilde{r}^{2}).
\end{split}
\end{equation}
Combining the $x$- and $y$-derivative identities of $\tilde q$ and $\tilde r$ in
\eqref{ncds_d}, we obtain
\begin{align}
\begin{aligned}
&\partial_x\tilde q-\sigma\partial_y\tilde q=2(\tilde Q^{1,0}-\tilde Q^{0,1}+\tilde q^2),
&&\partial_x\tilde q+\sigma\partial_y\tilde q=2qr, \\
&\partial_x\tilde r+\sigma\partial_y\tilde r=2(\tilde R^{1,0}-\tilde R^{0,1}+\tilde r^2),
&&\partial_x\tilde r-\sigma\partial_y\tilde r=2rq.
\end{aligned}
\end{align}
Therefore, \eqref{mid_step} is precisely the first equation of
the following system, while its third and fourth equations follow from the
last two identities above. The second equation is obtained in the
same way from the corresponding identities for $r$. Hence we arrive at the non-commutative DS system
\begin{equation} \label{ncds}
\begin{split}
&\mathrm{i} \partial_tq + (\partial_x^2 + \sigma^2 \partial_y^2)q
- 2(\partial_x-\sigma\partial_y)\tilde q\,q
- 2q(\partial_x+\sigma\partial_y)\tilde r = 0, \\
&\mathrm{i} \partial_tr - (\partial_x^2 + \sigma^2 \partial_y^2)r
+ 2(\partial_x+\sigma\partial_y)\tilde r\,r
+ 2r(\partial_x-\sigma\partial_y)\tilde q = 0, \\
&(\partial_x+\sigma\partial_y)\tilde{q}-2qr=0, \\
&(\partial_x-\sigma\partial_y)\tilde{r}-2rq=0.
\end{split}
\end{equation}
Using the two constraint equations to eliminate $\partial_x\tilde q$ and
$\partial_x\tilde r$, the first two equations of \eqref{ncds} take the form
\eqref{ncds_marchenko} obtained by the involution reduced twisted Marchenko lemma in
Section~\ref{sec:3.3}. Note also that no $t$-derivative formulas for
$\tilde Q^{i,k}$ and $\tilde R^{i,j}$ are needed, since the $t$-derivatives of
the auxiliary fields $\tilde q$ and $\tilde r$ do not occur in \eqref{ncds}.

\begin{remark}
Gilson and Macfarlane~\cite{gilson09} have constructed quasi-Wronskian and quasi-Grammian solutions of the non-commutative DS system via Darboux and binary Darboux transformations, respectively. For the purpose of obtaining matrix-valued dromion solutions, they adopt the quasi-Grammian formulation in the DSI case, which corresponds to $\sigma=-1$. The seed functions are chosen as
\begin{align*}
\phi_j=e^{p_jx+\mathrm{i}p_j^2t+p_jy+\alpha_j^0} I_2,\qquad
\psi_j=e^{q_jx-\mathrm{i}q_j^2t-q_jy+\beta_j^0} I_2.
\end{align*}
where $p_j$ and $q_j$ are constants with positive real parts, and
$\alpha_j^0$ and $\beta_j^0$ are phase constants. Thus, the seed
functions themselves are scalar exponentials multiplied by the identity matrix $I_2$,
while the matrix-valued structure is encoded in the constant
invertible Hermitian matrix $H$ that appears in the quasi-Grammian
construction.
\end{remark}

\subsection{Reduction to the non-commutative NLS equation.}
If $q$, $\tilde{r}$, $\tilde{q}$, and $r$ are independent of the $y$-variable, then their $y$-derivatives vanish identically. Under this reduction, the non-commutative DS equation~\eqref{ncds} consistently reduces to the non-commutative NLS equation~\cite{dimakis01}, which takes the form
\begin{equation} \label{ncnls}
\begin{split}
\mathrm{i}\partial_tq + \partial_x^2q - 8qrq &= 0, \\
\mathrm{i}\partial_tr - \partial_x^2r + 8rqr &= 0. \\
\end{split}
\end{equation}
Therefore, in order to obtain solutions of the non-commutative NLS equation~\eqref{ncnls} as well as the non-commutative DS equation~\eqref{ncds} within the present framework, it is necessary to impose suitable constraints on the seed functions so that the $y$-derivatives of $q$, $\tilde r$, $\tilde q$, and $r$ vanish identically. This observation leads to the following theorem.

\begin{theorem} \label{thm3}
If the seed functions $\phi_1,\ldots,\phi_{n+m}$ and $\psi_1,\ldots,\psi_{n+m}$ satisfy both the dispersion relations \eqref{seed-dispersion-ds}
and the constraints 
\begin{equation}
\partial_x \phi_j=-\sigma\phi_j k_j, \qquad \partial_x \psi_j = \sigma\psi_j k_j,
\end{equation} 
then $\partial_yq= \partial_y\tilde{r}=\partial_y\tilde{q}=\partial_yr=0$.
Here $k_j$ is a right multiplier in a (not necessarily commutative) unital associative algebra. 
\end{theorem}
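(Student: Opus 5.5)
The plan is to show that the constraints force the $y$-flow to act on every row of the bordered matrices by the same right multiplication by a constant diagonal matrix. A right multiplication of this kind cancels out of any quasi-determinant of the form $\left|\begin{smallmatrix} W & \boldsymbol e\\ \boldsymbol v & \boxed{0}\end{smallmatrix}\right| = -\boldsymbol v W^{-1}\boldsymbol e$.

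\emph{Step 1: the $y$-derivatives of the seed functions.} Combine the constraints with the first relations in \eqref{seed-dispersion-ds}. From $\sigma\partial_y\phi_j=-\partial_x\phi_j=\sigma\phi_j k_j$ we get $\partial_y\phi_j=\phi_j k_j$. From $\sigma\partial_y\psi_j=\partial_x\psi_j=\sigma\psi_jk_j$ we get $\partial_y\psi_j=\psi_jk_j$. The key point is that both families carry the \emph{same} right multiplier $k_j$ in the $j$-th column; this is why the signs $\mp\sigma$ in the constraints are matched to those in \eqref{seed-dispersion-ds}. I will assume, as implicit in the statement, that $k_j$ is a constant, i.e.\ annihilated by $\partial_x$ and $\partial_y$. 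Since $\partial_x$ and $\partial_y$ commute, it follows that $\partial_y\partial_x^{s}\phi_j=(\partial_x^s\phi_j)k_j$ and $\partial_y\partial_x^{s}\psi_j=(\partial_x^s\psi_j)k_j$ for all $s\ge 0$.

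\emph{Step 2: matrix form.} Set $K=\operatorname{diag}(k_1,\ldots,k_{n+m})$. Step~1 gives
\[
\partial_y W_{n,m}(\bphi,\bpsi)=W_{n,m}(\bphi,\bpsi)K,\qquad \partial_y(\partial^{n+i}\bphi)=(\partial^{n+i}\bphi)K,\qquad \partial_y(\partial^{m+i}\bpsi)=(\partial^{m+i}\bpsi)K.
\]

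\emph{Step 3: differentiate the quasi-Wronskians.} Take any of $Q^{i,j},\tilde Q^{i,k},R^{i,k},\tilde R^{i,j}$, written as $-\boldsymbol vW^{-1}\boldsymbol e$ with a constant unit vector $\boldsymbol e$. Using $\partial_y(W^{-1})=-W^{-1}(\partial_yW)W^{-1}$, we compute
\[
\partial_y(-\boldsymbol vW^{-1}\boldsymbol e)=-\boldsymbol vKW^{-1}\boldsymbol e+\boldsymbol vW^{-1}(WK)W^{-1}\boldsymbol e=0.
\]
In particular this holds for $(i,j)=(0,0)$, so $\partial_yq=\partial_y\tilde q=\partial_yr=\partial_y\tilde r=0$; in fact every generalized quasi-Wronskian is $y$-independent. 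Equivalently, in the language of Section~\ref{sec:3}, $\partial_yW=WK$ is a Marchenko-type condition, and it gives $\partial_y\mathcal Q=0$ directly.

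\emph{Main obstacle.} The computation itself is immediate once Step~2 is in place. The only delicate point is the noncommutative bookkeeping: $k_j$ must act on the right, so that the $y$-flow becomes right multiplication by $K$, and $k_j$ must commute with the derivations. If $k_j$ acted on the left, or depended on $x$, the factor $K$ would not cancel against $W^{-1}$. As a consistency check I would also verify the compatibility $\partial_t\partial_y\phi_j=\partial_y\partial_t\phi_j$ using $\partial_t\phi_j=2\mathrm i\partial_x^2\phi_j$, which holds because $k_j$ is constant.
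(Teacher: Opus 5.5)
Your proposal is correct and is essentially the paper's proof. Both arguments obtain $\partial_y\phi_j=\phi_jk_j$ and $\partial_y\psi_j=\psi_jk_j$, and rewrite these as $\partial_y W_{n,m}(\bphi,\bpsi)=W_{n,m}(\bphi,\bpsi)K$ and $\partial_y\bphi^{(n)}=\bphi^{(n)}K$ with $K=\operatorname{diag}(k_1,\ldots,k_{n+m})$. Both then observe that this right multiplication cancels in the derivative of $-\boldsymbol v W^{-1}\boldsymbol e$. Your explicit remark that $k_j$ must be annihilated by $\partial_x$ is a worthwhile clarification, since the paper uses this silently when passing to higher $x$-derivatives.
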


\begin{proof}
By applying the derivative formula for quasi-determinants, we obtain
\begin{equation} \label{y-derivative}
\begin{split}
\partial_yq = - (\partial_y \boldsymbol{\phi}^{(n)}) W_{n,m}(\bphi,\bpsi)^{-1} e_{n+m} + \boldsymbol{\phi}^{(n)} W_{n,m}(\bphi,\bpsi)^{-1} (\partial_y W_{n,m}(\bphi,\bpsi)) W_{n,m}(\bphi,\bpsi)^{-1} e_{n+m}.
\end{split}
\end{equation}
Since $\partial_y \phi_j=-\sigma^{-1}\partial_x\phi_j = \phi_j k_j$ and $\partial_y \psi_j=\sigma^{-1}\partial_x\psi_j=\psi_j k_j$, it follows that
\[
\partial_y \boldsymbol{\phi}^{(n)} = \boldsymbol{\phi}^{(n)} \operatorname{diag}(k_1,k_2,\cdots,k_{n+m}), \qquad \partial_y W_{n,m}(\bphi,\bpsi) = W_{n,m}(\bphi,\bpsi) \operatorname{diag}(k_1,k_2,\cdots,k_{n+m}).    
\]
Substituting these expressions into \eqref{y-derivative} immediately yields $\partial_yq = 0$. The same argument applied to $\tilde r$, $\tilde q$, and $r$ gives $\partial_y\tilde r=\partial_y\tilde q=\partial_yr=0$.
\end{proof}

We now turn to the construction of a matrix-valued solution to \eqref{ncnls} for the case $n=m=1$.
Following the matrix-valued seed functions introduced for the non-commutative two-component KP hierarchy, we consider
\begin{equation}\label{matrix-nls-seeds}
\begin{aligned}
\phi_j&=\left(\frac{1-(-1)^j}{2}I_d+r_jP_j\right)e^{\eta_j},\\
\psi_j&=\left(\frac{1+(-1)^j}{2}I_d+v_jP_j\right)e^{\gamma_j},
\end{aligned}
\end{equation}
where $\eta_j=-\sigma k_jx+2\mathrm i \sigma^2k_j^2t+\eta_j^0$, $\gamma_j=\sigma k_jx-2\mathrm i\sigma^2k_j^2t+\gamma_j^0$, $k_j$ is a scalar, and $r_j,v_j,P_j$ are the same as in \eqref{matrixseed}.

Expanding the quasi-determinant expression yields
\begin{equation}
    \begin{split}
    q &= \sigma\begin{pmatrix} \phi_1k_1& \phi_2k_2 \end{pmatrix}  \begin{pmatrix} \phi_1&\phi_2 \\\psi_1&\psi_2\end{pmatrix}^{-1} \begin{pmatrix} 0_d \\ I_d \end{pmatrix} = \sigma \begin{pmatrix} S_1 & S_2 \end{pmatrix} \begin{pmatrix} 0_d \\ I_d \end{pmatrix} = \sigma S_2,
    \end{split}
\end{equation}
where $S_1$ and $S_2$ satisfy
\begin{equation}
    \begin{split}
        &S_1\phi_1+S_2\psi_1=\phi_1k_1, \\
        &S_1\phi_2+S_2\psi_2=\phi_2k_2.
    \end{split}
\end{equation}
Solving for $S_2$, we obtain
\begin{equation}
q=\sigma S_2=\frac{\sigma (k_2-k_1)r_2(1+r_1)P_2}{(1+r_1)(1+v_2)e^{\gamma_2-\eta_2}- \epsilon r_2v_1e^{\gamma_1-\eta_1}},
\end{equation}
where $\epsilon$ is given by \eqref{rank-one-epsilon}. Similarly, the expression for $r$ is found to be
\begin{equation}
r=\frac{\sigma (k_2-k_1)v_1(1+v_2)P_1}{(1+r_1)(1+v_2)e^{\eta_1-\gamma_1}- \epsilon r_2v_1e^{\eta_2-\gamma_2}} .
\end{equation}

\section{Concluding remarks}
Non-commutative KP theory, together with its discrete analogy, have been widely studied in recent years via different methods \cite{sakakibara04,dimakis07,gilson07,he11,li18,ma10,dimakis04,dimakis05,kupershmidt00}. Doliwa's recent work further supports this direction. Non-commutative Hermite-Pad\'e approximation
leads to quasi-determinantal solutions of the non-commutative Hirota (discrete KP) system and, through suitable reductions, to non-commutative discrete-time Toda equations \cite{doliwa22}. Related studies connect multiple interpolation and multiple orthogonal polynomials with multidimensional Toda systems and determinantal identities \cite{doliwa22_,doliwa24}.

In this work, we used a novel perspective by using the Marchenko lemma. From the linear problem satisfied by moment matrices, we constructed non-commutative integrable systems. A direct approach and reduction technique are also presented as a tool to generate non-commutative integrable systems from quasi-determinants. The non-triviality of the non-commutative two-component KP equation is demonstrated because the non-commutative KP and DS equations are the special reductions. Some connections with non-commutative Grassmannian and quasi-symmetric functions will be discussed in the future works.

\section*{Acknowledgements}
This work is partially funded by the National Natural Science Foundation of China (Grant No.~12671303).

\section*{Data Availability Statement}
There is no data needed in this research.

\section*{Interest Statement}
There is no conflict of interest.

\appendix
\section{Introduction to Quasi-determinants}
\noindent In the appendix, we briefly recall several fundamental properties of quasi-determinants. A detailed exposition can be found in \cite{gelfand91,gelfand92,gelfand05}.
\begin{definition}
Let $A = (a_{ij})$ be an $n \times n$ matrix with entries in a non-commutative unital ring $\mathcal{A}$. The $(i, j)^{\mathrm{th}}$ quasi-determinant of $A$, denoted by $\left| A \right|_{ij}$, is defined by
\[
\left| A \right|_{ij} = 
\begin{vmatrix}
A^{ij} & c_j^i \\
r_i^j & \boxed{a_{ij}}
\end{vmatrix} 
= a_{ij} - r_i^j (A^{ij})^{-1} c_j^i.
\]
Here, $A^{ij}$ denotes the $(n-1) \times (n-1)$ minor obtained from $A$ by removing its $i^{\mathrm{th}}$ row and $j^{\mathrm{th}}$ column, which is assumed to be invertible, $r_i^j$ is the row vector obtained from the $i^{\mathrm{th}}$ row of $A$ by deleting the $j^{\mathrm{th}}$ entry, and $c_j^i$ is the column vector obtained from the $j^{\mathrm{th}}$ column of $A$ by deleting the $i^{\mathrm{th}}$ entry. 
\end{definition}
If the entries in $A$ commute, one has
\[
\left| A \right|_{ij} = (-1)^{i+j} \frac{\det(A)}{\det(A^{ij})}.
\]
Thus quasi-determinants may be regarded as non-commutative analogues of the ratio of a determinant to one of its $(n-1)\times(n-1)$ minors.

We now recall from \cite{gelfand91,gelfand92,gelfand05} the behavior of quasi-determinants under elementary row and column operations.
\begin{proposition} \label{permutation}
A simultaneous permutation of the rows and columns of a quasi-determinant does not change its value.
\end{proposition}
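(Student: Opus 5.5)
The plan is to work directly from the definition of the quasi-determinant recalled above, $|A|_{ij}=a_{ij}-r_i^j(A^{ij})^{-1}c_j^i$. Suppose two permutation matrices $\Pi,\Sigma$ reorder the rows and columns simultaneously. Then the $(i,j)$ entry of $A$ moves to position $(\pi(i),\sigma(j))$, and I will show that $|\Pi A\Sigma|_{\pi(i),\sigma(j)}=|A|_{ij}$. Each of the three ingredients of the definition is tracked separately under the permutation.

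\textbf{Step 1: the pivot entry.} The pivot $a_{ij}$ is carried to position $(\pi(i),\sigma(j))$. It is therefore literally the same element of $\mathcal A$.

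\textbf{Step 2: the minor, row and column.} Deleting row $\pi(i)$ and column $\sigma(j)$ from $\Pi A\Sigma$ produces $\Pi' A^{ij}\Sigma'$. Here $\Pi'$ and $\Sigma'$ are the permutation matrices induced on the remaining $n-1$ indices. By the same bookkeeping, the deleted row becomes $r_i^j\Sigma'$ and the deleted column becomes $\Pi' c_j^i$.

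\textbf{Step 3: the correction term.} Permutation matrices have entries $0$ and $1$, which lie in the centre of $\mathcal A$. They are invertible with $\Pi'^{-1}=\Pi'^{\mathsf T}$, so invertibility of $A^{ij}$ transfers and
\[
(\Pi'A^{ij}\Sigma')^{-1}=\Sigma'^{-1}(A^{ij})^{-1}\Pi'^{-1}.
\]
Substituting into the definition gives
\[
r_i^j\Sigma'\,\Sigma'^{-1}(A^{ij})^{-1}\Pi'^{-1}\,\Pi'c_j^i=r_i^j(A^{ij})^{-1}c_j^i.
\]
This is the same correction term as for $A$, so the quasi-determinant is unchanged.

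The only care point is Step 2: one must check that deleting a row and column commutes with permuting, up to the induced permutations $\Pi'$ and $\Sigma'$. This holds because a permutation maps the complement of $\{i\}$ bijectively onto the complement of $\{\pi(i)\}$, preserving the pairing of entries. No commutativity of $\mathcal A$ is used anywhere, since scalar permutation matrices are central.
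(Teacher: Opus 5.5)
Your proof is correct and complete. After deleting row $\pi(i)$ and column $\sigma(j)$, the minor, the deleted row and the deleted column of $\Pi A\Sigma$ are $\Pi'A^{ij}\Sigma'$, $r_i^j\Sigma'$ and $\Pi'c_j^i$. The permutation factors then cancel in $r_i^j\Sigma'(\Pi'A^{ij}\Sigma')^{-1}\Pi'c_j^i$.

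The paper gives no proof of its own; it recalls the property from \cite{gelfand91,gelfand92,gelfand05}. In those references, matrices are indexed by finite index sets and $|A|_{pq}$ is defined by deleting the indices $p$ and $q$, so invariance under reordering is essentially built into the definition. Another common shortcut is the formula $|A|_{ij}=\bigl((A^{-1})_{ji}\bigr)^{-1}$ combined with $(\Pi A\Sigma)^{-1}=\Sigma^{-1}A^{-1}\Pi^{-1}$. That shortcut, however, needs both $A$ and the entry $(A^{-1})_{ji}$ to be invertible.

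Your computation works directly from the ordered definition in the appendix and assumes only that $A^{ij}$ is invertible, which is exactly that definition's hypothesis. It is therefore the more self-contained and slightly more general argument. It also covers both readings of the statement: the same permutation on rows and columns ($\Sigma=\Pi^{\mathsf T}$) and independent row and column permutations.

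One wording fix: permutation matrices are not central in the matrix ring over $\mathcal A$, and your argument does not need them to be. The cancellation uses only associativity of matrix multiplication together with $\Sigma'\Sigma'^{-1}=\Pi'^{-1}\Pi'=I$, and these hold over any unital ring.
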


\begin{proposition} \label{multiplying}
Suppose that the matrix $C$ is obtained from $A$ by multiplying the $i^{\mathrm{th}}$ row on the left by $\lambda$, then
\[
|C|_{kj} =
\begin{cases}
\lambda |A|_{ij}, & \text{for } k = i, \\[4pt]
|A|_{kj}, & \text{for } k \ne i.
\end{cases}
\]
Similarly, suppose that the matrix $C$ is obtained from $A$ by multiplying the $j^{\mathrm{th}}$ column on the right by $\mu$, then
\[
|C|_{il} =
\begin{cases}
|A|_{ij} \mu, & \text{for } l = j, \\[4pt]
|A|_{il}, & \text{for } l \ne j.
\end{cases}
\]
Finally, suppose that the matrix $D$ is obtained from $A$ by adding to one of its rows \textup{(}columns\textup{)} a multiple of the $k^{\mathrm{th}}$ row \textup{(}column\textup{)}, then
\[
|D|_{ij} = |A|_{ij} \quad \text{for every $i \ne k$ $\textup{(}j \ne k\textup{)}$}.
\]
\end{proposition}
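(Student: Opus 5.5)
The plan is to argue directly from the definition $|A|_{ij}=a_{ij}-r_i^j(A^{ij})^{-1}c_j^i$. Each row or column operation will be written as left or right multiplication by an elementary matrix, and we track how the four blocks $a_{ij}$, $r_i^j$, $A^{ij}$, $c_j^i$ transform. By Proposition~\ref{permutation}, we may move the distinguished entry to the $(n,n)$ position. This gives the block form
\[
A=\begin{pmatrix}A^{ij}&c\\ r&a\end{pmatrix},
\]
and all statements become statements about these blocks. Throughout, $\lambda$ and $\mu$ in the scaling statements are taken to be invertible. This is implicit in the statement: otherwise the minors of $C$ need not remain invertible, and the quasi-determinants would not be defined.

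\emph{Scaling a row.} Suppose first $k=i$, so the scaled row is the distinguished one. Then $a\mapsto\lambda a$ and $r\mapsto\lambda r$, while $A^{ij}$ and $c$ are unchanged. Hence $|C|_{ij}=\lambda a-\lambda r(A^{ij})^{-1}c=\lambda|A|_{ij}$.

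Now suppose $k\neq i$. The scaled row lies in the upper block, so $C^{kj}=\Lambda A^{kj}$ and $c\mapsto\Lambda c$, where $\Lambda$ is the diagonal matrix with $\lambda$ in the appropriate slot. The row $r$ and the entry $a$ are untouched. Therefore
\[
r(\Lambda A^{kj})^{-1}\Lambda c=r(A^{kj})^{-1}c,
\]
and $|C|_{kj}=|A|_{kj}$. The column statement is the mirror image, using right multiplication by a diagonal matrix $M$ and the cancellation $MM^{-1}$ on the other side.

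\emph{Adding a multiple of row $k$ to another row $l$, with $i\neq k$.} There are two cases.

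If $l\neq i$, both rows $k$ and $l$ lie in the upper block. The operation is left multiplication of $(A^{ij}\ c)$ by an invertible elementary matrix $E$. As above, $r(EA^{ij})^{-1}Ec=r(A^{ij})^{-1}c$, so the quasi-determinant is unchanged.

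If $l=i$, only the bottom row changes: $r\mapsto r+\nu\rho$ and $a\mapsto a+\nu\gamma$. Here $(\rho\ \gamma)$ is row $k$, which sits inside the upper block, so $\rho=e^{\mathsf T}A^{ij}$ and $\gamma=e^{\mathsf T}c$ for the appropriate unit row vector $e$. Then
\[
(a+\nu\gamma)-(r+\nu e^{\mathsf T}A^{ij})(A^{ij})^{-1}c=a-r(A^{ij})^{-1}c+\nu\gamma-\nu e^{\mathsf T}c=|A|_{ij}.
\]

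The column version follows by the transposed argument.

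I expect no real obstacle; the only delicate point is bookkeeping. One must check, in each case, which block the modified row or column lands in after the reordering. One must also note the invertibility of $\lambda$, $\mu$ and of $E$, which guarantees that the minor of the new matrix is invertible, so that both sides are defined.
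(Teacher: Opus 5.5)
The paper does not prove this proposition. It quotes it, together with Proposition~\ref{permutation}, from \cite{gelfand91,gelfand92,gelfand05}, so your argument is a different route in the sense that it is a self-contained proof where the paper has only a citation. Your verification, working directly from $|A|_{ij}=a_{ij}-r_i^j(A^{ij})^{-1}c_j^i$, is correct. All four cases check out:
\begin{itemize}
\item the distinguished row is scaled;
\item a row inside the minor is scaled, handled by cancelling $\Lambda$;
\item the target row of the addition lies inside the minor, handled by $r(EA^{ij})^{-1}Ec=r(A^{ij})^{-1}c$ with $E$ unipotent;
\item the target row is the distinguished row, handled by $\rho=e^{\mathsf T}A^{ij}$, $\gamma=e^{\mathsf T}c$.
\end{itemize}

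A shorter standard alternative uses $|A|_{ij}=\bigl((A^{-1})_{ji}\bigr)^{-1}$. Under $A\mapsto EA$ one has $A^{-1}\mapsto A^{-1}E^{-1}$, and this changes only one column of $A^{-1}$ in a controlled way. That route, however, requires the whole matrix $A$ and the entry $(A^{-1})_{ji}$ to be invertible. Your block computation needs only the relevant minor to be invertible, which is exactly the hypothesis under which the paper's definition makes sense.

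Three small refinements:
\begin{itemize}
\item Invertibility of $\lambda$ and $\mu$ is used only in the cases $k\neq i$ and $l\neq j$. Your computation for the distinguished row or column never uses it, since that minor is unchanged.
\item The ``multiple'' must be a left multiple for row additions and a right multiple for column additions. You assumed this implicitly, and it is what makes the operation left or right multiplication by an elementary matrix.
\item ``The transposed argument'' should be read as the mirror computation with left and right interchanged. Literal transposition does not respect products over a non-commutative ring, but the mirror computation goes through verbatim.
\end{itemize}
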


The derivative of a quasi-determinant follows two distinct formulas, depending on the structure of the underlying matrix. For a general quasi-determinant, the derivative takes the form
\begin{equation} \nonumber
    \begin{vmatrix}
        A & B \\
        C & \boxed{d}
    \end{vmatrix}^{\prime} = d' - C'A^{-1}B + CA^{-1}A'A^{-1}B - CA^{-1}B',
\end{equation}
where $A$ is an $n \times n$ matrix, $C$ is a row vector and $B$ a column vector. 
This formula will serve as the basis for deriving more explicit expressions in the special cases considered below.
\begin{proposition} \label{derivative}
If $A$ is a Grammian-like matrix with derivative
\begin{equation} \nonumber
    A^{\prime} = \sum_{i=1}^k E_i F_i
\end{equation}
for some integer $k$, where $E_i$ \textup{(}$F_i$\textup{)} are column \textup{(}row\textup{)} vectors of appropriate length, then the derivative becomes
\begin{equation} \label{dquasigrammian}
    \begin{vmatrix}
        A & B \\
        C & \boxed{d}
    \end{vmatrix}^{\prime} = 
    \begin{vmatrix} 
        A & B \\ 
        C^{\prime} & \boxed{d^{\prime}} 
    \end{vmatrix} + 
    \begin{vmatrix} 
        A & B^{\prime} \\ 
        C & \boxed{0} 
    \end{vmatrix} + \sum_{i=1}^{k} 
    \begin{vmatrix} 
        A & E_i \\ 
        C & \boxed{0} 
    \end{vmatrix} 
    \begin{vmatrix} 
        A & B \\ 
        F_i & \boxed{0} 
    \end{vmatrix}.
\end{equation}
If the matrix $A$ does not have Grammian-like structure, then the derivative becomes
\begin{equation} \label{dquasiwronskian}
    \begin{split}
    \begin{vmatrix}
        A & B \\
        C & \boxed{d}
    \end{vmatrix}^{\prime} 
    & = 
    \begin{vmatrix}
        A & B \\
        C^{\prime} & \boxed{d^{\prime}}
    \end{vmatrix} 
        + \sum_{k=1}^{n} \begin{vmatrix}
        A & e_k \\
        C & \boxed{0}
    \end{vmatrix} 
    \begin{vmatrix}
        A & B \\
        (A^k)^{\prime} & \boxed{(B^k)^{\prime}}
    \end{vmatrix} \\
    & =
    \begin{vmatrix}
        A & B^{\prime} \\
        C & \boxed{d^{\prime}}
    \end{vmatrix} 
        + \sum_{k=1}^{n} 
    \begin{vmatrix}
        A & (A_k)^{\prime} \\
        C & \boxed{(C_k)^{\prime}}
    \end{vmatrix} 
    \begin{vmatrix}
        A & B \\
        e_k^\top & \boxed{0}
    \end{vmatrix},
\end{split}
\end{equation}
where $A^k$ \textup{(}$A_k$\textup{)} denotes the $k^\mathrm{th}$ row \textup{(}column\textup{)} of the matrix $A$.    
\end{proposition}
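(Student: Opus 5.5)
The plan is to reduce everything to the general derivative formula stated just before the proposition and then regroup terms. That formula follows from the Leibniz rule applied to $d-CA^{-1}B$ together with $(A^{-1})'=-A^{-1}A'A^{-1}$, which comes from differentiating $AA^{-1}=I$. Throughout, I will use the elementary identities
\[
\begin{vmatrix} A & X\\ Y & \boxed{z}\end{vmatrix}=z-YA^{-1}X,\qquad
\begin{vmatrix} A & X\\ Y & \boxed{0}\end{vmatrix}=-YA^{-1}X .
\]
They let each product of quasi-determinants with a boxed zero be rewritten as a product of two factors of the form $-(\cdot)A^{-1}(\cdot)$, so the two minus signs cancel.

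For the Grammian-like case, I substitute $A'=\sum_{i=1}^k E_iF_i$ into the term $CA^{-1}A'A^{-1}B$. This gives $\sum_i (CA^{-1}E_i)(F_iA^{-1}B)$. Each summand equals the product of the quasi-determinants with columns $E_i$ and rows $F_i$ as in \eqref{dquasigrammian}. The remaining terms regroup directly:
\[
d'-C'A^{-1}B=\begin{vmatrix}A&B\\C'&\boxed{d'}\end{vmatrix},\qquad
-CA^{-1}B'=\begin{vmatrix}A&B'\\C&\boxed{0}\end{vmatrix}.
\]
This proves \eqref{dquasigrammian}.

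For the Wronskian-like case, there is no low-rank structure, so I decompose by rows. I write $A'=\sum_{k=1}^n e_k (A^k)'$, where $e_k$ is the unit column, and likewise $B'=\sum_k e_k (B^k)'$, where $B^k$ is the $k$-th entry of $B$. Then
\[
CA^{-1}A'A^{-1}B-CA^{-1}B'=\sum_{k=1}^n (CA^{-1}e_k)\bigl((A^k)'A^{-1}B-(B^k)'\bigr).
\]
Here $CA^{-1}e_k=-\begin{vmatrix}A&e_k\\C&\boxed{0}\end{vmatrix}$ and $(A^k)'A^{-1}B-(B^k)'=-\begin{vmatrix}A&B\\(A^k)'&\boxed{(B^k)'}\end{vmatrix}$. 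Combined with $d'-C'A^{-1}B$ as the first term, this gives the first line of \eqref{dquasiwronskian}.

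The second line is the transposed argument. I decompose $A'=\sum_k (A_k)'e_k^\top$ and $C'=\sum_k (C_k)'e_k^\top$. Then $-C'A^{-1}B+CA^{-1}A'A^{-1}B=\sum_k\bigl(CA^{-1}(A_k)'-(C_k)'\bigr)(e_k^\top A^{-1}B)$, and the terms $d'-CA^{-1}B'$ form $\begin{vmatrix}A&B'\\C&\boxed{d'}\end{vmatrix}$.

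The only real obstacle is sign and index bookkeeping: each quasi-determinant with a boxed zero carries a minus sign, and one must check that exactly two such signs appear per summand. One must also note that $B^k$ and $C_k$ are scalar entries of $B$ and $C$, not vectors, so that the bordered quasi-determinants in \eqref{dquasiwronskian} are well-formed.
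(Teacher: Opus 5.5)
Your proposal is correct and takes the route the paper intends. The paper gives no separate proof: it records the general formula $d'-C'A^{-1}B+CA^{-1}A'A^{-1}B-CA^{-1}B'$, cites the quasi-determinant literature, and treats both special cases as regroupings of it. Your argument carries this out. You substitute $A'=\sum_i E_iF_i$, or the row and column decompositions $A'=\sum_k e_k(A^k)'$ and $A'=\sum_k (A_k)'e_k^{\top}$, and the two minus signs in each product cancel; your sign and index bookkeeping checks out.
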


\section{Derivative formulas} \label{appendix:B}
This appendix lists the derivative formulas for $q$, $\tilde q$, $\tilde r$, and $r$ required for the derivation of the third-order non-commutative two-component KP equation.

1st order
\begin{equation}
\begin{split}
\partial_{t_1^{(1)}}q
={}&Q^{1,0}+\tilde qq,\\
\partial_{t_1^{(2)}}q
={}&-Q^{0,1}+q\tilde r,\\
\partial_{t_1^{(1)}}r
={}&-R^{0,1}+r\tilde q,\\
\partial_{t_1^{(2)}}r
={}&R^{1,0}+\tilde rr, \\
\partial_{t_1^{(1)}}\tilde q
={}&\tilde Q^{1,0}-\tilde Q^{0,1}
+\tilde q^2, \\
\partial_{t_1^{(2)}}\tilde r
={}&\tilde R^{1,0}-\tilde R^{0,1}
+\tilde r^2.
\end{split}
\end{equation}

2nd order
\begin{equation}
\begin{split}
\partial_{t_2^{(1)}}\tilde q
={}&\tilde Q^{2,0}-\tilde Q^{0,2}
+\tilde q\tilde Q^{1,0}
+\tilde Q^{0,1}\tilde q, \\
\partial_{t_1^{(1)}}^2\tilde q
={}&\tilde Q^{2,0}-2\tilde Q^{1,1}+\tilde Q^{0,2}
+(2\tilde Q^{1,0}
-\tilde Q^{0,1})\tilde q
+\tilde q(\tilde Q^{1,0}-2\tilde Q^{0,1})
+2\tilde q^3, \\
\partial_{t_2^{(2)}}\tilde r
={}&\tilde R^{2,0}-\tilde R^{0,2}
+\tilde r\tilde R^{1,0}
+\tilde R^{0,1}\tilde r, \\
\partial_{t_1^{(2)}}^2\tilde r
={}&\tilde R^{2,0}-2\tilde R^{1,1}+\tilde R^{0,2}
+(2\tilde R^{1,0}
-\tilde R^{0,1})\tilde r
+\tilde r(\tilde R^{1,0}
-2\tilde R^{0,1})
+2\tilde r^3.
\end{split}
\end{equation}

3rd order
\begin{equation} \label{third_q}
\begin{split}
\partial_{t_3^{(1)}}q
={}&Q^{3,0}+\tilde qQ^{2,0}+\tilde Q^{0,1}Q^{1,0}+\tilde Q^{0,2}q, \\
\partial_{t_1^{(1)}t_2^{(1)}}^2q
={}&Q^{3,0}
+\tilde qQ^{2,0}
+(\tilde Q^{1,0}
+\tilde q^2)Q^{1,0}\\
&+(
\tilde Q^{2,0}
+\tilde Q^{1,1}
-\tilde Q^{0,2}
+\tilde q\tilde Q^{1,0}
+\tilde Q^{0,1}\tilde q
+\tilde q\tilde Q^{0,1}
)q, \\
\partial_{t_1^{(1)}}^3q
={}&Q^{3,0}+\tilde qQ^{2,0}
+(3\tilde Q^{1,0}-2\tilde Q^{0,1}
+3\tilde q^2)Q^{1,0}\\
&+(3\tilde Q^{2,0}-3\tilde Q^{1,1}+\tilde Q^{0,2}
+6\tilde Q^{1,0}\tilde q
-3\tilde Q^{0,1}\tilde q\\
&+3\tilde q\tilde Q^{1,0}
-3\tilde q\tilde Q^{0,1}
+6\tilde q^3
)q,
\end{split}
\end{equation}
\begin{equation}
\begin{split}
\partial_{t_3^{(2)}}q
={}&-Q^{0,3}+Q^{0,2}\tilde r+Q^{0,1}\tilde R^{1,0}+q\tilde R^{2,0}, \\
\partial_{t_1^{(2)}t_2^{(2)}}^2q
={}&Q^{0,3}-Q^{0,2}\tilde r+Q^{0,1}(-\tilde R^{0,1}+\tilde r^2)\\
&+q(
\tilde R^{2,0}
-\tilde R^{1,1}
-\tilde R^{0,2}
+\tilde R^{1,0}\tilde r
+\tilde r\tilde R^{1,0}
+\tilde R^{0,1}\tilde r
), \\
\partial_{t_1^{(2)}}^3q
={}&-Q^{0,3}
+Q^{0,2}\tilde r+Q^{0,1}(
-2\tilde R^{1,0}
+3\tilde R^{0,1}
-3\tilde r^2)\\
&+q(
\tilde R^{2,0}
-3\tilde R^{1,1}
+3\tilde R^{0,2}
+3\tilde R^{1,0}\tilde r
+3\tilde r\tilde R^{1,0} \\
&-3\tilde R^{0,1}\tilde r
-6\tilde r\tilde R^{0,1}
+6\tilde r^3),
\end{split}
\end{equation}
\begin{equation}
\begin{split}
\partial_{t_3^{(1)}}r
={}&-R^{0,3}+R^{0,2}\tilde q+R^{0,1}\tilde Q^{1,0}+r\tilde Q^{2,0}, \\
\partial_{t_1^{(1)}t_2^{(1)}}^2r
={}&R^{0,3}-R^{0,2}\tilde q+R^{0,1}(
-\tilde Q^{0,1}
+\tilde q^2
)\\
&+r(
\tilde Q^{2,0}
-\tilde Q^{1,1}
-\tilde Q^{0,2}
+\tilde Q^{1,0}\tilde q
+\tilde q\tilde Q^{1,0}
+\tilde Q^{0,1}\tilde q
), \\
\partial_{t_1^{(1)}}^3r
={}&-R^{0,3}+R^{0,2}\tilde q+R^{0,1}(
-2\tilde Q^{1,0}
+3\tilde Q^{0,1}
-3\tilde q^2)\\
&+r(
\tilde Q^{2,0}
-3\tilde Q^{1,1}
+3\tilde Q^{0,2}
+3\tilde Q^{1,0}\tilde q
+3\tilde q\tilde Q^{1,0}\\
&-3\tilde Q^{0,1}\tilde q
-6\tilde q\tilde Q^{0,1}
+6\tilde q^3),
\end{split}
\end{equation}
\begin{equation}
\begin{split}
\partial_{t_3^{(2)}}r
={}&R^{3,0}
+\tilde rR^{2,0}
+\tilde R^{0,1}R^{1,0}
+\tilde R^{0,2}r, \\
\partial_{t_1^{(2)}t_2^{(2)}}^2r
={}&R^{3,0}+\tilde rR^{2,0}+(
\tilde R^{1,0}
+\tilde r^2
)R^{1,0}\\
&+(
\tilde R^{2,0}
+\tilde R^{1,1}
-\tilde R^{0,2}
+\tilde r\tilde R^{1,0}
+\tilde R^{0,1}\tilde r
+\tilde r\tilde R^{0,1})r, \\
\partial_{t_1^{(2)}}^3r
={}&R^{3,0}
+\tilde rR^{2,0}+(
3\tilde R^{1,0}
-2\tilde R^{0,1}
+3\tilde r^2)R^{1,0}\\
&+(
3\tilde R^{2,0}
-3\tilde R^{1,1}
+\tilde R^{0,2}
+6\tilde R^{1,0}\tilde r
-3\tilde R^{0,1}\tilde r\\
&+3\tilde r\tilde R^{1,0}
-3\tilde r\tilde R^{0,1}
+6\tilde r^3
)r.
\end{split}
\end{equation}


\begin{thebibliography}{sl}   

\bibitem{dimakis01}
A. Dimakis and F. Müller-Hoissen. 
Noncommutative NLS equation. 
\emph{Czech. J. Phys.}, 51 (2001), 1285-1290.

\bibitem{dimakis04}
A. Dimakis and F. Müller-Hoissen. 
Explorations of the extended ncKP hierarchy.
\emph{J. Phys. A: Math. Gen.}, 37 (2004), 10899-10930.

\bibitem{dimakis05}
A. Dimakis and F. Müller-Hoissen. 
An algebraic scheme associated with the non-commutative KP hierarchy and some of its extensions. 
\emph{J. Phys. A: Math. Gen.}, 38 (2005), 5453-5505.

\bibitem{dimakis07}
A. Dimakis and F. Müller-Hoissen. 
With a Cole-Hopf transformation to solutions of the noncommutative KP hierarchy in terms of Wronski matrices. 
\emph{J. Phys. A: Math. Theor.}, 40 (2007), F321-F329.

\bibitem{doliwa22}
A. Doliwa. 
Non-commutative Hermite-Padé approximation and integrability. 
\emph{Lett. Math. Phys.}, 112 (2022), 68.

\bibitem{doliwa22_}
A. Doliwa. 
Non-autonomous multidimensional Toda system and multiple interpolation problem. \emph{J. Phys. A: Math. Theor.}, 55 (2022), 505202.

\bibitem{doliwa24}
A. Doliwa. 
Determinantal approach to multiple orthogonal polynomials and the corresponding
integrable equations. 
\emph{Stud. Appl. Math.}, 153 (2024), e12726.

\bibitem{etingof98}
P. Etingof, I. Gelfand and V. Retakh. 
Nonabelian integrable systems, quasideterminants, and Marchenko lemma. 
\emph{Math. Res. Lett.}, 5 (1998), 1-12.

\bibitem{freeman90}
N. Freeman, C. Gilson and J. Nimmo. 
Two-component KP hierarchy and the classical Boussinesq equation. 
\emph{J. Phys. A: Math. Gen.}, 23 (1990), 4793-4804.

\bibitem{gelfand91} 
I. Gelfand and S. Retakh. 
Determinants of matrices over noncommutative rings. 
\emph{Funct. Anal. Its Appl.}, 25 (1991), 91-102. 

\bibitem{gelfand92}
I. Gelfand and S. Retakh. 
A theory of noncommutative determinants and characteristic functions of graphs. 
\emph{Funct. Anal. Its Appl.}, 26 (1992), 231-246.

\bibitem{gelfand05} 
I. Gelfand, S. Gelfand, V. Retakh and R. Wilson. 
Quasideterminants. 
\emph{Adv. Math.}, 193 (2005), 56-141.

\bibitem{gilson07}
C. Gilson and J. Nimmo. 
On a direct approach to quasideterminant solutions of a noncommutative KP equation. 
\emph{J. Phys. A: Math. Theor.}, 40 (2007), 3839-3850.

\bibitem{gilson09}
C. Gilson and S. Macfarlane. 
Dromion solutions of noncommutative Davey-Stewartson equations. 
\emph{J. Phys. A: Math. Theor.}, 42 (2009), 235202.

\bibitem{hamanaka032}
M. Hamanaka and K. Toda.
Noncommutative Burgers equation.
\emph{J. Phys. A: Math. Gen.}, 36 (2003), 11981-11998.

\bibitem{hamanaka03}
M. Hamanaka and K. Toda.
Towards noncommutative integrable systems.
\emph{Phys. Lett. A}, 316 (2003), 77-83.

\bibitem{he11}
J. He, J. Tu, X. Li and L. Wang. 
Explicit flow equations and recursion operator of the ncKP hierarchy. \emph{Nonlinearity}, 24 (2011), 2875-2890.

\bibitem{jimbo83}
M. Jimbo and T. Miwa. 
Solitons and infinite dimensional Lie algebras. 
\emph{Publ. RIMS Kyoto Univ.}, 19 (1983), 943-1001.

\bibitem{kupershmidt00}
B. Kupershmidt. 
\emph{KP or mKP: noncommutative mathematics of Lagrangian, Hamiltonian, and
integrable systems}. (American Mathematical Society, 2000).

\bibitem{li18}
C. Li. 
Symmetries and Reductions on the noncommutative Kadomtsev-Petviashvili and Gelfand-Dickey hierarchies. 
\emph{J. Math. Phys.}, 59 (2018), 123503.

\bibitem{li26}
S. Li, S. Shen, G. Yu and J. Zhang. 
A direct method in noncommutative integrable systems. 
\emph{J. Math. Phys.}, 67 (2026), 023505.

\bibitem{ma10}
W. Ma. 
An extended noncommutative KP hierarchy. 
\emph{J. Math. Phys.}, 51 (2010), 073505.

\bibitem{marchenko88}
V. Marchenko.
\emph{Nonlinear Equations and Operator Algebras}. Mathematics and Its Applications (Soviet Series), Vol. 17, D. Reidel Publishing Company, Dordrecht, 1988.

\bibitem{sakakibara04}
M. Sakakibara.  
Factorization methods for noncommutative KP and Toda hierarchy. 
\emph{J. Phys. A: Math. Gen.}, 37 (2004), L599-L604.

\end{thebibliography}
\end{document}